\documentclass[11pt,a4paper,reqno]{amsart}
\usepackage[margin=1in]{geometry}
\usepackage[citestyle=numeric,style=numeric,maxbibnames=99,natbib=true,sortcites]{biblatex}
\usepackage[english]{babel}
\usepackage{amsmath}
\usepackage{amsthm}
\usepackage{enumerate}
\usepackage[hidelinks]{hyperref}
\usepackage{pgfplots}
\usepackage{booktabs}
\usepackage{xcolor}
\usepackage{algorithm}
\usepackage[noend]{algpseudocode}
\usepackage{graphicx}
\usepackage{enumitem}
\usepackage[capitalise]{cleveref}
\usepackage{etoolbox}
\usepackage{multirow}
\usepackage[textsize=tiny]{todonotes}
\usepackage{thmtools, thm-restate}

\makeatletter
\patchcmd{\NAT@test}{\else \NAT@nm}{\else \NAT@hyper@{\NAT@nm}}{}{}
\makeatother

\usepackage{tikz}
\usepackage{tikz-network}
\usetikzlibrary{patterns}
\usetikzlibrary{patterns.meta}
\usetikzlibrary{decorations.pathreplacing,calligraphy}
\usetikzlibrary{decorations.pathmorphing}
\usetikzlibrary{backgrounds}
\usetikzlibrary{arrows}
\usetikzlibrary{plotmarks}
\usetikzlibrary{fadings}

\colorlet{lightblue}{cyan!40!white}
\colorlet{lightgreen}{lime!60!white}
\colorlet{lightorange}{orange!50!white}
\colorlet{lightviolet}{violet!40!white}
\colorlet{dark gray}{black}

\algrenewcommand\algorithmicrequire{\textbf{Input:}}
\algrenewcommand\algorithmicensure{\textbf{Output:}}
\algnewcommand\algorithmicforeach{\textbf{for each}}
\algdef{S}[FOR]{ForEach}[1]{\algorithmicforeach\ #1\ \algorithmicdo}

\makeatletter
\tikzset{
        hatch distance/.store in=\hatchdistance,
        hatch distance=5pt,
        hatch thickness/.store in=\hatchthickness,
        hatch thickness=5pt
        }
\pgfdeclarepatternformonly[\hatchdistance,\hatchthickness]{north east hatch}
    {\pgfqpoint{-1pt}{-1pt}}
    {\pgfqpoint{\hatchdistance}{\hatchdistance}}
    {\pgfpoint{\hatchdistance-1pt}{\hatchdistance-1pt}}%
    {
        \pgfsetcolor{\tikz@pattern@color}
        \pgfsetlinewidth{\hatchthickness}
        \pgfpathmoveto{\pgfqpoint{0pt}{0pt}}
        \pgfpathlineto{\pgfqpoint{\hatchdistance}{\hatchdistance}}
        \pgfusepath{stroke}
    }
\pgfdeclarepatternformonly[\hatchdistance,\hatchthickness]{north west hatch}
    {\pgfqpoint{-0pt}{-0pt}}
    {\pgfqpoint{\hatchdistance}{\hatchdistance}}
    {\pgfpoint{\hatchdistance}{\hatchdistance}}%
    {
        \pgfsetcolor{\tikz@pattern@color}
        \pgfsetlinewidth{\hatchthickness}
        \pgfpathmoveto{\pgfqpoint{0pt}{\hatchdistance}}
        \pgfpathlineto{\pgfqpoint{\hatchdistance}{0pt}}
        \pgfpathmoveto{\pgfqpoint{.2pt}{-.2pt}}
        \pgfpathlineto{\pgfqpoint{-.2pt}{.2pt}}
        \pgfpathmoveto{\pgfqpoint{\hatchdistance+0.2pt}{\hatchdistance-0.2pt}}
        \pgfpathlineto{\pgfqpoint{\hatchdistance-0.2pt}{\hatchdistance+0.2pt}}
        \pgfusepath{stroke}
    }
\pgfdeclarepatternformonly{south east lines}{\pgfqpoint{-0pt}{-0pt}}{\pgfqpoint{3pt}{3pt}}{\pgfqpoint{3pt}{3pt}}{
    \pgfsetlinewidth{0.4pt}
    \pgfpathmoveto{\pgfqpoint{0pt}{3pt}}
    \pgfpathlineto{\pgfqpoint{3pt}{0pt}}
    \pgfpathmoveto{\pgfqpoint{.2pt}{-.2pt}}
    \pgfpathlineto{\pgfqpoint{-.2pt}{.2pt}}
    \pgfpathmoveto{\pgfqpoint{3.2pt}{2.8pt}}
    \pgfpathlineto{\pgfqpoint{2.8pt}{3.2pt}}
    \pgfusepath{stroke}}
\makeatother

\theoremstyle{plain}
\newtheorem{theorem}{Theorem}[section]
\newtheorem{lemma}[theorem]{Lemma}
\newtheorem{proposition}[theorem]{Proposition}
\newtheorem{corollary}[theorem]{Corollary}
\newtheorem*{theorem*}{Claim}

\theoremstyle{definition}

\newcommand{\R}{\mathbb{R}}
\newcommand{\N}{\mathbb{N}}

\newcommand{\calI}{\mathcal{I}}

\newcommand{\Exp}{\mathbb{E}}
\newcommand{\E}[1]{\Exp[ #1 ]}

\renewcommand{\P}[1]{\mathbb{P}[ #1 ]}

\newcommand{\OPT}{\normalfont\textsc{Opt}}

\newcommand{\scM}{\mathrm{M}}
\newcommand{\nalgspgreedy}{\normalfont\textsc{Greedy}}
\newcommand{\algspgreedy}{\normalfont\textsc{G}}
\newcommand{\nalgmixspgreedy}{\textsc{SingleGreedy}}
\newcommand{\algmixspgreedy}{\textsc{G}_{\normalfont\text{sgl}}}
\newcommand{\nalgrandspgreedy}{\textsc{RandomizedGreedy}}
\newcommand{\algrandspgreedy}{\textsc{G}_{\normalfont\text{rnd}}}
\newcommand{\nalgud}{\normalfont\textsc{FitTwo}}
\newcommand{\algud}{\normalfont\textsc{F}}
\newcommand{\nalgrspgreedy}{\normalfont\textsc{RestrictedGreedy}}
\newcommand{\algrspgreedy}{\textsc{G}_{\normalfont\text{res}}}
\newcommand{\nalgmixud}{\textsc{RandomizedFit}}
\newcommand{\algmixud}{\normalfont\textsc{F}_{\normalfont\text{rnd}}}
\newcommand{\nalglarge}{\normalfont\textsc{LargeFit}}
\newcommand{\alglarge}{\normalfont\textsc{F}_{\normalfont\text{lar}}}
\newcommand{\lfrac}[2]{#1/#2}

\tolerance 1414
\hbadness 1414
\emergencystretch 1.5em
\hfuzz 0.3pt
\widowpenalty=10000
\vfuzz \hfuzz
\raggedbottom

\bibliography{WINE2024/selfishknapsack}

\begin{document}

\title{Packing a Knapsack with Items Owned by Strategic Agents}
\author[J.~Cembrano]{Javier Cembrano}
\address[J.~Cembrano]{Department of Algorithms and Complexity, Max Planck Institut für Informatik, Germany}
\email{\href{mailto:jcembran@mpi-inf.mpg.de}{jcembran@mpi-inf.mpg.de}}
\urladdr{\url{https://sites.google.com/view/javier-cembrano/}}

\author[M.~Klimm]{Max Klimm}
\address[M.~Klimm]{Institute for Mathematics, Technische Universität Berlin, Germany}
\email{\href{mailto:klimm@math.tu-berlin.de}{klimm@math.tu-berlin.de}}
\urladdr{\url{https://www3.math.tu-berlin.de/disco/team/klimm/}}

\author[M.~Knaack]{Martin Knaack}
\address[M.~Knaack]{Institute for Mathematics, Technische Universität Berlin, Germany}
\email{\href{mailto:knaack@math.tu-berlin.de}{knaack@math.tu-berlin.de}}
\urladdr{\url{https://www3.math.tu-berlin.de/disco/team/knaack/}}

\begin{abstract}
This paper considers a scenario within the field of mechanism design without money where a mechanism designer is interested in selecting items with maximum total value under a knapsack constraint. The items, however, are controlled by strategic agents who aim to maximize the total value of their items in the knapsack. This is a natural setting, e.g., when agencies select projects for funding, companies select products for sale in their shops, or hospitals schedule MRI scans for the day.
A mechanism governing the packing of the knapsack is strategyproof if no agent can benefit from hiding items controlled by them to the mechanism. We are interested in mechanisms that are strategyproof and $\alpha$-approximate in the sense that they always approximate the maximum value of the knapsack by a factor of $\alpha \in [0,1]$.
First, we give a deterministic mechanism that is $\smash{\frac{1}{3}}$-approximate.
For the special case where all items have unit density, we design a $\smash{\frac{1}{\phi}}$-approximate mechanism where $1/\phi \approx 0.618$ is the inverse of the golden ratio. This result is tight as we show that no deterministic strategyproof mechanism with a better approximation exists.
We further give randomized mechanisms with approximation guarantees of $1/2$ for the general case and $2/3$ for the case of unit densities. For both cases, no strategyproof mechanism can achieve an approximation guarantee better than $1/(5\phi -7)\approx 0.917$.
\end{abstract}

\maketitle

\section{Introduction}

Decision-making in competitive environments often involves the selection of multiple items under a knapsack constraint: 
Funding agencies with a fixed budget select a subset of projects submitted to them for funding, shops stock their shelves with products that they can procure, and hospitals schedule MRI scans for a day.
These situations can be modeled straightforwardly as a knapsack problem with a set of items $E$ where for each item $i \in E$ we are given a size $s_i > 0$ and a value $v_i > 0$. The size of an item describes its consumption of a certain resource such as the total money needed to fund the project, the shelf meters needed to show the product, or the (estimated) length of the MRI scan. The value of an item is an estimator for the benefit to the system designer for selecting the item, e.g., the estimated societal benefit for funding a project, the estimated profits from selling the product, or the estimated benefit from conducting the MRI scan. Assuming that the system designer holds a total capacity of $C > 0$ at their disposal, a natural objective is to maximize the total value of items that can be selected under the knapsack constraint, i.e., to find
\begin{align*}
\OPT(E) = \max \bigl\{ v(E') \;\big\vert\; s(E') \leq C, E' \subseteq E\bigr\},
\end{align*}
where, for a subset $E' \subseteq E$, we set $v(E') = \sum_{i \in E'} v_i$ and $s(E') = \sum_{i \in E'} s_i$.

In many applications, including the ones above, the items in question are not controlled by the system designer but rather by different agents following their own interests.
For illustration, consider the following situation where researchers submit project proposals to a funding agency and the funding agency selects a subset of the submitted proposals for funding.\footnote{A similar example is given in \citet{FeigenbaumJ17}.} Suppose that the funding agency has a total budget of 1. Researcher~$A$ has two proposals requiring a budget of $1/2$ and $2/3$ respectively. Researcher~$B$ has only a single proposal requiring a budget of $1/2$. Assume that for each project the value is equal to the budget spent. If both researchers submit all their proposals to the agency, both projects with a budget of $1/2$ can be accepted, generating a total value of $1$. However, if Researcher~$A$ only submits the proposal with a budget of $2/3$, the optimal way for the agency to spend its budget is to only accept this proposal. Thus, by not submitting one of their proposals, Researcher~$A$ managed to decrease the value of the agency from $1$ to $2/3$ while increasing the total budget of accepted own projects from $1/2$ to $2/3$.
Such strategic behavior of Researcher~$A$ is clearly undesirable from the point of view of the research agency.

As a remedy, \citet{chen2011mechanism} as well as \citet{FeigenbaumJ17} study strategyproof mechanisms for the knapsack problem.
In this setting, which belongs to the growing field of mechanism design without money, the items are owned by a finite set of agents $[n] = \{1,\dots,n\}$ such that each item~$i \in E$ is owned by a single agent~$a(i) \in [n]$, and a single agent may own multiple items $E_a = \{i \in E \mid a(i) = a\}$. A deterministic mechanism is a function~$\scM$ that receives the knapsack problem, i.e., the set of items $E$ with their values and sizes, the information about the ownership of items, and the capacity $C$ as input, and returns a knapsack solution $\scM(E) \subseteq E$ that is feasible, i.e., $s(\scM(E)) \leq C$.
A mechanism is called \emph{strategyproof} if no agent can increase the total value of their items in the knapsack solution by hiding own items to the mechanism, i.e.,
\begin{align*}
v(E_a \cap \scM(E)) \geq v(E_a \cap \scM(E')) \text{ for all $a \in [n]$ and all } E' \subseteq E \text{ with } E \setminus E' \subseteq E_a.
\end{align*}
From a general mechanism design point of view, we want to stress that we only consider the case where agents report a \mbox{(sub-)}set $E_a' \subseteq E_a$ of their items to the mechanism. We do not, however, allow agents to misreport the values or sizes of their items, or to report ``fake'' items that they do not own.
This is a suitable assumption since in the applications above reporting fake items or wrong item sizes can be detected easily and, thus, be punished by the system designer. Also in the applications above, the value of an item can be observed (or is determined) by the system designer. On the other hand, the system designer cannot observe whether the full set of items available to an agent has been reported, thus, making their report vulnerable to the agents' strategic behavior.

Since the capacity of the knapsack as well as the sizes and values of the items are fixed, it is useful to consider the mechanism only as a function of the items that it receives as input.
A mechanism is called $\alpha$-approximate for some $\alpha \in [0,1]$ if the total value of the items packed by the mechanism is always at least an $\alpha$-fraction of the optimal value. 
\citet{FeigenbaumJ17} showed that no deterministic strategyproof mechanism can be $\alpha$-approximate for $\alpha > 1/\phi \approx 0.618$ and that no randomized strategyproof mechanism can be $\alpha$-approximate for $\alpha > 1/(5\phi-7) \approx 0.917$, where $\phi = (1+ \sqrt{5})/2$ is the golden ratio.\footnote{A \emph{randomized} mechanism is a lottery over deterministic mechanism and it is strategyproof (in expectation) if no agent can increase the expected value of their items in the solution by hiding items to the mechanism.}
These results hold even for instances with two agents. On the positive side, \citet{chen2011mechanism} gave a strategyproof randomized mechanism with a constant approximation guarantee. Their analysis yields an approximation factor of $\alpha = 1/576 \approx 0.002$; see \citet{choi2021}.  However, no deterministic mechanism with a constant approximation factor was known.

\subsection{Our Contribution and Techniques}

We give deterministic and randomized mechanisms with improved approximation guarantees. For a comparison of our and previous results, see~\Cref{tab:results}.

We start by designing the first \emph{deterministic} strategyproof mechanism with a constant approximation guarantee for the strategic knapsack problem. Specifically, our mechanism provides an approximation factor of $1/3$ for an arbitrary number of agents; see~\Cref{theo:general-det-UB}. For the special case of two agents, the approximation guarantee can be improved to $1/2$; see~\Cref{cor:det-LB-1n}.

Secondly, we significantly reduce the gap between lower and upper bounds on the approximation guarantee of strategyproof randomized mechanisms. We devise a randomized mechanism with an approximation factor of $1/2$ (see~\Cref{theo:general-rand-UB}) improving on the previous guarantee of $1/576$. In addition, our randomized mechanism is universally strategyproof (i.e., it is a lottery over deterministic strategyproof mechanisms) while the one by~\citeauthor{chen2011mechanism} is only strategyproof in expectation. Universally strategyproof mechanisms are more desirable since they keep their strategyproofness even when the agents are informed about the realization of the internal 
randomness of the mechanism.

Both our deterministic mechanism and our randomized mechanism are based on a strategyproof version of the \textit{greedy} algorithm. Our basic greedy mechanism allocates each agent a quota, corresponding to the total size of their items in the fractional greedy solution,\footnote{That is, the optimal fractional solution constructed by packing items in decreasing order of their density, breaking ties in favor of larger values, until the capacity is reached or there are no more items.} and packs the individual optimal solution of each agent constrained to their quota. In order to achieve our deterministic guarantee, we check whether an agent owns $2/3$ of the value of the fractional greedy solution, pack their optimum if this is the case, and run our greedy mechanism otherwise. As for the randomized mechanism, we run our greedy mechanism with probability $1/2$ and pick the most valuable item overall with the remaining probability.
It is worth noting that the randomized mechanism by~\citeauthor{chen2011mechanism} is based on a fractional optimal solution as well, but to achieve strategyproofness in expectation they sample a strict subset of the agents and consider a smaller capacity than the original one, ultimately resulting in a worse approximation factor.

As our third contribution, we study the setting in which all items have unit density, i.e., equal value as size. This assumption arises naturally in certain settings; e.g., for a funding agency it is typical that many projects obtain excellent reviews so that it may be difficult to assess the exact value of a project. Thus, it is reasonable to assume that monetary units are equally well spent on all projects. Similarly, ethical concerns may prevent a hospital from associating values with a specific MRI scan making it the primary objective to use the MRI equipment at capacity, a goal that is ultimately achieved by a knapsack solution with unit density items.

We show that the upper bounds of \citeauthor{FeigenbaumJ17} on the approximation guarantees that both deterministic and randomized strategyproof mechanisms can achieve carry over to this setting, by constructing respective instances in \Cref{theo:unit-density-det-LB} and~\Cref{theo:unit-density-rand-LB}. However, we are able to exploit the fact that the values and sizes are the same to construct strategyproof mechanisms with improved guarantees compared to the general setting. Specifically, we give a deterministic strategyproof mechanism with the best-possible approximation guarantee of $1/\phi \approx 0.618$; see \Cref{theo:unit-density-det-UB}. Our mechanism builds upon the fact that, when sorting items in decreasing order of their value, our greedy mechanism performs well as long as the two most valuable items fit together. If an item is valuable enough to provide a $\frac{1}{\phi}$-approximation on its own, we simply pack it. Otherwise, we look for the most valuable item that fits together with each of the less valuable items of other agents; call it $i^*$. We then run our greedy mechanism on the set of items consisting of $i^*$ and all the items that individually fit together with $i^*$. The approximation guarantee is proven via a careful analysis of the relation between items considered in this restricted set and other items. As for strategyproofness, we introduce a notion of dominance between sets, from the perspective of a specific agent, that is based on lexicographic comparisons between the values of the items in the restricted sets before and upon the deletion of an item of this agent.

Finally, we improve the approximation guarantee of our mechanism in the unit density setting by incorporating randomization. Specifically, we design a strategyproof and $\frac{2}{3}$-approximate randomized mechanism; see \Cref{theo:unit-density-rand-UB}. It runs a variant of the previous deterministic mechanism with probability $2/3$, and a similar mechanism that always keeps the most valuable item in the restricted set with the remaining probability. The bound is proven via better-than-$\frac{2}{3}$ guarantees for each of these mechanisms on instances on which the other is not $\frac{2}{3}$-approximate.

{\scriptsize
\begin{table}[tb]
\begin{tabular}{@{}l r @{~} l @{\quad} r @{~} l @{\quad} r @{~} l @{\quad} r @{~} l @{}}
\toprule \\[-8pt]
             & \multicolumn{4}{c}{Deterministic}                                               & \multicolumn{4}{c}{Randomized}                                                                                \\[2pt]
             & \multicolumn{2}{c}{Lower bound}                            & \multicolumn{2}{c}{Upper bound}                            & \multicolumn{2}{c}{Lower bound}                             & \multicolumn{2}{c}{Upper bound}                               \\[3pt] \midrule \\[-7pt]
General, $n=2$ & {\color{blue}$1/2 = 0.5\phantom{00}$} & {\color{blue}[C.~\ref{cor:det-LB-1n}]} & $1/\phi \approx 0.618$ & \cite{FeigenbaumJ17} & $7/(5\!+\!4\sqrt{2})\approx 0.656$ & \cite{FeigenbaumJ17} & $1/(5\phi\!-\!7) \approx 0.917$ & \cite{FeigenbaumJ17}\\[5pt]
\multirow{2}{*}{General, $n \in \N$}      & \multirow{2}{*}{\color{blue}$1/3 \approx 0.333$} & \multirow{2}{*}{\color{blue}[T.~\ref{theo:general-det-UB}]}   & \multirow{2}{*}{$1/\phi \approx 0.618$} & \multirow{2}{*}{\cite{FeigenbaumJ17}}                       & {\color{gray}$1/576 \approx 0.002$} & {\color{gray}\cite{chen2011mechanism,choi2021}}  & \multirow{2}{*}{$1/(5\phi\!-\!7) \approx 0.917$} & \multirow{2}{*}{\cite{FeigenbaumJ17}}                              \\
& & & & & {\color{blue}$1/2 = 0.5\phantom{00}$} & {\color{blue}[T.~\ref{theo:general-rand-UB}]} \\[5pt]
Unit density, $n \in \N$ & {\color{blue} $1/\phi \approx 0.618$} & {\color{blue}[T.~\ref{theo:unit-density-det-UB}]} & {\color{blue}$1/\phi \approx 0.618$} & {\color{blue}[T.~\ref{theo:unit-density-det-LB}]} & {\color{blue}$2/3 \approx 0.667$} & {\color{blue}[T.~\ref{theo:unit-density-rand-UB}]}                                & {\color{blue}$1/(5\phi\!-\!7) \approx 0.917$} & {\color{blue}[T.~\ref{theo:unit-density-rand-LB}]}\\[5pt]
\bottomrule\\
\end{tabular}
\caption{
\label{tab:results}
Our and previous bounds on the approximation guarantees for deterministic and randomized strategyproof mechanisms in different settings. Bounds in blue are obtained in this paper and the number in square brackets refers to the theorem (T.) or corollary (C.) where this bound is shown. Bounds in gray or black are previous results and the number in square brackets refers to the paper where this bound is taken from; bounds in gray are superseded by bounds in this paper.}
\end{table}}

\subsection{Further Related Work}

The topic of this paper belongs to the growing literature in the area of \emph{approximate mechanism design without money}, an agenda put forth in seminal work of \citet{ProcacciaT13}.
Generally speaking, this research area is concerned with applications where a system designer is interested in computing approximately optimal solutions, but the input to the optimization problem is controlled by strategic agents who will strategize their input to the mechanism as long as it is beneficial for them. Most crucially, out of financial or ethical considerations, no money can be used in order to align the incentives of the mechanism designer and the agents.
In the light of the Gibbard-Satterthwaite-Theorem~\citep{Gibbard73,Satterthwaite75}, one cannot hope for general positive results, hence, research has focused on specific optimization problems.
In facility location problems~\citep{AlonFPT10,FeldmanW13,FeigenbaumSY17,GoelH23} the system designer decides on the location of one or more facilities which are close to the agent's locations reported to the mechanism. In general assignment problems~\citep{Bochet12,chen2011mechanism,ChenGL14,FadaeiB17,DughmiG10} a finite set of items has to be assigned to a finite set of bins with given capacity and both the value of an item and its size depend on the bin it is assigned to. Every item corresponds to a strategic agent who can misreport the set of bins it is compatible with.
\citet{FeigenbaumJ17} also consider a model where a knapsack is to be filled with items owned by strategic agents and agents may report fake items.
The issue of strategic agents hiding information from the mechanism is further present in models for kidney exchange. Here, the agents correspond to hospitals that may match some of their patient-donor pairs internally instead of reporting them to the database. Strategyproof matching mechanisms incentivize the hospitals to never match internally; their approximation properties have been studied extensively~\citep{AshlagiR14,AshlagiFKP15,ToulisP11,CaragiannisFP15}. Mechanism design has further been studied in the context of progeny maximization~\citep{BabichenkoDT20,ZhangZZ21} where strategyproof mechanisms incentivize agents to reveal links to other agents. Besides mechanisms without money, incentive issues in situations where a subset of items owned by strategic agents is to be selected in the presence of knapsack constraints have also been studied in the context of auctions~\citep{aggarwal2006knapsack,aggarwal2022simple,jarman2017ex,niazadeh2022fast}.

\section{Preliminaries}
\label{sec:preliminaries}

We let $\N$ denote the strictly positive natural numbers and, for $n\in \N$, $[n]=\{1,\ldots,n\}$. An instance of the strategic knapsack problem is given by a number of agents $n\in \N$, a set of items $E$ with $|E|=m$ partitioned into $n$ sets $E_1,\ldots,E_n$, vectors $v$ and $s$ in $\R^m_{>0}$ whose $i$th component corresponds to the value and the size of the $i$th item in $E$, respectively, and a capacity $C>0$.  
We let $\calI$ denote the set of all such instances. We say that a subset of instances $\calI'\subseteq \calI$ is \textit{closed under inclusion} if, for any instance in $\calI'$ with items $E$, taking any subset of $E$ and keeping the other components of the instance constant leads to another instance in $\calI'$. 
For $i\in E$, we write $a(i)$ for the \textit{owner} of $i$, i.e., the agent such that $i\in E_{a(i)}$.
For $a\in E$ and $E'\subseteq E$, we write $E'_a = E' \cap E_a$ for the items in $E'$ that belong to agent $a$.
For a subset of agents $A \subseteq [n]$, we write $E_A = \bigcup_{a \in A} E_a$ for the set of items of all agents in $A$.
The density of an item $i$ is $d_i = v_i/s_i$; we denote the set of all instances where $d_i = 1$ for all $i$ by $\calI_{\mathrm{UD}}$. Note that $\calI_{\mathrm{UD}}$ is closed under inclusion.
For a subset of items $E'\subseteq E$, we let $v(E') = \sum_{i\in E'} v_i$ and $s(E') = \sum_{i\in E'} s_i$ 
denote the total value and size of this subset, respectively. Further, for a subset $E'\subseteq E$ and an agent $a\in [n]$, we let $v_a(E')=v(E'\cap E_a)$ and  $s_a(E')=s(E'\cap E_a)$ denote the total value and size of the items of $a$ in this subset, respectively. 
Throughout the paper, we assume that (i) no item exceeds the capacity, i.e., $s_i \leq C$ for every $i \in E$; (ii) no pair of items have the same value or size, i.e., $v_i \not= v_j$ and $s_i \not= s_j$ for every distinct $i,j \in E$; and (iii) optimal solutions for (reduced) knapsack problems are unique, i.e., for any instance with items $E$ and capacity $C$, $\arg\max\big\{ v(D) \;\big\vert\; s(D)\leq C', D\subseteq E'\big\}$ is unique for every $E'\subseteq E$ and $C'\leq C$.\footnote{Note that these assumptions are without loss of generality. This is trivial for (i); for (ii) and (iii) one can define small perturbations on the values and sizes such that ties are avoided, e.g., reducing both the value and the size of each item with index $j\in [m]$ by $\varepsilon^j$ for $\varepsilon>0$ sufficiently small. Alternatively, one can employ a fixed tie-breaking rule as done, e.g., in \citet{DisserKMS17}.}

A mechanism for the strategic knapsack problem is a (possibly random) function $\scM\colon E \to 2^E$ that takes a set of items $E$ and outputs a feasible subset of it, i.e., for every $E$ it holds that $s(\scM(E)) \leq C$.\footnote{Throughout the paper, we slightly overload notation by writing $E$ as the unique input of a mechanism and consider its partition into agents, as well as the values and sizes, as attached to this set, and the capacity as fixed.} We say that $\scM$ is \textit{deterministic} if, for every set of items $E$, there exists $E' \subseteq E$ with $\P{\scM(E)=E'}=1$. 
We call a deterministic mechanism \textit{strategyproof} on $\calI' \subseteq \calI$ if for every instance in $\calI'$ with items $E$, every agent $a\in [n]$, and every $E'\subseteq E$ with $E \setminus E' \subseteq E_a$, it holds that $v_a(\scM(E)) \geq v_a(\scM(E'))$. A randomized mechanism is (universally) strategyproof if it is a lottery over strategyproof deterministic mechanisms. 
We let
\begin{equation*}
    \OPT(E) = \arg\max\{v(E') \mid E' \subseteq E \text{ s.t. } s(E') \leq C\}
\end{equation*}
denote the optimal solution for an instance.
A mechanism~$\scM$ is \textit{$\alpha$-approximate} on $\calI' \subseteq \calI$, for \mbox{$\alpha \in [0,1]$}, if for every instance in $\calI'$ it holds that \mbox{$\E{v(\scM(E))} \geq \alpha\cdot v(\OPT(E))$}.
When a mechanism is strategyproof or $\alpha$-approximate on $\calI$, we say that it is strategyproof or \mbox{$\alpha$-approximate}, respectively.
The following lemma states that it suffices to consider agents hiding a single item to show that a mechanism is strategyproof. 

\begin{lemma}
    \label{lem:one-item-suffices}
    Let $\scM$ be a deterministic mechanism that is not strategyproof on $\calI' \subseteq \calI$ for some~$\calI'$ closed under inclusion. Then, there exists an instance in $\calI'$ with items $E$, an agent $a \in [n]$, and an item $i \in E_a$ such that $v_a(\scM(E)) < v_a(\scM(E \setminus \{i\}))$.
\end{lemma}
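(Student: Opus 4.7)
The plan is to argue directly by a telescoping reduction: starting from any witness of non-strategyproofness, peel off the hidden items one at a time and use pigeonhole on the sequence of values $v_a(\scM(\cdot))$ produced along the way.

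More concretely, assume $\scM$ is not strategyproof on $\calI'$. Then by definition there is an instance in $\calI'$ with items $E$, an agent $a\in[n]$, and a subset $E'\subseteq E$ with $E\setminus E'\subseteq E_a$, such that $v_a(\scM(E)) < v_a(\scM(E'))$. Enumerate the hidden items as $E\setminus E' = \{i_1,\ldots,i_k\}\subseteq E_a$ and define the chain
\begin{equation*}
E = E_0 \supseteq E_1 \supseteq \cdots \supseteq E_k = E', \quad \text{where } E_j = E\setminus\{i_1,\ldots,i_j\}.
\end{equation*}
If $k=1$ we are already done, so assume $k\geq 2$. Because $\calI'$ is closed under inclusion and each $E_j$ is obtained from $E$ by removing items (with all other data unchanged), every reduced instance with item set $E_j$ still lies in $\calI'$.

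Now look at the sequence $v_a(\scM(E_0)), v_a(\scM(E_1)), \ldots, v_a(\scM(E_k))$. Its first term is strictly smaller than its last, so the telescoping sum
\begin{equation*}
\sum_{j=0}^{k-1} \bigl(v_a(\scM(E_{j+1})) - v_a(\scM(E_j))\bigr) = v_a(\scM(E_k)) - v_a(\scM(E_0)) > 0
\end{equation*}
forces at least one summand to be strictly positive. Pick any such index $j$; then $E_{j+1} = E_j\setminus\{i_{j+1}\}$ with $i_{j+1}\in E_a$, and in the instance with items $E_j$ (which belongs to $\calI'$) the item $i_{j+1}$ still belongs to agent $a$, hiding which strictly increases $a$'s value under $\scM$. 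Renaming this instance as the desired $E$ and $i_{j+1}$ as the desired $i$ yields the statement of the lemma.

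There is essentially no obstacle beyond setting up the notation carefully: the only non-trivial points are (i) invoking closure under inclusion so that each intermediate instance remains in the class on which we need the conclusion, and (ii) observing that ownership of items is preserved when we restrict to subsets, so the item $i_{j+1}$ really is an item of agent $a$ in the smaller instance as well.
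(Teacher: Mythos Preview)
Your argument is correct and is essentially identical to the paper's own proof: both enumerate the hidden items, build the chain of intermediate instances, apply a telescoping sum to the values $v_a(\scM(\cdot))$, and invoke closure under inclusion to land in $\calI'$. The only cosmetic differences are your separate handling of $k=1$ and the opposite sign convention in the telescoping sum.
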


\begin{proof}
    Let $\scM$ be a deterministic mechanism that is \textit{not} strategyproof on $\calI'\subseteq \calI$ for some subset of instances $\calI'$ closed under inclusion. That is, there exists an instance in $\calI'$ with items $E$, an agent $a \in [n]$, and a subset $E' \subseteq E$ with $E\setminus E' \subseteq E_a$ such that $v_a(\scM(E)) < v_a(\scM(E'))$. Let $d=|E \setminus E'|$ be the number of items in $E$ that are not in $E'$ and let $E \setminus E' = \{j_1,\ldots, j_d\}$ denote the set of these items in an arbitrary order. For $q \in \{0,\ldots,d\}$, we also let
    \begin{equation*}
        E^q = E \setminus \bigcup_{p=1}^{q} \{j_p\}
    \end{equation*}
    denote the set obtained by deleting the first $q$ items in $E \setminus E'$ from $E$; in particular, $E^0=E$ and $E^d=E'$. We know that
    \begin{equation*}
        v_a(\scM(E^0)) -  v_a(\scM(E^d)) = \sum_{p=1}^d \big(v_a(\scM(E^{p-1})) -  v_a(\scM(E^p))\big) < 0.
    \end{equation*}
    Therefore, for some $p\in [d]$ it holds that $v_a(\scM(E^{p-1})) -  v_a(\scM(E^p)) < 0$ , which is equivalent to \mbox{$v_a(\scM(E^{p-1})) < v_a(\scM(E^{p-1} \setminus \{j_p\}))$}. Since $\calI'$ is closed under inclusion, we conclude.
\end{proof}

\subsection{Greedy Solution}

When developing and analyzing our mechanisms, the notion of the greedy solution for the knapsack problem plays an important role. 
For a set of items $E$, we denote the items sorted by density in decreasing order, breaking ties in favor of larger values, by $\smash{E=\{i_1(E),i_2(E),\ldots,i_m(E)\}}$, such that $\smash{d_{i_1(E)} \geq d_{i_2(E)} \geq  \dots \geq d_{i_m(E)}}$ and whenever $\smash{d_{i_k(E)}=d_{i_{k+1}(E)}}$ it holds that $\smash{v_{i_k(E)} > v_{i_{k+1}(E)}}$. We refer to this order of items as the \textit{greedy order} of set $E$. We further let $\smash{\ell(E) = \max\bigl\{q \in [m] \mid \sum_{p=1}^{q} s_{i_p(E)} \leq C\bigr\}}$ denote the largest index 
of an item that fully fits in the knapsack, and define $\smash{x(E)\in [0,1]^E}$ as
\begin{equation*}
    x_{i_k(E)}(E) = \begin{cases} 1 & \text{if } k \leq \ell(E),\\ \dfrac{C - \sum_{p=1}^{\ell(E)} s_{i_p(E)}}{s_{i_k(E)}} & \text{if } k=\ell(E)+1, \qquad \text{for every }k \in [m]. \\ 0 & \text{otherwise,}\end{cases}
\end{equation*}
Note that these values are uniquely defined. We refer to this particular (fractional) solution as the \textit{fractional greedy solution} throughout the paper. We further refer to the set \mbox{$\{i\in E \mid x_i(E)=1\}$} as the \textit{integral greedy solution}. When the set of items $E$ is clear from the context, we omit this argument from the previous definitions and simply write $\smash{i_p}$, $\ell$, and $\smash{x_i}$.

It is easy to see that, for every $E$, $x$ is such that $\smash{\sum_{i \in E} s_ix_i = \min \big\{ C,~ \sum_{i \in E} s_i \big\}}$.
It is also well known that, for any instance, $x$ defines an optimal solution for the \textit{fractional} version of the knapsack problem~\citep[e.g.,][\S~17.1]{KorteV18}; i.e., it is an optimal solution of the following linear program with variables $y_i \in [0,1]$ for every $i\in E$:
\begin{equation}
    \max \bigg\{ \sum_{i \in E}v_i y_i\;\bigg\vert\; \sum_{i \in E} s_i y_i \leq C,\;y \in [0,1]^E \bigg\}.\label{eq:lp}
\end{equation}
We state the following properties of the fractional greedy solution for future reference.

\begin{proposition}
    \label{pro:fractional-greedy}
    For a set of items $E$ with sizes $s\in \R^E_{>0}$ and values $v \in \R^E_{>0}$, it holds that
    \begin{enumerate}[label=(\roman*)]
        \item $\sum_{i \in E} s_ix_i = \min \big\{ C,~ \sum_{i \in E} s_i \big\}$, \label{item-fractional-greedy-size}
        \item $x$ is an optimal solution for the linear program \eqref{eq:lp},\label{item-fractional-greedy-value}
        \item $\sum_{i\in E} v_ix_i \geq \OPT(E)$.\label{item:fractional-greedy-opt}
    \end{enumerate}
\end{proposition}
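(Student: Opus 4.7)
The plan is to prove the three parts in order, since (i) is essentially a definitional check, (ii) is the classical optimality of fractional greedy via an exchange argument, and (iii) follows immediately from (ii).

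For (i), I would just unfold the definition of $x$. If $\ell(E)=m$, every item fits entirely into the knapsack, so $x_i=1$ for all $i\in E$ and both sides equal $\sum_{i\in E}s_i$ (which in this case does not exceed $C$). If $\ell(E)<m$, then $x_{i_p}=1$ for $p\leq \ell$, $x_{i_p}=0$ for $p>\ell+1$, and by construction $x_{i_{\ell+1}}=\bigl(C-\sum_{p=1}^{\ell}s_{i_p}\bigr)/s_{i_{\ell+1}}$, so $\sum_{i\in E}s_ix_i = \sum_{p=1}^{\ell}s_{i_p} + \bigl(C-\sum_{p=1}^{\ell}s_{i_p}\bigr)=C$, while $\sum_{i\in E}s_i>C$ in this case. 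Thus the equality $\sum_{i\in E}s_ix_i=\min\{C,\sum_{i\in E}s_i\}$ holds in both subcases.

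For (ii), feasibility of $x$ for the LP follows from (i) together with $x\in[0,1]^E$. For optimality I would use a direct exchange argument: suppose some feasible $y\in[0,1]^E$ satisfies $\sum_{i\in E}v_iy_i>\sum_{i\in E}v_ix_i$, and order the items by greedy order $i_1,i_2,\dots,i_m$. Let $k$ be the smallest index with $y_{i_k}\neq x_{i_k}$. By the definition of $x$, items $i_1,\dots,i_{k-1}$ are fully packed by $x$, so $y_{i_k}<x_{i_k}$. Because both $x$ and $y$ satisfy the capacity constraint and $x$ uses the full capacity (if the capacity is not binding, then $\ell(E)=m$ and $x\equiv 1$ trivially dominates any $y$), there must exist $k'>k$ with $y_{i_{k'}}>x_{i_{k'}}$. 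Shift a small amount of size-mass $\delta>0$ from $i_{k'}$ to $i_k$, i.e.\ replace $y_{i_k}$ by $y_{i_k}+\delta/s_{i_k}$ and $y_{i_{k'}}$ by $y_{i_{k'}}-\delta/s_{i_{k'}}$. For $\delta$ small enough this stays in $[0,1]^E$, the capacity constraint is preserved, and the objective changes by $\delta(d_{i_k}-d_{i_{k'}})\geq 0$ since items are sorted by decreasing density. Iterating eliminates all coordinates on which $y$ differs from $x$ without decreasing the objective, yielding a contradiction to $\sum v_iy_i>\sum v_ix_i$.

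For (iii), the indicator vector of $\OPT(E)$ lies in $\{0,1\}^E\subseteq[0,1]^E$ and satisfies the LP's capacity constraint, hence is feasible for \eqref{eq:lp}; by (ii), $v(\OPT(E))\leq \sum_{i\in E}v_ix_i$. The only nontrivial step is the exchange argument in (ii), and even there the main care is in noting that the case of non-binding capacity forces $x\equiv 1$ and hence dominates pointwise any other $y\leq 1$, so the swap construction can always be carried out.
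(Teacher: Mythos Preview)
Your proof is correct and supplies exactly the standard details. The paper itself does not prove this proposition: just before stating it, the authors remark that (i) is ``easy to see'' and that (ii) is ``well known'', citing Korte and Vygen \S17.1, and then record (i)--(iii) ``for future reference'' without further argument; part (iii) is the usual LP-relaxation bound. So there is nothing substantive to compare---your write-up is simply more explicit than the paper's. One minor wording point in your exchange argument: the iteration need not terminate at $y=x$ when $\sum_i s_iy_i<C$, but it does terminate at some $y'$ with no swappable pair, and then feasibility forces $y'\le x$ coordinatewise, which is all you need.
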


We state here one more property regarding the fractional greedy solution: Whenever we have two instances $E$ and $E'$ and an item $i \in E \cap E'$ such that the total size of items that are in front of $i$ in the greedy order of $E'$ is smaller than the total size of items that are in front of $i$ in the greedy order of $E$, it holds that $x_i(E)\leq x_i(E')$.

\begin{lemma}
    \label{lem:fractional-greedy-improvement}
    Let $E$ and $E'$ be sets of items, $i\in E\cap E'$, and $k,k'$ be such that $i=i_k(E)=i_{k'}(E')$. If $\sum_{p=1}^{k-1} s_{i_p(E)} \geq \sum_{p=1}^{k'-1} s_{i_{p}(E')}$,
    then $x_i(E)\leq x_i(E')$.
\end{lemma}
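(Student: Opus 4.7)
The plan is to reduce the lemma to a simple monotonicity observation by first writing $x_i$ in a uniform closed form, bypassing the case split in its definition. Concretely, for any set of items $F$ with $i \in F$ and $i = i_k(F)$, I would like to show
\[
x_i(F) = \max\Bigl\{0,\ \min\bigl\{1,\ (C - S_F)/s_i\bigr\}\Bigr\}, \qquad \text{where } S_F = \sum_{p=1}^{k-1} s_{i_p(F)}.
\]

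To verify this, I would split on the position of $k$ relative to $\ell(F)$. If $k \leq \ell(F)$, then $S_F + s_i \leq \sum_{p=1}^{\ell(F)} s_{i_p(F)} \leq C$, so $(C-S_F)/s_i \geq 1$ and the formula returns $1$, matching the definition. If $k = \ell(F)+1$, then $S_F$ coincides with $\sum_{p=1}^{\ell(F)} s_{i_p(F)}$, which is at most $C$ by the definition of $\ell(F)$ and strictly less than $C - s_i$ would contradict the maximality of $\ell(F)$; hence $(C-S_F)/s_i \in [0,1)$, and the middle branch of the truncation is active and equals $(C-S_F)/s_i$, which is precisely the definition of $x_i(F)$. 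Finally, if $k \geq \ell(F)+2$, then $S_F \geq \sum_{p=1}^{\ell(F)+1} s_{i_p(F)} > C$, so $(C-S_F)/s_i < 0$ and the formula returns $0$, again matching the definition.

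Once this closed form is established, the lemma is immediate. Applying it to $E$ and $E'$ with the notation of the statement,
\[
x_i(E) = \max\Bigl\{0,\ \min\bigl\{1,\ (C - S_E)/s_i\bigr\}\Bigr\}, \qquad x_i(E') = \max\Bigl\{0,\ \min\bigl\{1,\ (C - S_{E'})/s_i\bigr\}\Bigr\},
\]
where $S_E = \sum_{p=1}^{k-1} s_{i_p(E)}$ and $S_{E'} = \sum_{p=1}^{k'-1} s_{i_p(E')}$. The hypothesis $S_E \geq S_{E'}$ together with $s_i > 0$ yields $(C-S_E)/s_i \leq (C-S_{E'})/s_i$, and since the truncation map $t \mapsto \max\{0, \min\{1, t\}\}$ is non-decreasing, we conclude $x_i(E) \leq x_i(E')$.

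There is no real obstacle in this argument; the only point that needs a moment of care is the middle case $k = \ell(F)+1$, where one must invoke the definition of $\ell(F)$ to guarantee that $S_F \leq C < S_F + s_i$ so that the truncation is actually achieved by the argument rather than by $0$ or $1$. Once the closed form is in place, the lemma is nothing more than the trivial monotonicity of a clipped affine function of $S_F$.
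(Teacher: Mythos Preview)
Your argument is correct. The paper instead argues by case analysis directly on the pair $(x_i(E),x_i(E'))$: it disposes of the cases $x_i(E)=0$ and $x_i(E)<1,\ x_i(E')=1$ trivially, handles $x_i(E)\in(0,1),\ x_i(E')<1$ by a one-line computation, and in the remaining case $x_i(E)=1$ shows $x_i(E')=1$ by comparing prefix sums. Your route front-loads essentially the same case work into verifying the clipped formula $x_i(F)=\max\{0,\min\{1,(C-S_F)/s_i\}\}$ once, after which the lemma reduces to monotonicity of the clipping map; this buys a reusable identity and makes the dependence on $S_F$ transparent at no extra cost. One tiny wording point: in the case $k=\ell(F)+1$ you want to rule out $S_F\le C-s_i$ (not merely $S_F< C-s_i$), since equality would also force $\ell(F)\ge k$; this does not affect the validity of the closed form.
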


\begin{proof}
    Let $E,E',i,k$, and $k'$ be as in the statement.
    If $x_i(E)=0$, or if both $x_i(E)<1$ and $x_i(E')=1$ hold, the result follows trivially. If $x_i(E)\in (0,1)$ and $x_i(E')<1$, we can write
    \begin{equation*}
        x_i(E) = \frac{C-\sum_{p=1}^{k-1}s_{i_p(E)}}{s_i} \leq \frac{C-\sum_{p=1}^{k'-1}s_{i_p(E')}x_{i_p(E')}}{s_i} = x_i(E'),
    \end{equation*}
    where we used the hypothesis in the statement and the fact that $x_j(E')\leq 1$ for every $j\in E'$. Finally, if $x_i(E)=1$ we know that $\sum_{p=1}^{k}s_{i_p(E)} \leq C$, hence from the hypothesis we obtain that $\sum_{p=1}^{k'}s_{i_p(E')} \leq C$. Thus, $x_i(E')=1$ as well.
\end{proof}

\section{General Knapsack}\label{sec:general}

In this section, we study the strategic knapsack problem with arbitrary item densities.
We provide the first deterministic mechanism achieving a constant-factor approximation and we improve the approximation factor given by a randomized mechanism from $1/576$ to $1/2$.

These results have as a main building block a strategyproof version of the algorithm that for every instance simply returns the \textit{integral greedy solution}. It is not hard to see that the integral greedy solution is not strategyproof (\Cref{fig:greedy-not-sp} shows a simple example of this fact) nor does it provide any finite approximation guarantee. However, as we will see, a small modification allows us to recover strategyproofness while packing a set of items with at least as much value as the integral greedy solution for every instance.
\begin{figure}[t]
\centering
\begin{tikzpicture}
\draw[fill = lightblue] (0,0) rectangle node[label={[label distance=0.1cm]-90:$1$}] {$10$} (0.5,0.7);
\draw[fill = lightgreen] (0.5,0) rectangle node[label={[label distance=0.1cm]-90:$3$}] {$25$} (2,0.7);
\draw[fill = lightorange] (2,0) rectangle node[label={[label distance=0.1cm]-90:$3$}] {$20$} (3.5,0.7);   
\draw[fill = lightblue] (3.5,0) rectangle node[label={[label distance=0.1cm]-90:$1$}] {$6$} (4,0.7);
\draw[fill = lightorange] (4,0) rectangle node[label={[label distance=0.1cm]-90:$3$}] {$15$} (5.5,0.7);
\draw[fill = lightorange] (5.5,0) rectangle node[label={[label distance=0.1cm]-90:$2$}] {$8$} (6.5,0.7);
\draw[fill = lightgreen] (6.5,0) rectangle  node[label={[label distance=0.1cm]-90:$5$}] {$5$} (9,0.7);

\draw [very thick, -](5,-0.2) -- (5,0.9) node[above] {$C=10$};
\Text[x=-0.3,y=-0.23]{$s$};
\Text[x=-0.3,y=0.33]{$v$};

\begin{scope}[shift={(0,-1.4)}]
\draw[fill = lightblue] (0,0) rectangle node[label={[label distance=0.1cm]-90:$1$}] {$10$} (0.5,0.7);
\draw[fill = lightgreen] (0.5,0) rectangle node[label={[label distance=0.1cm]-90:$3$}] {$25$} (2,0.7);
\draw[fill = lightorange] (2,0) rectangle node[label={[label distance=0.1cm]-90:$3$}] {$20$} (3.5,0.7);   
\draw[fill = lightblue] (3.5,0) rectangle node[label={[label distance=0.1cm]-90:$1$}] {$6$} (4,0.7);
\draw[fill = lightorange] (4,0) rectangle node[label={[label distance=0.1cm]-90:$2$}] {$8$} (5,0.7);
\draw[fill = lightgreen] (5,0) rectangle  node[label={[label distance=0.1cm]-90:$5$}] {$5$} (7.5,0.7);

\draw [very thick, -](5,-0.2) -- (5,0.9);
\Text[x=-0.3,y=-0.23]{$s$};
\Text[x=-0.3,y=0.33]{$v$};

\end{scope}
\end{tikzpicture}
\caption{Example of an instance in which computing the integral greedy solution results in an agent benefiting from hiding items. Items of the same color belong to the same agent. The value and size of each item are respectively written inside and below them; the capacity is $10$. If the agent that owns the orange items hides the item with value $15$ in the upper instance, the value of the orange items packed in the integral greedy solution increases from $20$ to $28$.}
\label{fig:greedy-not-sp}
\end{figure}
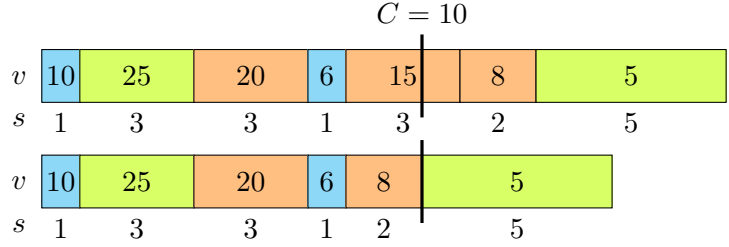

We call the resulting mechanism \nalgspgreedy, abbreviate it \algspgreedy,~and formally describe it in \Cref{alg:spgreedy}.
The mechanism first assigns to each agent $a \in [n]$ a quota equal to $\sum_{i \in E_a} s_ix_i$, i.e., the total size of the agent's items in the fractional greedy solution. 
It then solves an integer knapsack problem for each agent, where the capacity of the agent is equal to their quota, and finally returns all items contained in these integer solutions.
\begin{algorithm}[t]
\caption{\nalgspgreedy~$(\algspgreedy)$}\label{alg:spgreedy}
\begin{algorithmic}
\Require set of items $E$ (partitioned into $E_1,\ldots,E_n$ with values $v$ and sizes $s$), capacity $C$
\Ensure subset $E^* \subseteq E$ with $s(E^*) \leq C$
\State $x \gets \text{fractional greedy solution}$
\ForEach{$a \in [n]$}
\State $E^*_a \gets \arg\max\big\{ v(E')~\big|~E'\subseteq E_a \text{ s.t. } s(E') \leq \sum_{i\in E_a} s_i x_i\big\}$ 
\EndFor
\State \textbf{return} $\bigcup_{a\in [n]} E^*_a$ 
\end{algorithmic}
\end{algorithm}

The following lemma establishes the strategyproofness of \nalgspgreedy~and shows that the total value of the items selected by the mechanism is at least as high as the value of the integral greedy solution.
The latter holds because an agent can always pack their items that belong to the integral greedy solution; by choosing their optimum constrained to a capacity $\sum_{i \in E_a} s_ix_i$, they can only increase the value of their packed items.

\begin{lemma}
    \label{lem:spgreedy}
    \nalgspgreedy~is strategyproof. For every set of items $E$, $v(\algspgreedy(E)) \geq \sum_{i \in E : x_i(E)=1} v_i$.
\end{lemma}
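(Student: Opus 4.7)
The value bound is the easier half, so I would dispose of it first. For each agent $a \in [n]$, the subset $\{j \in E_a : x_j(E)=1\}$ has total size at most $q_a(E) := \sum_{j \in E_a} s_j x_j(E)$, hence is feasible for agent~$a$'s subproblem in \Cref{alg:spgreedy}; optimality of $E_a^*$ then gives $v(E_a^*) \geq \sum_{j \in E_a : x_j(E)=1} v_j$, and summing over $a$ (using that the sets $E_a$ partition $E$) yields the stated inequality.

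For strategyproofness, I would invoke \Cref{lem:one-item-suffices} with $\calI' = \calI$ (trivially closed under inclusion) to reduce to the case in which a single item $i \in E_a$ is hidden. Fix such $E$, $a$, $i$ and write $E' = E \setminus \{i\}$ and $q_b(F) := \sum_{j \in E_b \cap F} s_j x_j(F)$ for $b \in [n]$ and $F \in \{E, E'\}$. The subproblem solved for $a$ on $E$ has item set $E_a$ and capacity $q_a(E)$, while on $E'$ it has item set $E_a \setminus \{i\}$ and capacity $q_a(E')$. Hence if $q_a(E') \leq q_a(E)$, the feasible region of the $E'$-subproblem is contained in that of the $E$-subproblem, which immediately gives $v_a(\algspgreedy(E')) \leq v_a(\algspgreedy(E))$ and so strategyproofness.

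The crux is therefore the inequality $q_a(E') \leq q_a(E)$, which I would establish in two steps. First, for every $j \in E \setminus \{i\}$, \Cref{lem:fractional-greedy-improvement} yields $x_j(E') \geq x_j(E)$: the items preceding $j$ in the greedy order of $E'$ either coincide with those in $E$ (if $j$ precedes $i$) or are obtained by dropping $i$ (if $j$ follows $i$), so their total size is no larger. Summing over $j \in E_b$ for any $b \neq a$ gives $q_b(E') \geq q_b(E)$. Second, by \Cref{pro:fractional-greedy}\ref{item-fractional-greedy-size}, $\sum_b q_b(F) = \min\{C, s(F)\}$ for $F \in \{E, E'\}$, and $s(E') = s(E) - s_i$ then yields $\sum_b q_b(E') \leq \sum_b q_b(E)$. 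Combining,
\begin{equation*}
    q_a(E') = \sum_b q_b(E') - \sum_{b \neq a} q_b(E') \leq \sum_b q_b(E) - \sum_{b \neq a} q_b(E) = q_a(E).
\end{equation*}
The only mild subtlety I anticipate is keeping track of the positional shifts in the greedy order when $i$ is removed, but \Cref{lem:fractional-greedy-improvement} encapsulates this cleanly, so no further case analysis is needed.
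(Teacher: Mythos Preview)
Your proof is correct and follows essentially the same route as the paper: both establish that agent~$a$'s quota satisfies $q_a(E\setminus\{i\}) \leq q_a(E)$ by combining \Cref{lem:fractional-greedy-improvement} (other agents' fractional shares can only increase when $i$ is removed) with \Cref{pro:fractional-greedy}\ref{item-fractional-greedy-size} (the quotas sum to $\min\{C,s(\cdot)\}$), and then conclude strategyproofness from quota monotonicity. Your use of $\min\{C,s(F)\}$ handles uniformly what the paper splits into three cases depending on whether $s(E)$ and $s(E\setminus\{i\})$ exceed~$C$, but the underlying argument is identical.
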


\begin{proof}
    We first show strategyproofness. Let $E$ be an arbitrary set of items, $a \in [n]$, and $i \in E_a$. We distinguish three cases based on the relation between the sizes of the items and the capacity of the knapsack.
    
    If $\sum_{j \in E}s_j < C$, then
    \begin{equation*}
        \sum_{j\in E_a} s_jx_j(E) = \sum_{j\in E_a} s_j \geq \sum_{j\in E_a \setminus \{i\}} s_j = \sum_{j\in E_a \setminus \{i\}} s_jx_j(E \setminus \{i\}),
    \end{equation*}
    where the first and last equalities come from property \ref{item-fractional-greedy-size} of \Cref{pro:fractional-greedy}. 
    
    If $\sum_{j \in E \setminus \{i\}}s_j < C \leq \sum_{j \in E}s_j$, then
    \begin{align*}
        \sum_{j\in E_a} s_jx_j(E) & = C - \sum_{j\in E \setminus E_a} s_jx_j(E) &  \sum_{j \in E}s_j\geq C \text{ and property \ref{item-fractional-greedy-size} of \Cref{pro:fractional-greedy}} \\
        & \geq C - \sum_{j\in E \setminus E_a} s_j & x_j(E) \leq 1 \text{ for every } j \in E\\
        & > \sum_{j \in E \setminus \{i\}} s_j - \sum_{j\in E \setminus E_a} s_j & \sum_{j \in E \setminus \{i\}}s_j < C\\
        & = \sum_{j\in E_a \setminus \{i\}} s_jx_j(E\setminus \{i\}) & \sum_{j \in E \setminus \{i\}} s_j < C \text{ and property \ref{item-fractional-greedy-size} of \Cref{pro:fractional-greedy}.}
    \end{align*}
    
    We finally consider the case with $\sum_{j \in E \setminus \{i\}}s_j \geq C$. For any fixed $j\in E\setminus E_a$, defining $k,k'$ such that $j=i_k(E)=i_{k'}(E\setminus \{i\})$, we have that $\{i_p(E\setminus \{i\}) \mid p< k'\} \subseteq \{i_p(E) \mid p< k\}$.
    Thus, \Cref{lem:fractional-greedy-improvement} implies that for every $j \in E \setminus E_a$, we have that $x_j(E \setminus \{i\}) \geq x_j(E)$. This yields
    \begin{equation*}
        \sum_{j\in E_a} s_jx_j(E) = C - \sum_{j\in E \setminus E_a} s_jx_j(E) \geq C - \sum_{j\in E \setminus E_a} s_jx_j(E \setminus \{i\}) = \sum_{j\in E_a \setminus \{i\}} s_jx_j(E\setminus \{i\}),
    \end{equation*}
    where we used the fact stated above and property \ref{item-fractional-greedy-size} of \Cref{pro:fractional-greedy}. 
    
    We conclude that, no matter the relation between the sizes of the items and the capacity, the inequality $\sum_{j\in E_a} s_jx_j(E) \geq \sum_{j \in E_a \setminus \{i\}} s_jx_j(E\setminus \{i\})$ holds. Therefore,
    \begin{align*}
        v_a(\algspgreedy(E)) & = \max\bigg\{v(E')~\bigg|~E' \subseteq E_a \text{ s.t. } s(E') \leq \sum_{j\in E_a} s_jx_j(E) \bigg\} \\
        & \geq \max\bigg\{v(E')~\bigg|~E' \subseteq E_a \setminus \{i\} \text{ s.t. } s(E') \leq \sum_{j\in E_a \setminus \{i\}} s_jx_j(E \setminus \{i\}) \bigg\} \\
        & = v_a(\algspgreedy(E \setminus \{i\})).
    \end{align*}

    We now prove the second, simpler part of the statement. For every $a \in [n]$,
    \begin{equation*}
        v_a(\algspgreedy(E)) = \max\bigg\{v(E')~\bigg|~E' \subseteq E_a \text{ s.t. } s(E') \leq \sum_{i\in E_a} s_ix_i(E) \bigg\} \geq v( \{ i\in E_a \mid x_i(E) = 1\} ).
    \end{equation*}
    The result follows directly by summing over all agents.
\end{proof}

\nalgspgreedy\ is not $\alpha$-approximate for any $\alpha > 0$. 
In the following, we use \nalgspgreedy~as a building block of deterministic and randomized mechanisms with better approximation guarantees.

\subsection{Deterministic Mechanisms}\label{subsec:general-det}

For deterministic strategyproof mechanisms, no lower bound on the approximation guarantee they can provide was known. As an initial observation, a $\smash{\frac{1}{n}}$-approximation is easy to get through a mechanism that returns the best individual optimum of an agent. A natural question, however, regards the existence of a strategyproof mechanism providing a constant approximation factor for an arbitrary number of agents. The following result answers this question affirmatively.

\begin{theorem}
    \label{theo:general-det-UB}
    There exists a strategyproof and $\smash{\frac{1}{3}}$-approximate deterministic mechanism.
\end{theorem}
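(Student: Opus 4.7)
The plan is to define a mechanism that inspects the fractional greedy solution $x = x(E)$ and splits on whether a single agent dominates it. Let $G = \sum_{i \in E} v_i x_i$ and $V_a = \sum_{i \in E_a} v_i x_i$. If some agent $a$ satisfies $V_a \geq \tfrac{2}{3}G$, the mechanism outputs $\OPT(E_a)$; otherwise it outputs $\algspgreedy(E)$. Since $\sum_a V_a = G$, at most one agent can be above this threshold, so no tie-breaking is needed. The intuition is that a dominant agent concentrates enough value that returning their own optimum already gives a good approximation, while in the balanced case the strategyproof greedy loses only a bounded amount at the single fractional item.

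For the $\tfrac{1}{3}$-approximation I would treat the two branches separately. In the rich-agent branch, $x$ restricted to $E_a$ is a feasible fractional packing of $E_a$ of value $V_a$, so the fractional LP for $E_a$ with capacity $C$ has optimum at least $V_a$. The standard $\tfrac{1}{2}$-bound for knapsack (take the better of the integer part of the fractional greedy and the single fractional item) then gives $v(\OPT(E_a)) \geq \tfrac{1}{2} V_a \geq \tfrac{1}{3} G \geq \tfrac{1}{3} v(\OPT(E))$, where the last step uses property~\ref{item:fractional-greedy-opt} of \Cref{pro:fractional-greedy}. In the balanced branch, \Cref{lem:spgreedy} yields $v(\algspgreedy(E)) \geq G - v_{i^*} x_{i^*}$ for the (at most one) fractional item $i^*$; since $v_{i^*} x_{i^*} \leq V_{a^*} < \tfrac{2}{3} G$ for $i^*$'s owner $a^*$, this exceeds $\tfrac{1}{3} G \geq \tfrac{1}{3} v(\OPT(E))$.

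The main obstacle is strategyproofness, which I expect to hinge on the following ratio-monotonicity claim: for every agent $a$ and $i \in E_a$, with $V'_a, G'$ denoting the analogous quantities on input $E \setminus \{i\}$,
\begin{equation*}
\frac{V'_a}{G'} \;\leq\; \frac{V_a}{G}.
\end{equation*}
I would prove it by decomposing the effect of removing $i$: a fractional mass of value $v_i x_i$ is released, and the fractional greedy redistributes a total value $\Delta$ to the items immediately following $i$, of which a portion $\Delta_a \in [0, \Delta]$ lands on $a$'s items. Since the densities of the items after $i$ are at most $d_i$, we have $\Delta \leq v_i x_i$. A short cross-multiplication using $\Delta_a \leq \Delta$, $\Delta \leq v_i x_i$, and $V_a \leq G$ then yields $V_a G' - V'_a G \geq 0$.

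With the ratio claim, I would conclude strategyproofness via \Cref{lem:one-item-suffices} by analysing how the two branches interact when agent $a$ hides $i \in E_a$. If $a$ is rich in both runs, the payoff drops from $v(\OPT(E_a))$ to $v(\OPT(E_a \setminus \{i\}))$. If $a$ is rich before and the branch changes, the payoff is bounded by $v(\OPT(E_a \setminus \{i\}))$ in the balanced case and by $0$ if some other agent becomes rich. If some $b \neq a$ is rich before, then $V_b$ only grows and $G$ only shrinks (by \Cref{lem:fractional-greedy-improvement} applied to items in $E_b$), so $b$ remains rich and $a$ again receives $0$. The crucial role of the ratio claim is to rule out the otherwise-dangerous transition from the balanced branch to a branch where $a$ itself becomes rich: monotonicity forbids $a$'s ratio from crossing the $\tfrac{2}{3}$ threshold by hiding, so this transition is impossible. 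Any other transition out of the balanced branch is handled either by the strategyproofness of $\algspgreedy$ (\Cref{lem:spgreedy}) or by the fact that $a$ receives $0$ whenever a different agent is chosen.
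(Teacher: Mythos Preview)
Your proposal is correct and matches the paper's approach: the same mechanism (called \nalgmixspgreedy{} there), the same $\tfrac{1}{3}$-approximation argument, and the same strategyproofness case analysis hinging on the ratio-monotonicity claim (the paper's \Cref{lem:fractional-greedy-hiding}). The only cosmetic differences are that the paper proves ratio-monotonicity for $b\neq a$ first and deduces $a$'s case by complementation, whereas you argue the $a$-case directly via mass redistribution, and that in your ``$b$ rich before'' case the fact that $G$ shrinks comes from LP optimality (\Cref{pro:fractional-greedy}\ref{item-fractional-greedy-value}), not from \Cref{lem:fractional-greedy-improvement} as you cite.
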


Our mechanism is based on \nalgspgreedy. 
When $E$ is such that the integral greedy solution provides a value of $\alpha\cdot v(\OPT(E))$, \nalgspgreedy~is $\alpha$-approximate due to \Cref{lem:spgreedy}. 
However, this mechanism can be arbitrarily bad when most of the value of the fractional greedy solution comes from the \textit{fractional item} $i\in E$ with $0<x_i<1$. 
To address this case, if an agent owns a fraction of at least $2/3$ of the value of the fractional greedy solution, we return their optimal solution. If there is no such agent, we return $\algspgreedy(E)$.
The resulting mechanism, \nalgmixspgreedy~($\algmixspgreedy$), is presented as \Cref{alg:mixspgreedy}.

\begin{algorithm}[t]
\caption{\nalgmixspgreedy~$(\algmixspgreedy)$}\label{alg:mixspgreedy}
\begin{algorithmic}
\Require set of items $E$ (partitioned into $E_1,\ldots,E_n$ with values $v$ and sizes $s$), capacity $C$
\Ensure subset $E^* \subseteq E$ with $s(E^*) \leq C$
\State $x \gets \text{fractional greedy solution}$
\If{there exists $a\in [n]$ such that $\sum_{i \in E_a}v_ix_i \geq \frac{2}{3} \sum_{i \in E}v_ix_i$}
    \State \textbf{return} $\OPT(E_a)$
\Else
    \State \textbf{return} $\algspgreedy(E)$
\EndIf
\end{algorithmic}
\end{algorithm}

We now prove the approximation guarantee and strategyproofness of \nalgmixspgreedy. The former follows by carefully analyzing both possible outputs of the mechanism. If an agent $a$ is such that $\smash{\sum_{i \in E_a}v_ix_i(E) \geq \frac{2}{3} \sum_{i \in E}v_ix_i(E)}$, then either the items $i\in E_a$ with $x_i=1$ or an item $i \in E_a$ with $0<x_i<1$ achieve half of this value, and they both define feasible solutions. Otherwise, the integral greedy solution provides a $\smash{\frac{1}{3}}$-approximation and this bound carries over to \nalgspgreedy~due to \Cref{lem:spgreedy}.
To show the strategyproofness of \nalgmixspgreedy, we first prove that, whenever an agent hides an item, the fraction of the value of this agent's items in the fractional greedy solution cannot increase and the fraction of the value of other agents' items in the fractional greedy solution cannot decrease. 

\newpage
\begin{lemma}
    \label{lem:fractional-greedy-hiding}
    For every set of items $E$, agent $a \in [n]$, and item $i \in E_a$, the following inequalities hold:
    \begin{enumerate}[label=(\roman*)]
        \item $\dfrac{\sum_{j \in E_b}v_jx_j(E)}{\sum_{j \in E}v_jx_j(E)} \leq \dfrac{\sum_{j \in E_b}v_jx_j(E \setminus \{i\})}{\sum_{j \in E \setminus \{i\}}v_jx_j(E \setminus \{i\})}$ for every $b \in [n] \setminus \{a\}$,\label{item:fractional-greedy-hiding-others}
        \item $\dfrac{\sum_{j \in E_a}v_jx_j(E)}{\sum_{j \in E}v_jx_j(E)} \geq \dfrac{\sum_{j \in E_a \setminus \{i\}}v_jx_j(E \setminus \{i\})}{\sum_{j \in E \setminus \{i\}}v_jx_j(E \setminus \{i\})}$. \label{item:fractional-greedy-hiding-liar}
    \end{enumerate}
\end{lemma}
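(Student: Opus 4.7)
The plan is to prove both inequalities from two elementary monotonicity facts: (a) a pointwise bound $x_j(E) \leq x_j(E \setminus \{i\})$ for every $j \in E \setminus \{i\}$, and (b) an aggregate bound $\sum_{j \in E} v_j x_j(E) \geq \sum_{j \in E \setminus \{i\}} v_j x_j(E \setminus \{i\})$. Once these are in hand, part (i) drops out by cross-multiplication and part (ii) follows by summing (i) over $b \neq a$ and passing to complements.

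First I would establish (a) as a direct application of \Cref{lem:fractional-greedy-improvement}. Fix $j \in E \setminus \{i\}$ and let $k,k'$ be its positions in the greedy orders of $E$ and $E \setminus \{i\}$, respectively. The items preceding $j$ in the greedy order of $E \setminus \{i\}$ form a subset of those preceding $j$ in the greedy order of $E$ (removing $i$ can only delete at most one entry from the prefix of $j$). Hence $\sum_{p<k'} s_{i_p(E \setminus \{i\})} \leq \sum_{p<k} s_{i_p(E)}$, and \Cref{lem:fractional-greedy-improvement} yields $x_j(E) \leq x_j(E \setminus \{i\})$. For (b), the vector $\tilde{x} \in [0,1]^E$ defined by $\tilde{x}_j = x_j(E \setminus \{i\})$ for $j \neq i$ and $\tilde{x}_i = 0$ is feasible for the LP \eqref{eq:lp} on $E$, so \Cref{pro:fractional-greedy}\,\ref{item-fractional-greedy-value} gives $\sum_{j \in E} v_j x_j(E) \geq \sum_{j \in E \setminus \{i\}} v_j x_j(E \setminus \{i\})$.

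Write $V = \sum_{j \in E} v_j x_j(E)$, $V' = \sum_{j \in E \setminus \{i\}} v_j x_j(E \setminus \{i\})$. For (i), fix $b \neq a$ so that $E_b \subseteq E \setminus \{i\}$; by (a) applied to each $j \in E_b$,
\begin{equation*}
\sum_{j \in E_b} v_j x_j(E) \leq \sum_{j \in E_b} v_j x_j(E \setminus \{i\}),
\end{equation*}
and combining this with $V \geq V'$ from (b) and with $V,V' > 0$ gives
\begin{equation*}
\frac{\sum_{j \in E_b} v_j x_j(E)}{V} \leq \frac{\sum_{j \in E_b} v_j x_j(E \setminus \{i\})}{V} \leq \frac{\sum_{j \in E_b} v_j x_j(E \setminus \{i\})}{V'},
\end{equation*}
which is (i). For (ii), sum (i) over $b \in [n] \setminus \{a\}$; the numerators aggregate to $\sum_{j \in E \setminus E_a} v_j x_j(E)$ and $\sum_{j \in (E \setminus \{i\}) \setminus E_a} v_j x_j(E \setminus \{i\})$ respectively, which equal $V - \sum_{j \in E_a} v_j x_j(E)$ and $V' - \sum_{j \in E_a \setminus \{i\}} v_j x_j(E \setminus \{i\})$. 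Dividing by $V$ and $V'$, taking complements, and flipping the inequality gives exactly (ii).

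The only conceptual obstacle is that a direct attack on (ii) is awkward: when agent $a$ hides $i$, the quantity $\sum_{j \in E_a \setminus \{i\}} v_j x_j(E \setminus \{i\})$ may in fact exceed $\sum_{j \in E_a} v_j x_j(E) - v_i x_i(E)$ because removing $i$ can lift the fractional weight of $a$'s remaining items, so no simple pointwise argument on $E_a$ works. Routing the argument through the complement $[n] \setminus \{a\}$, where the pointwise monotonicity (a) applies uniformly and we only need the single aggregate bound (b) for the denominator, circumvents this difficulty cleanly.
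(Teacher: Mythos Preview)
Your proof is correct and follows essentially the same approach as the paper: both establish the numerator bound $\sum_{j\in E_b} v_j x_j(E)\le \sum_{j\in E_b} v_j x_j(E\setminus\{i\})$ via \Cref{lem:fractional-greedy-improvement} and the denominator bound $V\ge V'$ via LP optimality (\Cref{pro:fractional-greedy}\,\ref{item-fractional-greedy-value}), then derive (ii) from (i) by passing to complements. The only cosmetic difference is that you state the pointwise bound (a) for all $j\in E\setminus\{i\}$ rather than just $j\in E_b$, which is harmless.
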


\begin{proof}
    Let $E,~a$, and $i$ be as in the statement. To see part \ref{item:fractional-greedy-hiding-others}, let $b\in [n]\setminus \{a\}$ and note that, for any fixed $j\in E_b$, defining $k,k'$ such that $j=i_k(E)=i_{k'}(E\setminus \{i\})$, we have that $\{i_p(E\setminus \{i\}) \mid p< k'\} \subseteq \{i_p(E) \mid p< k\}$. Thus, \Cref{lem:fractional-greedy-improvement} implies that for every $j \in E_b$ we have that $x_j(E \setminus \{i\}) \geq x_j(E)$, so we get $\sum_{j \in E_b}v_jx_j(E \setminus \{i\}) \geq \sum_{j \in E_b}v_jx_j(E)$.
    Additionally, we observe that any feasible solution of the linear program \eqref{eq:lp} with a set of items $E\setminus \{i\}$ is feasible for this program with a set of items $E$ as well. Therefore, its optimal value with a set of items $E$ is at least its optimal value with a set of items $E\setminus \{i\}$. Part \ref{item-fractional-greedy-value} of \Cref{pro:fractional-greedy} then yields $\sum_{j \in E}v_jx_j(E) \geq \sum_{j \in E \setminus \{i\}}v_jx_j(E \setminus \{i\})$. Both inequalities together imply part \ref{item:fractional-greedy-hiding-others} of \Cref{lem:fractional-greedy-hiding}.
    
    We directly obtain part \ref{item:fractional-greedy-hiding-liar} by observing that 
    \begin{align*}
        \frac{\sum_{j \in E_a}v_jx_j(E)}{\sum_{j \in E}v_jx_j(E)} &= \frac{\sum_{j \in E} v_jx_j(E) - \sum_{b \in [n] \setminus \{a\}} \sum_{j \in E_b} v_jx_j(E)}{\sum_{j \in E}v_jx_j(E)} \\
        &= 1 - \sum_{b \in [n] \setminus \{a\}} \frac{\sum_{j \in E_b} v_jx_j(E)}{\sum_{j \in E}v_jx_j(E)} \\
        &\geq 1 - \sum_{b \in [n] \setminus \{a\}} \frac{\sum_{j \in E_b}v_jx_j(E \setminus \{i\})}{\sum_{j \in E \setminus \{i\}}v_jx_j(E \setminus \{i\})} \\
        &= \frac{\sum_{j \in E \setminus \{i\}}v_jx_j(E \setminus \{i\}) - \sum_{b \in [n] \setminus \{a\}} \sum_{j \in E_b}v_jx_j(E \setminus \{i\})}{\sum_{j \in E \setminus \{i\}}v_jx_j(E \setminus \{i\})} \\
        &= \frac{\sum_{j \in E_a \setminus \{i\}}v_jx_j(E \setminus \{i\})}{\sum_{j \in E \setminus \{i\}}v_jx_j(E \setminus \{i\})},
    \end{align*}
    where we used part \ref{item:fractional-greedy-hiding-others} of \Cref{lem:fractional-greedy-hiding} for the inequality.
\end{proof}

The previous lemma implies that, if some agent owns at least $2/3$ of the value of the fractional greedy solution, other agents cannot change this fact, while the agent themselves has no incentive to change it. If no agent owns at least $2/3$ of the value of the fractional greedy solution, an agent cannot increase their own value to get over this threshold and \nalgmixspgreedy\ runs \nalgspgreedy, which is strategyproof due to \Cref{lem:spgreedy}.
We are now ready to prove that \nalgmixspgreedy\ is strategyproof and $\smash{\frac{1}{3}}$-approximate.

\begin{proof}[Proof of \Cref{theo:general-det-UB}]
    We claim the result for the \nalgmixspgreedy\ mechanism $\algmixspgreedy$. We first show strategyproofness. Let $E$ be a set of items, $a \in [n]$, and $i \in E_a$. In the following, we distinguish some cases that determine the outcome of \nalgspgreedy~for inputs $E$ and $E\setminus \{i\}$ and show that, for all of them, it holds that $v_a(\algmixspgreedy(E)) \geq v_a(\algmixspgreedy(E \setminus \{i\}))$.

    If there exists $b\in [n] \setminus \{a\}$ such that $\sum_{j \in E_b}v_jx_j(E \setminus \{i\}) \geq \frac{2}{3} \sum_{j \in E \setminus \{i\}}v_jx_j(E \setminus \{i\})$, then $\algmixspgreedy(E\setminus \{i\}) = \OPT(E_b)$ and, therefore,
    \begin{equation} 
        v_a(\algmixspgreedy(E)) \geq 0 = v_a(\algmixspgreedy(E \setminus \{i\})).\label{eq:sp-spgreedy-i}
    \end{equation}
    If there exists $b\in [n] \setminus \{a\}$ such that $\sum_{j \in E_b}v_jx_j(E) \geq \frac{2}{3} \sum_{j \in E}v_jx_j(E)$, part \ref{item:fractional-greedy-hiding-others} of \Cref{lem:fractional-greedy-hiding} implies that
    \begin{equation*}
        \frac{\sum_{j \in E_b}v_jx_j(E \setminus \{i\})}{\sum_{j \in E \setminus \{i\}}v_jx_j(E \setminus \{i\})} \geq \frac{\sum_{j \in E_b}v_jx_j(E)}{\sum_{j \in E}v_jx_j(E)} \geq \frac{2}{3},
    \end{equation*}
    so the chain of inequalities \eqref{eq:sp-spgreedy-i} still holds.
    
    If $\sum_{j \in E_a}v_jx_j(E) \geq \frac{2}{3} \sum_{j \in E}v_jx_j(E)$, then $\algmixspgreedy(E) = \OPT(E_a)$ and we get
    \begin{equation}
        v_a(\algmixspgreedy(E)) = v(\OPT(E_a)) \geq v(\OPT(E_a \setminus \{i\})) \geq v_a(\algmixspgreedy(E \setminus \{i\})).\label{eq:sp-spgreedy-ii}
    \end{equation}
    If $\sum_{j \in E_a \setminus \{i\}}v_jx_j(E \setminus \{i\}) \geq \frac{2}{3} \sum_{j \in E \setminus \{i\}}v_jx_j(E \setminus \{i\})$, we observe by part \ref{item:fractional-greedy-hiding-liar} of \Cref{lem:fractional-greedy-hiding} that
    \begin{align*}
        \frac{\sum_{j \in E_a}v_jx_j(E)}{\sum_{j \in E}v_jx_j(E)} \geq \frac{\sum_{j \in E_a \setminus \{i\}}v_jx_j(E \setminus \{i\})}{\sum_{j \in E \setminus \{i\}}v_jx_j(E \setminus \{i\})} \geq \frac{2}{3}.
    \end{align*}
    Hence, the chain of inequalities \eqref{eq:sp-spgreedy-ii} still holds.
    
    If $\sum_{j \in E_a}v_jx_j(E) < \frac{2}{3} \sum_{j \in E}v_jx_j(E)$ and \mbox{$\sum_{j \in E_b}v_jx_j(E \setminus \{i\}) < \frac{2}{3} \sum_{j \in E \setminus \{i\}}v_jx_j(E \setminus \{i\})$} for every $b\in [n] \setminus \{a\}$,
    then $\algmixspgreedy(E) = \algspgreedy(E)$ and $\algmixspgreedy(E \setminus \{i\}) = \algspgreedy(E \setminus \{i\})$. Therefore,
    \begin{equation*} 
        v_a(\algmixspgreedy(E)) = v_a(\algspgreedy(E)) \geq v_a(\algspgreedy(E \setminus \{i\})) = v_a(\algmixspgreedy(E \setminus \{i\})),
    \end{equation*}
    where the inequality follows from the strategyproofness of \nalgspgreedy~established in \Cref{lem:spgreedy}. We conclude that \nalgmixspgreedy~is indeed strategyproof.
    
    We now prove the approximation guarantee provided by \nalgmixspgreedy. Let $E$ be an arbitrary set of items. We first suppose that there exists $a \in [n]$ with \mbox{$\sum_{i \in E_a}v_ix_i \geq \frac{2}{3} \sum_{i \in E}v_ix_i$}. Recall that $s_i \leq C$ for every $i \in E$ and that we know from part \ref{item-fractional-greedy-size} of \Cref{pro:fractional-greedy} that $\sum_{i \in E_a: x_i=1} s_i \leq C$. Therefore,
    \begin{equation}
        s(\{i \in E_a \mid x_i=1\}) \leq C \text{ and } s(\{i \in E_a \mid 0<x_i<1\}) \leq C,\label{eq:feasible-sets}
    \end{equation}
    where the set $\{i \in E_a: 0<x_i<1\}$ is either empty or a single item. This yields
    \begin{align*}
        v(\algmixspgreedy(E)) & = v(\OPT(E_a)) = \max\{ v(E') \mid E' \subseteq E_a \text{ s.t. } s(E') \leq C\} \\
        & \geq \max\big\{ v(\{i \in E_a \mid x_i=1\}),~v(\{i \in E_a \mid 0<x_i<1\}) \big\} \\
        & \geq \frac{1}{2}\big[v(\{i \in E_a \mid x_i=1\}) + v(\{i \in E_a \mid 0<x_i<1\})\big] \\
        & \geq \frac{1}{2}\sum_{i \in E_a}v_ix_i \geq \frac{1}{2} \cdot \frac{2}{3} \sum_{i \in E}v_ix_i \geq \frac{1}{3}v(\OPT(E)),
    \end{align*}
    where the first inequality follows from expression \eqref{eq:feasible-sets} and the last one from part \ref{item:fractional-greedy-opt} of \Cref{pro:fractional-greedy}. 
    Finally, if $\sum_{i \in E_a}v_ix_i < \frac{2}{3} \sum_{i \in E}v_ix_i$ for every $a\in [n]$, we know in particular that $v_{i_\ell}x_{i_\ell} < \frac{2}{3} \sum_{i \in E}v_ix_i$, thus
    \begin{align*}
        v(\algmixspgreedy(E)) = v(\algspgreedy(E)) \geq \sum_{i \in E: x_i=1} v_i & = \sum_{i \in E}v_ix_i - v_{i_\ell}x_{i_\ell} > \frac{1}{3} \sum_{i \in E}v_ix_i \geq \frac{1}{3}v(\OPT(E)).
    \end{align*}
    Indeed, the first inequality comes from \Cref{lem:spgreedy}, the second inequality from the fact that $\smash{v_{i_\ell}x_{i_\ell} < \frac{2}{3} \sum_{i \in E}v_ix_i}$, and the third inequality from part \ref{item:fractional-greedy-opt} of \Cref{pro:fractional-greedy}. We conclude that \nalgmixspgreedy~is $\frac{1}{3}$-approximate, as claimed.
\end{proof}

We finish by stating the currently best-known bounds for every $n$. The strategyproof and $\frac{1}{n}$-approximate mechanism mentioned at the beginning of this section, that selects the best individual optimum, improves on our $\frac{1}{3}$-approximation for the case of two agents.

\begin{corollary}
\label{cor:det-LB-1n}
    For $n\in \N$, there exists a strategyproof and $\max\{1/3, 1/n\}$-approximate deterministic mechanism on instances with $n$ agents.
\end{corollary}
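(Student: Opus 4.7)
The plan is to combine \nalgmixspgreedy\ from \Cref{theo:general-det-UB}, which already provides a $\frac{1}{3}$-approximation for any $n$, with a simple auxiliary mechanism that gives a $\frac{1}{n}$-approximation, and for each fixed $n$ use whichever has the larger guarantee. Since $\max\{1/3, 1/n\} = 1/n$ for $n \leq 3$ and $\max\{1/3, 1/n\} = 1/3$ for $n \geq 3$, this reduction yields the claimed bound.

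The auxiliary mechanism, which I would call \textsc{BestOwner}, computes the private optimum $\OPT(E_a)$ for each agent $a \in [n]$ and returns $\OPT(E_{a^*})$ for $a^* \in \arg\max_{a \in [n]} v(\OPT(E_a))$, with ties broken by a fixed total order on the agent labels (independent of any reported item). The approximation factor is immediate by decomposing the global optimum along ownership:
\begin{equation*}
v(\OPT(E)) = \sum_{a \in [n]} v(\OPT(E) \cap E_a) \leq \sum_{a \in [n]} v(\OPT(E_a)) \leq n \cdot v(\OPT(E_{a^*})),
\end{equation*}
where the first inequality uses that $\OPT(E) \cap E_a$ is feasible for the knapsack problem on $E_a$, since its size is bounded by $s(\OPT(E)) \leq C$.

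For strategyproofness, I would invoke \Cref{lem:one-item-suffices} and consider a single agent $a$ hiding one item $i \in E_a$. Only $v(\OPT(E_a))$ can change under this deviation, and it can only weakly decrease, while $v(\OPT(E_b))$ is unaffected for every $b \neq a$. If $a$ was not selected on $E$, then some other agent still weakly dominates $a$ under the fixed tie-breaking rule, so $a$'s utility remains $0$. If $a$ was selected on $E$, then either $a$ remains selected but with a weakly smaller private optimum, or another agent overtakes $a$ and $a$'s utility drops to $0$. In all cases the deviation is weakly harmful.

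The only subtle point, which I would call out explicitly, is the cross-agent tie-breaking rule. If ties depended on the reported values or sizes, an agent could in principle break a tie in its favor by hiding an item. Committing to a rule based solely on the (fixed) agent labels rules this out, and the reduction to \Cref{lem:one-item-suffices} becomes entirely routine.
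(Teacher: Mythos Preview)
Your proposal is correct and follows exactly the approach the paper intends: the corollary is stated without a separate proof, relying on \Cref{theo:general-det-UB} for the $\tfrac{1}{3}$ bound and on the ``initial observation'' at the start of \Cref{subsec:general-det} that returning the best individual optimum is strategyproof and $\tfrac{1}{n}$-approximate. You simply spell out the details the paper leaves implicit, including the care about tie-breaking on agent labels, which is appropriate since the paper's uniqueness assumptions do not preclude ties among the values $v(\OPT(E_a))$ across different agents.
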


\subsection{Randomized Mechanisms}\label{subsec:general-rand}

When randomization is allowed, the existence of a strategyproof mechanism providing a constant approximation guarantee was established by \citet{chen2011mechanism}, with the best possible approximation factor known to lie between \mbox{$1/576 \approx 0.0017$}~\citep{chen2011mechanism,choi2021} and $1/(5\phi-7) \approx 0.9173$~\citep{FeigenbaumJ17}. The following result narrows this gap significantly.

\begin{theorem}
    \label{theo:general-rand-UB}
    There exists a strategyproof and $\smash{\frac{1}{2}}$-approximate randomized mechanism.
\end{theorem}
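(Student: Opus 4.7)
The plan is to define a universally strategyproof randomized mechanism $\algrandspgreedy$ that, with probability $1/2$, runs $\algspgreedy$ from \Cref{alg:spgreedy}, and, with probability $1/2$, returns $\{i^\star\}$, where $i^\star = \arg\max_{i \in E} v_i$ is the most valuable single item. Since $s_i \leq C$ for every $i \in E$, both constituent deterministic mechanisms output feasible sets.

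For strategyproofness, by \Cref{lem:spgreedy} the deterministic mechanism $\algspgreedy$ is strategyproof, so it only remains to observe that the single-item mechanism $\scM^\star(E) = \{\arg\max_{i \in E} v_i\}$ is strategyproof as well. Indeed, for any $E$, $a \in [n]$, and $i \in E_a$, if $i \neq \arg\max_{j \in E} v_j$, then $\scM^\star(E \setminus \{i\}) = \scM^\star(E)$ and $v_a(\scM^\star(E \setminus \{i\})) = v_a(\scM^\star(E))$. If $i = \arg\max_{j \in E} v_j$, then $v_a(\scM^\star(E)) = v_i$, whereas $v_a(\scM^\star(E \setminus \{i\})) \in \{0, \max_{j \in E_a \setminus \{i\}} v_j\}$ is at most $v_i$ (it equals $0$ if the max of $E \setminus \{i\}$ is not in $E_a$). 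Hence $\scM^\star$ is strategyproof, and $\algrandspgreedy$, being a lottery over strategyproof deterministic mechanisms, is universally strategyproof.

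For the approximation guarantee, fix an arbitrary instance with items $E$ and let $x$ be its fractional greedy solution. Write $G = \sum_{i \in E: x_i = 1} v_i$ for the value of the integral greedy solution and let $i^\star = \arg\max_{i \in E} v_i$. By \Cref{lem:spgreedy}, $v(\algspgreedy(E)) \geq G$. On the other hand, either all components of $x$ are integral, in which case the sole fractional contribution to $\sum_{i \in E} v_i x_i$ vanishes, or there is a unique fractional item $i_{\ell+1}$ with $0 < x_{i_{\ell+1}} < 1$, and then
\begin{equation*}
v_{i_{\ell+1}} x_{i_{\ell+1}} \leq v_{i_{\ell+1}} \leq v_{i^\star}.
\end{equation*}
In either case, $\sum_{i \in E} v_i x_i \leq G + v_{i^\star}$. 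Combining this with \Cref{pro:fractional-greedy}\ref{item:fractional-greedy-opt} yields $G + v_{i^\star} \geq v(\OPT(E))$, and therefore
\begin{equation*}
\BigE{v(\algrandspgreedy(E))} \geq \tfrac{1}{2}\, v(\algspgreedy(E)) + \tfrac{1}{2}\, v_{i^\star} \geq \tfrac{1}{2}(G + v_{i^\star}) \geq \tfrac{1}{2}\, v(\OPT(E)).
\end{equation*}

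The argument is essentially clean; there is no real obstacle beyond correctly bounding the fractional item's contribution by $v_{i^\star}$ and invoking the two already-proven facts (strategyproofness of $\algspgreedy$ and the pointwise bound $v(\algspgreedy(E)) \geq G$). The only subtlety is justifying that a lottery over strategyproof deterministic mechanisms is strategyproof in expectation (which is built into the definition of universal strategyproofness used in the paper), so the bound carries over from the per-realization guarantees to $\E{v(\algrandspgreedy(E))}$.
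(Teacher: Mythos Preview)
Your proposal is correct and follows essentially the same approach as the paper: the mechanism, the strategyproofness argument (invoking \Cref{lem:spgreedy} for $\algspgreedy$ and directly verifying the single-item mechanism), and the approximation bound via $G + v_{i^\star} \geq \sum_{i} v_i x_i \geq v(\OPT(E))$ all match. Your treatment is arguably a touch cleaner in that you explicitly handle the case where no fractional item exists.
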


In order to achieve this approximation, we rely once again on \nalgspgreedy. The idea is now simpler: As this mechanism guarantees a value equal to the value of the integral greedy solution, and the optimal solution is at most the value of the integral greedy solution plus the value of the fractional item, we return either the set of items given by \nalgspgreedy~or the most valuable item overall, each with probability $1/2$. We call the resulting mechanism \nalgrandspgreedy~($\algrandspgreedy$) and describe it as \Cref{alg:randspgreedy}. 
\begin{algorithm}[t]
\caption{\nalgrandspgreedy~$(\algrandspgreedy)$}\label{alg:randspgreedy}
\begin{algorithmic}
\Require set of items $E$ (partitioned into $E_1,\ldots,E_n$ with values $v$ and sizes $s$), capacity $C$
\Ensure subset $E^* \subseteq E$ with $s(E^*) \leq C$
\State $X \gets \text{Bernoulli}(1/2)$
\If{$X = 1$}
    \State \textbf{return } $\algspgreedy(E)$
\Else
    \State \textbf{return } $\arg\max \{v_i \mid i \in E\}$
\EndIf
\end{algorithmic}
\end{algorithm}

The strategyproofness of \nalgrandspgreedy~follows directly from the fact that both \nalgspgreedy~and the mechanism that always returns the most valuable item are strategyproof. The approximation guarantee is proven by noting that at least half of the value of the fractional greedy solution is returned in expectation. We now present the formal proof.

\begin{proof}[Proof of \Cref{theo:general-rand-UB}]
    We claim the result for \nalgrandspgreedy. Strategyproofness of \nalgspgreedy~was already established in \Cref{lem:spgreedy}. Moreover, for a set of items $E$, an agent $a\in [n]$, and $i \in E_a$, we have that \mbox{$v_a(\arg\max \{v_j \mid j \in E\}) \geq v_a(\arg\max \{v_j \mid j \in E \setminus \{i\} \})$}. Hence, \nalgrandspgreedy~is a lottery over two strategyproof mechanisms.

    To prove the approximation guarantee, we consider an arbitrary set of items $E$ and observe that
    \begin{align*}
        \E{v(\algrandspgreedy(E))} & \geq \frac{1}{2} ( v(\algspgreedy(E)) + v(\arg\max \{v_j \mid j \in E\}) ) \geq \frac{1}{2} \bigg( \sum_{i \in E: x_i=1} v_i + v_{i_\ell} \bigg) \\
        &\geq \frac{1}{2} \bigg( \sum_{i \in E: x_i=1} v_i + v_{i_\ell}x_{i_\ell} \bigg) = \frac{1}{2} \sum_{i \in E} v_ix_i \geq \frac{1}{2} \OPT(E),
    \end{align*}
    where the second inequality follows from \Cref{lem:spgreedy}, the third one from the fact that $x_{i_\ell} \leq 1$, and the last one from part \ref{item:fractional-greedy-opt} of \Cref{pro:fractional-greedy}.
\end{proof}

\section{Unit-Density Knapsack}\label{sec:unit-density}

In this section, we address the unit-density setting, in which each item's value equals its size. We improve the performance of deterministic mechanisms, providing a tight approximation factor equal to the inverse of the golden ratio, i.e., $\smash{1/\phi = 2/(1+\sqrt{5}) \approx 0.618}$. We further give a randomized mechanism that builds upon the deterministic one and guarantees a \mbox{$\smash{\frac{2}{3}}$-approximation}, thus showing a strict separation between the best-possible approximation factors that deterministic and randomized mechanisms can achieve in this setting.

Both results are based on a family of mechanisms we call \nalgud, parameterized on a value $\beta \in [1/2, 2/3]$. For a specific value of $\beta$, we denote its output for a set of items $E$ as $\algud_\beta(E)$. In simple terms, the mechanism first checks whether a single agent can provide an approximation factor of $\beta$ with their individual optimum and returns the best individual optimum in such a case. Otherwise, it defines $i^*$ as the most valuable item that can be packed with any item that has less value and is owned by another agent. It then considers the set $R$, containing $\{i^*\}$ and all items that, on their own, fit together with $i^*$; we often refer to this set as the \textit{restricted} set of items. Finally, it runs a variant of the \nalgspgreedy~mechanism from \Cref{subsec:general-det}, described as \Cref{alg:rspgreedy}, that sorts the items by value and only considers the restricted set of items when assigning knapsack capacity. We call this auxiliary mechanism \nalgrspgreedy~and abbreviate it as $\algrspgreedy$. Note that in the description of the mechanisms, as well as throughout this section, we make use of the fact that $v_i=s_i$ for any item $i$ and only refer to values.
\begin{algorithm}[t]
\caption{\nalgud~with parameter $\beta\in [1/2, 2/3]$ $(\algud_\beta)$}\label{alg:ud}
\begin{algorithmic}
\Require set of items $E$ (partitioned into $E_1,\ldots,E_n$ with values $v$ and sizes $s$), capacity $C$
\Ensure subset $E^* \subseteq E$ with $s(E^*) \leq C$
\If{$\max_{a \in [n]} v(\OPT(E_a)) \geq \beta\cdot C$}
    \State $a' \gets \arg\max_{a \in [n]} v(\OPT(E_a))$
    \State \textbf{return} $\OPT(E_{a'})$
\EndIf
\State $P \gets \{i \in E \mid v_i+v_j \leq C \text{ for every } j \in E \setminus E_{a(i)} \text{ with } v_i > v_j \}$
\State $i^* \gets \arg\max \{v_i \mid i \in P \}$
\State $R \gets \{ i^* \} \cup \{i \in E \mid v_{i^*} + v_i \leq C\}$
\State \textbf{return} $\algrspgreedy(E, R)$
\end{algorithmic}
\end{algorithm}
\begin{algorithm}[t]
\caption{\nalgrspgreedy~$(\algrspgreedy)$}\label{alg:rspgreedy}
\begin{algorithmic}
\Require set of items $E$ (partitioned into $E_1,\ldots,E_n$ with values $v$ and sizes $s$), subset $Q \subseteq E$, capacity~$C$
\Ensure subset $E^* \subseteq E$ with $s(E^*) \leq C$
\State $x(Q) \gets \text{fractional greedy solution for items $Q$}$
\State for every $a \in [n]$, $E^*_a \gets \arg\max\big\{ v(E')~\big|~E'\subseteq E_a \text{ s.t. } v(E') \leq \sum_{i\in Q_a} v_i x_i(Q)\big\}$
\State \textbf{return} $\bigcup_{a\in [n]} E^*_a$
\end{algorithmic}
\end{algorithm}

We need some notation to analyze the mechanism. Let $P(E)$ and $R(E)$ denote the sets $P$ and $R$ defined in \nalgud~when run with input $E$. Let $i^*(E)$ denote the item $i^*$ when running this mechanism with input $E$ and let $a^*(E)$ denote its owner. Although these values do not only depend on $E$ but also on $\beta$, we will omit this dependence and make $\beta$ clear from the context. As usual, we will omit the argument $E$ when clear from the context.

The following lemma states the main properties of this family of mechanisms, providing the main ingredient to show our positive results for both deterministic and randomized mechanisms.

\begin{lemma}\label{lem:alg-ud-param}
    For $\beta \in [1/2, 2/3]$, \nalgud~with parameter $\beta$ is strategyproof and $\smash{\min \big\{ \beta, \frac{1-\beta}{\beta} \big\}}$-approximate on $\calI_{\mathrm{UD}}$. Moreover, if $v(\algud_\beta(E)) < \beta\cdot v(\OPT(E))$ holds for some set of items $E$ of an instance in $\calI_{\mathrm{UD}}$, then $v(R(E))<\beta\cdot C$ and $v_{i^*(E)} < \beta\cdot \max\{v_i \mid i\in E\}$.
\end{lemma}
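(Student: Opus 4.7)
The plan is to split the proof into three parts---strategyproofness, the approximation factor, and the ``moreover'' structural claim---while case-splitting throughout on which branch of $\algud_\beta$ is executed.

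For strategyproofness, since $\calI_{\mathrm{UD}}$ is closed under inclusion, \Cref{lem:one-item-suffices} reduces the analysis to single-item deletions. Fix an instance $E$, an agent $a$, and $i \in E_a$. If the first branch runs on $E$ with a winning agent $a' \neq a$, then $v(\OPT(E_{a'}))$ is unchanged by hiding $i$ and, since $v(\OPT(E_a \setminus \{i\})) \leq v(\OPT(E_a)) < v(\OPT(E_{a'}))$, the first branch still triggers with an agent $\neq a$ on $E \setminus \{i\}$, giving $v_a = 0$ both times. If the first branch runs on $E$ with $a' = a$, the bound follows from monotonicity together with the fact that the second branch awards agent $a$ at most $v(\OPT(E_a \setminus \{i\}))$. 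The first branch cannot trigger on $E \setminus \{i\}$ without also triggering on $E$. The only delicate case is when both inputs execute the second branch: the outputs are $\algrspgreedy(E, R(E))$ versus $\algrspgreedy(E\setminus\{i\}, R(E\setminus\{i\}))$, and the key step is to show that the quota $q_a = \sum_{j \in R_a} v_j x_j(R)$ assigned to $a$ weakly dominates its analog $q'_a$ after hiding. I intend to establish this by subcase analysis on whether $i$ equals $i^*$, lies in $R \setminus \{i^*\}$, or is outside $R$, and on whether the identity of $i^*$ changes; the central intuition is that hiding an item of $a$ can only cause $i^*$ to remain the same or to shift to an item of smaller value, and the induced change in $R$ is lexicographically adverse to $a$'s quota, via an argument in the spirit of \Cref{lem:fractional-greedy-hiding}. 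Once $q_a \geq q'_a$ is established, the inclusion $E_a \supseteq E_a \setminus \{i\}$ completes the bound.

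For the approximation factor, the first branch trivially returns value $\geq \beta C \geq \beta v(\OPT(E))$. If the second branch runs, every $v_i < \beta C$, and by the analog of \Cref{lem:spgreedy} the value of $\algrspgreedy(E,R)$ is at least the integral greedy value on $R$, which equals $v(R)$ when $v(R) \leq C$ and exceeds $(1-\beta)C$ when $v(R) > C$ (using $v_{i_{\ell(R)+1}(R)} < \beta C$). A case analysis on the magnitude of $v(R)$ compared to $C$ and $\beta C$, combined with $v(\OPT(E)) \leq C$ and the fact that $a(i^*)$ can pack $i^*$ whenever $x_{i^*}(R) = 1$, yields the claimed $\min\{\beta, (1-\beta)/\beta\}$-approximation.

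For the ``moreover'' claim, assume $v(\algud_\beta(E)) < \beta v(\OPT(E))$, which forces the second branch. If $v(R) \geq \beta C$: in the subcase $v(R) \leq C$ we have $v(\algud_\beta(E)) \geq v(R) \geq \beta C \geq \beta v(\OPT(E))$, a contradiction; in the subcase $v(R) > C$ a refined analysis of $\algrspgreedy$---exploiting that each agent solves a knapsack problem over their full item set $E_a$, not merely over $R_a$---yields the same bound. Hence $v(R) < \beta C$, which in particular forces $v(R) < C$ and $x_{i^*}(R) = 1$, so $a(i^*)$ packs $i^*$ in $\algrspgreedy$ and $v(\algud_\beta(E)) \geq v_{i^*}$. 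If additionally $v_{i^*} \geq \beta v_{i_1(E)}$, a further structural argument (using $v(R) < \beta C$ to limit the value $\OPT$ can extract from items outside $R$) shows $v(\algud_\beta(E)) \geq \beta v(\OPT(E))$, again contradicting the bad-approximation assumption. Therefore $v_{i^*} < \beta v_{i_1(E)}$. The main obstacle is the strategyproofness analysis in the both-second-branch case: a single deletion can alter both $R$ and $i^*$, and a careful lexicographic-dominance argument across multiple subcases is needed to show that agent $a$'s induced quota cannot grow. A secondary obstacle is establishing the $(1-\beta)/\beta$ part of the approximation for $\beta > 1/2$, where the integral-greedy estimate alone falls short and one must exploit each agent's ability to pack items outside the restricted set $R$.
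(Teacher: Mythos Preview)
Your overall decomposition matches the paper's closely, and the branch analysis for the first-branch cases of strategyproofness is fine. There are, however, two substantive issues.

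\textbf{The stated intuition for the quota comparison is false.} You write that hiding an item of $a$ ``can only cause $i^*$ to remain the same or to shift to an item of smaller value.'' This is not true, and the cases where it fails are precisely the delicate ones. If $a \neq a^*(E)$ and the hidden item $i$ happens to be the sole witness preventing a larger item $j \in E_{a^*(E)}$ from lying in $P(E)$ (i.e., $v_j > v_{i^*(E)}$ and $v_j + v_i > C$ but $v_j + v_k \leq C$ for all other smaller $k \notin E_{a^*(E)}$), then deleting $i$ promotes $j$ to $i^*(E\setminus\{i\})$, which is strictly larger than $i^*(E)$. The same phenomenon occurs when $i = i^*(E)$ itself: the new $i^*$ can be a larger item of another agent that was blocked only by $i$. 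The paper handles exactly these two scenarios as separate cases in \Cref{lem:dominance-deletion}. The reason the lexicographic-dominance argument still goes through is not that $i^*$ shrinks, but that whenever $i^*$ grows it provably moves to an item owned by an agent other than $a$; thus the change to $R$ replaces an item of $R \setminus R_a$ by a larger one (or adds a larger one), which is adverse to $a$'s quota. Establishing that the new $i^*$ is not in $E_a$ in these cases is itself nontrivial and relies on \Cref{lem:items-not-fitting}\ref{item:i-star-not-fitting}. Your case split (``$i = i^*$, $i \in R \setminus \{i^*\}$, $i \notin R$'') does not isolate the $a = a^*$ versus $a \neq a^*$ distinction, which is what actually drives the argument.

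\textbf{The $(1-\beta)C$ estimate is too weak for the case $v(R) > C$.} You bound the integral greedy value on $R$ below by $(1-\beta)C$ using only $v_{i_{\ell(R)+1}} < \beta C$. But $(1-\beta) < \min\{\beta,(1-\beta)/\beta\}$ for every $\beta \in (1/2,2/3]$, so this does not deliver the claimed ratio. The paper instead exploits the structural fact that the two largest items of $R$ fit together (because $i^* = i_1(R)$ by \Cref{lem:i-star-largest} and every other item of $R$ fits with $i^*$ by construction), so $\ell(R) \geq 2$. This forces the fractional item to have value below $(C - v_{i^*})/2$, yielding the sharper bound $v(\algrspgreedy(E,R)) \geq \max\{C - v_{i^*}, (C+v_{i^*})/2\} \geq \tfrac{2}{3}C \geq \beta\, v(\OPT(E))$; see \Cref{lem:r(s)-large}. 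Your proposed fix of ``exploiting each agent's ability to pack items outside $R$'' is a different direction and is not what is needed here. Separately, for the subcase $v(R) < \beta C$ with $i^* \notin \OPT(E)$, the paper's route to $(1-\beta)/\beta$ hinges on the fact that $|\OPT(E)\setminus R| \leq 1$ (\Cref{lem:items-not-fitting}\ref{item:deleted-items-not-fitting}) together with $v_{i^*} \geq (1-\beta)C$; your sketch does not surface either ingredient.
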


Most of this section is devoted to proving the previous lemma; we then exploit to develop a deterministic mechanism in \Cref{subsec:unit-density-det} and a randomized mechanism in \Cref{subsec:unit-density-rand}.
In what follows, we separately address the strategyproofness and the approximation guarantee provided by \nalgud. 

Establishing the strategyproofness of \nalgud\ constitutes the most demanding task and is shown in several steps. The first condition in \Cref{alg:ud} that checks whether an agent provides a $\beta$-approximation on their own cannot lead to incentives to hide any item, as an agent satisfying this condition can, upon deletion of one of their items, either still satisfy the condition but have a weakly worse optimum, or stop satisfying the condition (and no other solution will provide more value to this agent than their own optimum). If this condition is not fulfilled, we show that no agent $a$ can benefit from hiding an item $i\in E_a$; \Cref{lem:one-item-suffices} allows us to conclude. In order to do this, we introduce a notion of $a$-\textit{dominance} between sets of items that implies that the total value of the items owned by agent $a$ in the fractional greedy solution is higher in the dominating set than in the dominated set. We then show that this notion captures the relation between the restricted sets of items before and after an item of $E_a$ is hidden.

More specifically, for sets of items $E$ and $E'$ and $a \in [n]$, we say that \textit{$E$ $a$-dominates $E'$} if
\begin{enumerate}[label=(\roman*)]
    \item $|E_a|\geq |E'_a|$ and $|E\setminus E_a| \leq |E'\setminus E'_a|$;\label{item:a-dominance-i}
    \item for every $k\in [|E'_a|]$, it holds that $v_{i_k(E_a)} \geq v_{i_k(E'_a)}$;\label{item:a-dominance-ii}
    \item for every $k\in [|E\setminus E_a|]$, it holds that $v_{i_k(E\setminus E_a)} \leq v_{i_k(E'\setminus E'_a)}$.\label{item:a-dominance-iii}
\end{enumerate}
In simple words, $E$ $a$-dominates $E'$ if the values of the items in $E_a$ are lexicographically larger than the values of the items in $E'_a$, and the values of all items of agents other than $a$ in $E$ are lexicographically smaller than the values of all items of agents other than $a$ in $E'$.\footnote{We informally refer to lexicographic comparisons in the intuitive way, comparing the values in decreasing order. For instance, that the values of the items in $E_a$ are lexicographically larger than the values of the items in $E'_a$ means that the $k$th most valuable item in $E_a$ is at least as valuable as the $k$th most valuable item in $E'_a$, for every $k \in [ |E'_a| ]$.}
Crucially, this definition of $a$-dominance of a set of items $E$ over another set $E'$ implies that agent $a$ cannot be better off when running \nalgrspgreedy~on $E'$ than on $E$, because the value of the items owned by this agent in the fractional greedy solution for $E'$ is at most the value of the items owned by this agent in the fractional greedy solution for $E$.

\begin{lemma}
    \label{lem:dominance-implication}
    Let $E$ and $E'$ be sets of items of instances in $\calI_{\mathrm{UD}}$ and $a \in [n]$ be an agent such that $E$ $a$-dominates $E'$. Then, $\sum_{i \in E_a} v_i x_i(E) \geq \sum_{i \in E'_a} v_i x_i(E')$.
\end{lemma}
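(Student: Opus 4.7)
The plan is to prove $T_a(F) := \sum_{i \in F_a} v_i x_i(F)$ satisfies $T_a(E) \geq T_a(E')$ by transforming $E'$ into $E$ through a finite sequence of elementary set modifications, each of which does not decrease $T_a$. The four kinds of modifications I would use are: (a) adding a single $a$-item; (b) removing a single non-$a$-item; (c) replacing an $a$-item with another $a$-item of weakly larger value; and (d) replacing a non-$a$-item with another non-$a$-item of weakly smaller value. Under the dominance conditions (i)--(iii), such a sequence exists: first raise each of the first $|E'_a|$ values of $E'_a$ up to the corresponding value in $E_a$ by applications of (c); next add the remaining $|E_a|-|E'_a|$ items of $E_a$ via (a); then lower each of the first $|E \setminus E_a|$ values of $E' \setminus E'_a$ down to the corresponding value in $E \setminus E_a$ by applications of (d); and finally remove the surplus $|E' \setminus E'_a| - |E \setminus E_a|$ non-$a$-items via (b).

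Monotonicity of $T_a$ under (a) and (b) is essentially the content of the strategyproofness argument for \nalgspgreedy{} in \Cref{lem:spgreedy}: by \Cref{lem:fractional-greedy-improvement}, adding an $a$-item or removing a non-$a$-item can only weakly decrease, for every remaining $a$-item $k$, the total size of items that sit ahead of $k$ in the greedy order; hence $x_k$ weakly increases and summing over $a$-items yields that $T_a$ does not decrease. Operation (d) can be handled by the same direct argument: for every $a$-item $k$, a short calculation of the change in $\sigma(k, F) := \sum_{j \in F \setminus \{k\} : v_j > v_k} v_j$ shows that $\sigma(k,\cdot)$ weakly decreases, and \Cref{lem:fractional-greedy-improvement} again gives that $x_k$ weakly increases.

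The main obstacle is monotonicity under (c), where the $\sigma(k, F)$ values for $a$-items $k$ other than the one being replaced can \emph{increase} (so that some individual $x_k$ can drop), even though the contribution of the new, more valuable item $i^+$ itself is larger than the old contribution of $i$. I would resolve this by a case distinction on whether the total value of the set is at least $C$ before and after the modification. When the pre-modification set satisfies $v(E') < C$, every item of $E'$ fits and $T_a(E') = v_a(E')$; the inequality then follows because after the modification either $T_a$ is again $v_a$ (larger) or, in the case $v(\tilde E) > C$, $T_a(\tilde E) \geq C - v_{\bar a}(\tilde E) = C - v_{\bar a}(E') > v_a(E')$. When both sides satisfy $v(\cdot) \geq C$, the knapsack is fully packed on both sides, so $T_a + T_{\bar a} = C$ and it suffices to prove the complementary inequality $T_{\bar a}(\tilde E) \leq T_{\bar a}(E')$; a direct computation of how $\sigma(l,\cdot)$ changes for each non-$a$-item $l$ under (c) shows that $\sigma(l,\cdot)$ weakly increases, whence \Cref{lem:fractional-greedy-improvement} yields that $x_l$ weakly decreases for every non-$a$-item $l$ and therefore $T_{\bar a}$ weakly decreases. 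Composing these monotone steps along the sequence from $E'$ to $E$ then yields the desired inequality.
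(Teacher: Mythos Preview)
Your approach is correct but organized differently from the paper. The paper first proves a batch monotonicity statement (\Cref{lem:larger-sizes-implication}): if the items of a subset $A$ of agents lexicographically increase while the other agents' items stay fixed, then $\sum_{i\in F_A} v_i x_i(F)$ weakly increases. It then introduces a single intermediate set $D = E_a \cup (E' \setminus E'_a)$ and applies this lemma twice --- once with $A = [n]\setminus\{a\}$ to compare $E$ and $D$, once with $A = \{a\}$ to compare $E'$ and $D$ --- followed by the same $T_a + T_{\bar a} = C$ bookkeeping you use for operation~(c). Your item-by-item decomposition proves the same thing in smaller steps; it is more elementary (no auxiliary lemma needed) but multiplies the case analysis and forces you to reason about intermediate sets that mix items from $E$ and $E'$, which need not have pairwise distinct values and hence need not have a well-defined greedy order --- a wrinkle you do not address, though it does not arise in the lemma's actual applications.

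One correction: your stated justification for operation~(a) is wrong. Adding an $a$-item $i^+$ can \emph{increase} $\sigma(k,\cdot)$ for every $a$-item $k$ with $v_k < v_{i^+}$, so individual $x_k$'s may drop; the claim that $\sigma(k,\cdot)$ weakly decreases for every remaining $a$-item holds for~(b) but not for~(a). Your citation of \Cref{lem:spgreedy} is nonetheless the right one: that lemma handles~(a) precisely via the complementary $T_{\bar a}$ argument (with the case split on total size versus $C$) that you later spell out for~(c), not via pointwise monotonicity of the $x_k$. In effect, (a) should be grouped with~(c), not with~(b) and~(d).
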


To prove this lemma, we first show a weaker result: If the value of the items of any subset of agents~$A$ lexicographically increases from $E$ to a set $D$ and the items of all other agents remain the same, then the total value of the items owned by agents in $A$ in the fractional greedy solution cannot be smaller in $D$ than in $E$.

\begin{lemma}
    \label{lem:larger-sizes-implication}
    Let $E$ and $D$ be sets of items of instances in $\calI_{\mathrm{UD}}$ and let $A \subseteq [n]$ be a subset of agents such that $|E_A|\leq |D_A|$, $E\setminus E_A = D \setminus D_A$, and for every $k\in [ |E_A| ]$ it holds that $v_{i_k(E_A)} \leq v_{i_k(D_A)}$. Then 
    \begin{equation*}
        \sum_{i \in E_A} v_ix_i(E) \leq \sum_{i \in D_A} v_ix_i(D).
    \end{equation*}
\end{lemma}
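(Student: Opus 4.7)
The plan is to work with the common part $B := E \setminus E_A = D \setminus D_A$ and use the identity $\sum_{i\in F} v_i x_i(F) = \min\{C, v(F)\}$ from \Cref{pro:fractional-greedy}\ref{item-fractional-greedy-size} (recall $s_i = v_i$ in $\calI_{\mathrm{UD}}$) to rewrite both quantities of interest. Indeed,
\begin{equation*}
\sum_{i\in E_A} v_i x_i(E) = \min\{C, v(E)\} - \sum_{i\in B} v_i x_i(E),
\end{equation*}
and likewise for $D$. The hypotheses immediately give $v(E_A) \leq v(D_A)$ and hence $v(E) \leq v(D)$, so $\min\{C, v(E)\} \leq \min\{C, v(D)\}$. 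The whole proof will then amount to a case split on the position of $v(E), v(D)$ relative to $C$.

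First I would dispatch the easy cases. If $v(E) \leq C$, then $x_i(E) = 1$ for every $i\in E$, so the left-hand side equals $v(E_A)$. If additionally $v(D)\leq C$, the right-hand side equals $v(D_A) \geq v(E_A)$ and we are done. If instead $v(D) > C$, I would combine $\sum_{i\in D}v_i x_i(D) = C$ with the bound $\sum_{i\in B} v_i x_i(D) \leq v(B)$ and the assumption $v(E_A) + v(B) = v(E) \leq C$ to obtain $\sum_{i\in D_A} v_i x_i(D) = C - \sum_{i\in B} v_i x_i(D) \geq C - v(B) \geq v(E_A)$.

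The substantive case is $v(E) > C$ (hence also $v(D) > C$), where both $\min$-terms equal $C$ and the claim reduces to $\sum_{i\in B} v_i x_i(D) \leq \sum_{i\in B} v_i x_i(E)$. Here I would invoke \Cref{lem:fractional-greedy-improvement} on each $b\in B$ separately. In the unit-density setting the greedy order is simply the decreasing-value order, so the items preceding $b$ in the greedy order of an instance $F\in\{E,D\}$ are exactly those of $F$ with value larger than $v_b$, and the total size of that prefix equals $\sum_{i\in B,\,v_i>v_b} v_i + \sum_{i\in F_A,\,v_i>v_b} v_i$. The $B$-parts agree, so it is enough to compare the $A$-parts. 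Writing $E_A$ and $D_A$ in decreasing value order as $a_1 > \dots > a_{|E_A|}$ and $a'_1 > \dots > a'_{|D_A|}$, with $a'_j \geq a_j$ for $j \leq |E_A|$ and $|D_A| \geq |E_A|$, if $m$ is the largest index with $a_m > v_b$ then $a'_j \geq a_j > v_b$ for all $j\leq m$, and hence $\sum_{j : a'_j > v_b} a'_j \geq \sum_{j\leq m} a'_j \geq \sum_{j\leq m} a_j = \sum_{j : a_j > v_b} a_j$. Thus the prefix before $b$ has weakly larger total size in $D$ than in $E$, and \Cref{lem:fractional-greedy-improvement} yields $x_b(D) \leq x_b(E)$; summing over $b\in B$ closes the case.

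The only nontrivial step is the prefix-size comparison in the last case, and even there the lex-dominance of $D_A$ over $E_A$ makes it a short combinatorial observation; the rest is routine case analysis together with a single appeal to \Cref{lem:fractional-greedy-improvement}.
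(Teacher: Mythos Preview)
Your proof is correct and follows essentially the same approach as the paper: both arguments split on the position of $v(E)$ and $v(D)$ relative to $C$, and in the nontrivial case reduce the claim to $x_b(D)\leq x_b(E)$ for all $b\in B$ via the same prefix-size comparison and an appeal to \Cref{lem:fractional-greedy-improvement}. The only cosmetic difference is that the paper establishes the pointwise inequality $x_b(E)\geq x_b(D)$ upfront and then does the case split, whereas you defer it to the final case; the combinatorial content is identical.
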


\begin{proof}
Let $E,~ D$, and $A$ be as in the statement. Observe that, for every $k \in [ |E \setminus E_A |]$, it holds that
\begin{align*}
    \sum_{j \in E: v_j > v_{i_k(E \setminus E_A)}} v_j & = \sum_{j \in E_A: v_j > v_{i_k(E \setminus E_A)}} v_j + \sum_{j \in E \setminus E_A: v_j > v_{i_k(E \setminus E_A)}} v_j \\
    & \leq \sum_{j \in D_A: v_j > v_{i_k(D \setminus D_A)}} v_j + \sum_{j \in D \setminus D_A: v_j > v_{i_k(D \setminus D_A)}} v_j \\
    & = \sum_{j \in D: v_j > v_{i_k(D \setminus D_A)}} v_j,
\end{align*}
where the inequality follows from the hypotheses in the statement. Applying \Cref{lem:fractional-greedy-improvement}, this yields $x_{i_k(E \setminus E_A)}(E) \geq x_{i_k(D \setminus D_A)}(D)$ for every $k \in [ |E \setminus E_A |]$ or, in simpler terms, $x_i(E) \geq x_i(D)$ for every $i \in E \setminus E_A$. 

We shall now distinguish three cases. If $\sum_{i\in D}v_i < C$, then we know from the statement that $\sum_{i\in E}v_i < C$ holds as well and thus
\begin{equation*}
    \sum_{i \in E_A} v_ix_i(E) = \sum_{i \in E_A} v_i \leq \sum_{i \in D_A} v_i = \sum_{i \in D_A} v_ix_i(D),
\end{equation*}
where both equalities follow from part \ref{item-fractional-greedy-size} of \Cref{pro:fractional-greedy} and the inequality from the statement. If $\sum_{i\in E}v_i < C \leq \sum_{i\in D} v_i$, then
\begin{equation*}
    \sum_{i \in E_A} v_ix_i(E)
    = \sum_{i \in E} v_i - \sum_{i \in E \setminus E_A} v_i \leq C - \sum_{i \in E \setminus E_A} v_i 
    \leq C - \sum_{i \in D \setminus D_A} v_ix_i(D) = \sum_{i \in D_A} v_ix_i(D),
\end{equation*}
where the first and last equalities follow from part \ref{item-fractional-greedy-size} of \Cref{pro:fractional-greedy}, and the inequalities from the hypothesis that $\sum_{i \in E} v_i \leq C$ and the fact that $x_i(D)\leq 1$ for every $i\in D$. Finally, if $\sum_{i\in E}v_i \geq C$, then we know from the statement that $\sum_{i\in D}v_i \geq C$ holds as well and we can use part \ref{item-fractional-greedy-size} of \Cref{pro:fractional-greedy} together with the fact that $x_i(E) \geq x_i(D)$ for every $i \in E \setminus E_A$ to obtain
\begin{equation*}
    \sum_{i \in E_A} v_ix_i(E) = C - \sum_{i \in E\setminus E_A} v_ix_i(E) \leq C - \sum_{i \in D \setminus D_A} v_ix_i(D) = \sum_{i \in D_A} v_ix_i(D).\qedhere
\end{equation*}
\end{proof}

We now apply the previous result twice to conclude \Cref{lem:dominance-implication}, once taking $A=\{a\}$ and once taking $A = [n] \setminus \{a\}$.

\begin{proof}[Proof of \Cref{lem:dominance-implication}]
    
Let $E,~ E'$, and $a$ be as in the statement. We let
    \begin{equation*}
        D = E_a \cup (E'\setminus E'_a)
    \end{equation*}
    be a set with the items of $a$ that belong to $E$ and the items of all other agents that belong to $E'$. Note that $E_a = D_a$ and $E' \setminus E'_a = D \setminus D_a$ by definition of $D$. 

    We first relate $E$ and $D$ with $A = [n] \setminus \{a\}$. We have $|E \setminus E_A| = |E_a| = |D_a| = |D \setminus D_A|$ and we have $|E_A| = |E \setminus E_a| \leq |E' \setminus E'_a| = |D \setminus D_a| = |D_A|$ by the assumption of $a$-dominance. Additionally, from the definition of $a$-dominance, we have that for every $k \in [ |E \setminus E_a| ]$ it holds that $v_{i_k(E \setminus E_a)} \leq v_{i_k(E' \setminus E'_a)} = v_{i_k(D \setminus D_a)}$. Therefore, \Cref{lem:larger-sizes-implication} implies that 
    \begin{equation}
        \sum_{i \in E \setminus E_a} v_i x_i(E) \leq \sum_{i \in D \setminus D_a} v_i x_i(D).\label{eq:relation-E-F}
    \end{equation}

    We now relate $E'$ and $D$ with $A = \{a\}$. We have $ |E' \setminus E'_A| = |D \setminus D_A|$ since $A = \{a\}$ and we have $|E'_A| = |E'_a| \leq |E_a| = |D_a| = |D_A|$ by the assumption of $a$-dominance. Again, from the definition of $a$-dominance, for every $k \in [ |E'_a| ]$ we have \mbox{$v_{i_k(E'_a)} \leq v_{i_k(E_a)} = v_{i_k(D_a)}$}. Therefore, \Cref{lem:larger-sizes-implication} implies that
    \begin{equation}
        \sum_{i \in E'_a} v_i x_i(E') \leq \sum_{i \in D_a} v_i x_i(D).\label{eq:relation-F-E'}
    \end{equation}

    If $\sum_{i\in E}v_i <C$, part \ref{item-fractional-greedy-size} of \Cref{pro:fractional-greedy} implies that $x_i(E)=1$ for every $i\in E$, thus inequality \eqref{eq:relation-F-E'} yields
    \begin{equation*}
        \sum_{i \in E_a} v_i x_i(E) = \sum_{i \in D_a} v_i \geq \sum_{i \in D_a} v_i x_i(D) \geq \sum_{i \in E'_a} v_i x_i(E'),
    \end{equation*}
    as claimed. If, on the other hand, $\sum_{i\in E}v_i \geq C$, then the definition of $D$ and the fact that $E$ \mbox{$a$-dominates} $E'$ implies that $\sum_{i\in D}v_i \geq C$ as well. Therefore, applying part \ref{item-fractional-greedy-size} of \Cref{pro:fractional-greedy} together with inequalities \eqref{eq:relation-E-F} and \eqref{eq:relation-F-E'}, we obtain that
    \begin{align*}
        \sum_{i \in E_a} v_i x_i(E) & = C - \sum_{i \in E \setminus E_a} v_i x_i(E) \geq C - \sum_{i \in D \setminus D_a} v_i x_i(D) = \sum_{i \in D_a} v_i x_i(D) \geq \sum_{i \in E'_a} v_i x_i(E'),
    \end{align*}
    as claimed.
\end{proof}

We now state the key property to conclude that \nalgud~with parameter \mbox{$\beta\in [1/2, 2/3]$} is strategyproof: Whenever the mechanism does not return an individual optimum, the notion of $a$-dominance captures the relation between sets $R(E)$ and $R(E\setminus \{i\})$ for any $i \in E_a$.

\begin{lemma}
    \label{lem:dominance-deletion}
    Let $\beta\in [ 1/2, 2/3]$ be arbitrary and let $E$ be a set of items of an instance in $\calI_{\mathrm{UD}}$ such that $v(\OPT(E_a)) < \beta \cdot C$ for every $a \in [n]$. Then, for every $a \in [n]$ and $i \in E_a$, $R(E)$ $a$-dominates $R(E \setminus \{i\})$.
\end{lemma}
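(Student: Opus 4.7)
The plan is to analyze how $P$, $i^*$, and $R$ transform when agent $a$ hides item $i\in E_a$, by case analysis. I will use the monotonicity observation $P(E)\setminus\{i\}\subseteq P(E\setminus\{i\})$: the condition for $j\in P$ only constrains items in $E\setminus E_{a(j)}$, and removing an item from this set can only weaken constraints.

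I split into Case~1 ($i\neq i^*(E)$) and Case~2 ($i=i^*(E)$). In Case~1, monotonicity gives $i^*(E)\in P(E\setminus\{i\})$, hence $v_{i^*(E')}\geq v_{i^*(E)}$. If equality holds, distinct values force $i^*(E')=i^*(E)$ and $R(E')$ differs from $R(E)$ only by possibly removing $i\in E_a$, so $a$-dominance is immediate. If the inequality is strict, $i^*(E')$ was blocked in $E$ and its only blocker must be $i$ (else it persists in $E'$); this yields $v_i+v_{i^*(E')}>C$, $v_i<v_{i^*(E')}$, $a(i^*(E'))\neq a$, and---by analyzing the two subcases according to whether $v_{i^*(E)}+v_{i^*(E')}$ exceeds $C$---forces $a(i^*(E))=a(i^*(E'))$. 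Case~2 is handled in parallel: $v_{i^*(E')}>v_i$ reuses the only-blocker analysis, while $v_{i^*(E')}<v_i$ yields a looser threshold for $R(E')$ and is handled by direct inclusion arguments, leveraging $v_i>v_{i^*(E')}$.

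With the ownership structure established, I will verify the three $a$-dominance conditions by an element-wise comparison. The agent-$a$ side is monotone: either $R(E')_a\subseteq R(E)_a$ or a single swap exchanges $i$ for the smaller $i^*(E')$, both giving conditions (i) and (ii). For the other-agents side, I will show that $R(E)\cap(E\setminus E_a)$ and $R(E')\cap(E\setminus E_a)$ differ only by swapping $i^*(E)$ for the higher-valued $i^*(E')$, which yields condition (iii).

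The hard part will be ruling out ``swap-range'' items: items $j$ owned by the common agent $a'$ with $v_j\in(C-v_{i^*(E')},C-v_{i^*(E)}]$. Any such $j$ would combine with $i^*(E)$ into a feasible pair in $E_{a'}$ whose value exceeds $C+v_{i^*(E)}-v_{i^*(E')}$; a careful computation using the hypothesis $v(\OPT(E_{a'}))<\beta C$ (with $\beta\leq 2/3$) together with the (near-)feasibility of $\{i^*(E),i^*(E')\}$ in $E_{a'}$ derives a contradiction. This is where the bound $\beta\leq 2/3$ enters essentially.
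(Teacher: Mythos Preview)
Your case structure matches the paper's, and Case~1 (where $i\neq i^*(E)$) is essentially right, including the swap-range argument for items owned by the common agent $a'=a^*(E)=a^*(E')$. The gaps are in Case~2, where $i=i^*(E)$ and hence $a=a^*(E)$.

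In Case~2 with $v_{i^*(E')}>v_i$ (the paper's case~(c)), your description of the other-agents side is wrong: since $i^*(E)=i\in E_a$, there is no $i^*(E)$ to swap out on that side, and in fact one item $k\in E_{a(i^*(E'))}$ with $v_k\in(C-v_{i^*(E')},\,C-v_i]$ \emph{can} leave $R$ when passing to $R'$. Your swap-range argument cannot rule this out, because the pair $\{k,i^*(E)\}$ does not lie in a single agent's item set here ($i^*(E)\in E_a$ while $k\in E_{a(i^*(E'))}\neq E_a$). What you actually need is that there is \emph{at most one} such $k$: if $j_1,j_2$ were two of them, then $v_{j_1},v_{j_2}>(1-\beta)C$ and, since $j_1,j_2\in R(E)$ with $v_{j_1},v_{j_2}<v_i$, also $v_{j_1},v_{j_2}<C/2$; so $\{j_1,j_2\}\subseteq E_{a(i^*(E'))}$ is feasible with value $>2(1-\beta)C\geq\beta C$, contradicting $v(\OPT(E_{a(i^*(E'))}))<\beta C$. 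With at most one such $k$ and $v_k<v_{i^*(E')}$, condition~(iii) still holds---but this is not the ``pure swap'' you describe.

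In Case~2 with $v_{i^*(E')}<v_i$ (the paper's case~(d)), the ``looser threshold'' argument is fine for the non-$a$ side, but on the $a$-side it cuts the wrong way: a looser threshold for $R(E')$ could in principle let \emph{new} items of agent~$a$ enter $R(E')_a$, destroying the single-swap picture. You must rule out $j\in E_a\setminus\{i,i^*(E')\}$ with $j\in R(E')\setminus R(E)$. Such a $j$ satisfies $v_j>(1-\beta)C$ and $v_j+v_{i^*(E')}\leq C$; when $i^*(E')\in E_a$ as well, $\{j,i^*(E')\}$ is a feasible pair in $E_a$ with value $>2(1-\beta)C\geq\beta C$, again using $\beta\leq 2/3$. (When $i^*(E')\notin E_a$ or $j\notin E_a$, the contradiction instead comes from $i\in P(E)$.) Your plan places the $\beta\leq 2/3$ work only in the other-agents swap-range step; it is equally needed here.
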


Before proving this lemma, we introduce two intermediate results that cover some properties of instances in $\calI_{\mathrm{UD}}$ where the mechanism does not return an individual optimum for some $\beta \in [1/2, 2/3]$.
The first one states that no item in $R(E)$ can be larger than $i^*(E)$.

\begin{lemma}
    \label{lem:i-star-largest}
    Let $E$ be a set of items of an instance in $\calI_{\mathrm{UD}}$ such that $v(\OPT(E_a)) < \beta\cdot C$ for every $a \in [n]$. Then, it holds that $i^*(E) = \arg\max \{v_i \mid i \in R(E)\}$.
\end{lemma}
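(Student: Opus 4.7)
The plan is to argue by contradiction. Suppose that some $i \in R(E)$ satisfies $v_i > v_{i^*(E)}$; note $i \neq i^*(E)$ by this strict inequality (and by the tie-breaking assumption on values), so the definition of $R(E)$ gives $v_i + v_{i^*(E)} \leq C$. I will derive an impossibility purely from the definitions of $P(E)$, $i^*(E)$, and $R(E)$.

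First I would consider the set $I := \{\, l \in E : v_l > v_{i^*(E)} \,\}$. It is nonempty since it contains $i$, so it admits an element $l^{\star} \in I$ of minimum value. Because $v_{l^{\star}} > v_{i^*(E)}$ and $i^*(E)$ realises the maximum of $v_\cdot$ over $P(E)$, we have $l^{\star} \notin P(E)$. Unfolding the definition of $P(E)$, this yields a witness $k \in E \setminus E_{a(l^{\star})}$ with $v_k < v_{l^{\star}}$ and $v_k + v_{l^{\star}} > C$.

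The key step is to show that this witness itself lies in $I$, which will immediately give the contradiction. Since $i \in I$ and $l^{\star}$ minimises $v_\cdot$ on $I$, we have $v_{l^{\star}} \leq v_i$, and therefore $v_{l^{\star}} + v_{i^*(E)} \leq v_i + v_{i^*(E)} \leq C$, i.e.\ $v_{l^{\star}} \leq C - v_{i^*(E)}$. Plugging this into $v_k + v_{l^{\star}} > C$ yields $v_k > C - v_{l^{\star}} \geq v_{i^*(E)}$, so $k \in I$; but $v_k < v_{l^{\star}}$ then contradicts the minimality of $l^{\star}$ in $I$.

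The main obstacle is organising the descent correctly: a naïve chase of witnesses starting from $i$ could in principle recycle items, but restricting attention to the set $I$ and jumping straight to its minimum-value element collapses the would-be cascade into a single-step contradiction. Note that the hypothesis $v(\OPT(E_a)) < \beta\cdot C$ does not enter the argument itself; it only ensures that the mechanism reaches the branch where $P(E)$, $i^*(E)$, and $R(E)$ are defined (and in particular that $P(E)$ is nonempty, which in any case holds trivially because the minimum-value item of $E$ vacuously belongs to $P(E)$).
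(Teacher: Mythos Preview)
Your proof is correct and follows essentially the same approach as the paper: both argue by contradiction, pick a minimum-value element among items with value exceeding $v_{i^*}$, use that this element cannot lie in $P$ to obtain a witness, and then show the witness itself has value exceeding $v_{i^*}$ yet smaller value, contradicting minimality. The only cosmetic difference is that the paper minimizes over $\{j\in R : v_j>v_{i^*}\}$ while you minimize over $\{l\in E : v_l>v_{i^*}\}$; both choices work for the same reason, and your observation that the hypothesis $v(\OPT(E_a))<\beta\cdot C$ plays no role in the argument itself is accurate.
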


\begin{proof}
    Let $E$ be as in the statement and suppose, towards a contradiction, that $i^*$ is not the most valuable item in $R$, i.e., $\{i \in R \mid v_i > v_{i^*} \} \not= \emptyset$. Let $i$ be the least valuable item in this set, i.e., $i = \arg\min \{v_i \mid i \in R \text{ s.t.\ } v_i>v_{i^*} \}$. Since $i \in R$, we know that $v_{i^*} + v_i \leq C$.
    
    As $v_i > v_{i^*}$ and $i^*$ is the most valuable item in $P$, it must hold that $i \not\in P$, so there exists $j \in E \setminus E_{a(i)}$ with $v_{i^*} < v_j < v_i$ such that $v_i+v_j > C$. From the definition of $i$ as the least valuable item in $R$ with $v_i > v_{i^*}$, this item $j$ cannot belong to $R$, i.e., $v_{i^*}+v_j>C$. But on the other hand, we know that $v_{i^*}+v_j < v_{i^*}+v_i \leq C$, a contradiction.
\end{proof}

The second result we need before proving \Cref{lem:dominance-deletion} states three conditions under which a pair of items $i,j \in E$ cannot be packed together: (i) $i=i^*(E)$ and $v_j > v_i$, (ii) $i,j \in E_a$ for some agent $a \in [n]$ and $\min\{v_i, v_j\} \geq (1-\beta)C$, (iii) $i,j \in E \setminus R(E)$.

\begin{lemma}
    \label{lem:items-not-fitting}
    Let $\beta\in [1/2, 2/3]$ be arbitrary, let $E$ be a set of items of an instance in $\calI_{\mathrm{UD}}$ such that $v(\OPT(E_a)) < \beta\cdot C$ for every $a \in [n]$, and let $i,j \in E$ be items such that at least one of the following conditions holds:
    \begin{enumerate}[label=(\roman*)]
        \item $i=i^*(E)$ and $v_j > v_i$;\label{item:i-star-not-fitting}
        \item $i,j \in E_a$ for some $a \in [n]$ and $\min\{v_i, v_j\} \geq (1-\beta)C$;\label{item:large-items-not-fitting}
        \item $i,j \in E \setminus R(E)$.\label{item:deleted-items-not-fitting}
    \end{enumerate}
    Then, $v_i+v_j > C$.
\end{lemma}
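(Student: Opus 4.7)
The plan is to treat the three hypotheses essentially independently, as each admits a short argument, with (iii) being the most involved. Before starting the case analysis I would record two preliminary facts that will be used in (iii): first, for any $k\in E\setminus (R(E)\cup\{i^*(E)\})$, the definition of $R(E)$ yields $v_{i^*(E)}+v_k>C$, hence $v_k>C-v_{i^*(E)}$; second, since $i^*(E)\in P(E)$, any $k\in E\setminus E_{a^*(E)}$ with $v_k<v_{i^*(E)}$ satisfies $v_{i^*(E)}+v_k\le C$ and so lies in $R(E)$, so contrapositively every item outside $R(E)$ owned by an agent other than $a^*(E)$ must have value strictly greater than $v_{i^*(E)}$. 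I would also note that $\{i^*(E)\}\subseteq E_{a^*(E)}$ is feasible, so $v_{i^*(E)}\le v(\OPT(E_{a^*(E)}))<\beta\cdot C$, whence $C-v_{i^*(E)}>(1-\beta)C$.

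For case (i), Lemma~\ref{lem:i-star-largest} says $i^*(E)$ attains the maximum value in $R(E)$; hence any $j$ with $v_j>v_{i^*(E)}$ is outside $R(E)$, and since $j\ne i^*(E)$ the definition of $R(E)$ immediately gives $v_{i^*(E)}+v_j>C$. For case (ii) I would argue by contradiction: if $v_i+v_j\le C$ then $\{i,j\}\subseteq E_a$ is feasible, so $v(\OPT(E_a))\ge v_i+v_j\ge 2(1-\beta)C$; the assumption $\beta\le 2/3$ gives $2(1-\beta)\ge\beta$, contradicting $v(\OPT(E_a))<\beta\cdot C$.

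For case (iii), I would split according to how many of $i,j$ are owned by $a^*(E)$. If both lie in $E_{a^*(E)}$, the first preliminary fact and $v_{i^*(E)}<\beta C$ give $v_i,v_j>(1-\beta)C$, and the contradiction argument used for (ii), applied to $E_{a^*(E)}$, closes this sub-case. If exactly one of $i,j$ lies in $E_{a^*(E)}$, say $j$, then the first fact gives $v_j>C-v_{i^*(E)}$ while the second gives $v_i>v_{i^*(E)}$, so $v_i+v_j>C$ follows by addition. If neither is in $E_{a^*(E)}$, both facts apply to both items and yield $v_i,v_j>\max\{v_{i^*(E)},\,C-v_{i^*(E)}\}\ge C/2$, hence $v_i+v_j>C$.

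The main obstacle is the last sub-case of (iii): the bound $v_i,v_j>C-v_{i^*(E)}$ coming from $i,j\notin R(E)$ alone is insufficient when $v_{i^*(E)}>C/2$, since it only yields $v_i+v_j>2(C-v_{i^*(E)})$, which may be less than $C$. The observation that rescues the argument is the second preliminary fact, which is available precisely because $i^*(E)\in P(E)$: items outside $R(E)$ not owned by $a^*(E)$ must strictly exceed $v_{i^*(E)}$ in value, contributing the complementary bound $v_i,v_j>v_{i^*(E)}$; the maximum of the two lower bounds is then always at least $C/2$, which is what drives the conclusion.
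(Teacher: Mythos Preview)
Your proof is correct and follows essentially the same approach as the paper. The only differences are organizational: for (i) you invoke Lemma~\ref{lem:i-star-largest} (already available) rather than arguing directly from the definition of $P(E)$ as the paper does, and for (iii) you split according to how many of $i,j$ belong to $E_{a^*(E)}$ whereas the paper splits according to whether at least one of $v_i,v_j$ exceeds $v_{i^*(E)}$---but since your second preliminary fact shows that ``not owned by $a^*(E)$ and outside $R(E)$'' forces ``value exceeds $v_{i^*(E)}$'', the two decompositions are equivalent.
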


\begin{proof}
    We let $E$ be as in the statement and prove that each condition implies $v_i+v_j>C$. 

    In order to prove that part \ref{item:i-star-not-fitting} implies $v_i+v_j > C$, we let $i$ and $j$ be such that $i=i^*$ and $v_j>v_{i^*}$. Let also $j' = \arg\min \{v_k \mid k\in E \text{ s.t.\ } v_k>v_i\}$. Since $i = \arg\max \{v_k \mid k \in P\}$, we know that $j' \not\in P$ and, thus, there is an item $k'\in E \setminus E_{a(j')}$ with $v_{j'}>v_{k'}$ and $v_{j'}+v_{k'} > C$. But $i = \arg\max\{v_k \mid k\in E \text{ s.t\ } v_{j'}>v_k\}$, hence $v_i+v_{j'} \geq v_{k'}+v_{j'} > C$. As $v_j\geq v_{j'}$, the result follows.

    We now prove that condition \ref{item:large-items-not-fitting} is sufficient to conclude $v_i+v_j > C$. We consider an agent $a \in [n]$ and items $i,j \in E_a$ such that $v_i \geq (1-\beta)C$ and $v_j \geq (1-\beta)C$. If $v_i+v_j \leq C$, then packing both $i$ and $j$ is a feasible solution, thus 
    \begin{equation*}
        v(\OPT(E_a)) \geq v_i+v_j \geq 2(1-\beta)C \geq \beta\cdot C,
    \end{equation*}
    where the last inequality follows from the fact that $\beta \leq 2/3$.
    This is a contradiction to the statement of the lemma.
    
    To prove that part \ref{item:deleted-items-not-fitting} implies $v_i+v_j > C$, we let $i,j \in E \setminus R$ be arbitrary items not in $R$. The definition of $R$ implies that $v_{i^*} + v_i > C$ and $v_{i^*} + v_j > C$, so if we have $v_i>v_{i^*}$ or $v_j>v_{i^*}$ the result follows directly. We thus assume that $v_i<v_{i^*}$ and $v_j<v_{i^*}$; as $i^* \in P$ we must have that $\{i,j\} \subset E_{a^*}$. On the other hand, $v(\OPT(E_a)) < \beta\cdot C$ implies that $v_{i^*} < \beta\cdot C$ and thus $\min \{v_i,v_j\} > (1-\beta)C$. Therefore, part \ref{item:large-items-not-fitting} implies that $v_i+v_j>C$.
\end{proof}

To prove \Cref{lem:dominance-deletion}, we distinguish four cases according to the hidden item $i$ and its relation to other items: (a) $i \neq i^*(E)$ and $a = a^*(E)$, (b) $i \neq i^*(E)$ and $a \neq a^*(E)$, (c) $i=i^*(E)$ and $v_{i^*(E\setminus \{i\})} > v_i$, and (d) $i=i^*(E)$ and $v_{i^*(E\setminus \{i\})} < v_i$. For each of these cases, we apply the results above to derive the relation between $R(E)$ and $R(E\setminus \{i\})$ and conclude. \Cref{fig:a-dominance} depicts an example of each case.
\begin{figure}[t]
\centering

\begin{tikzpicture}[PatternBlueNE/.style={pattern=north east lines, pattern color=blue!50!black}, PatternOrangeNE/.style={pattern=north east lines, pattern color=orange!50!black}, PatternGreenNE/.style={pattern=north east lines, pattern color=green!50!black}, PatternVioletNE/.style={pattern=north east lines, pattern color=violet!50!black}]


\draw[fill = lightblue] (0,1.5) rectangle (1.2,2.2);
\draw[preaction={fill = lightorange}, PatternOrangeNE] (1.2,1.5) rectangle (2.22,2.2);
\draw[fill = lightorange] (2.22,1.5) rectangle (3.22,2.2);   
\draw[preaction={fill = lightblue}, PatternBlueNE] (3.22,1.5) rectangle (4.12,2.2);
\draw[preaction={fill = lightviolet}, PatternVioletNE] (4.12,1.5) rectangle (4.74,2.2);
\draw[preaction={fill = lightviolet}, PatternVioletNE] (4.74,1.5) rectangle (5.34,2.2);
\draw[preaction={fill = lightgreen}, PatternGreenNE] (5.34,1.5) rectangle (5.74,2.2);
\draw[preaction={fill = lightorange}, PatternOrangeNE] (5.74,1.5) rectangle (5.94,2.2);

\draw [very thick, -](2,1.3) -- (2,2.4);
\Text[x=2,y=2.6]{$C=10$};

\Text[x=-0.3,y=1.85]{$v$};

\Text[x=0.6,y=1.85]{$6$};

\Text[x=1.71,y=1.85]{$5.1$};

\Text[x=2.72,y=1.85]{$5$};

\Text[x=3.67,y=1.85]{$4.5$};

\Text[x=4.43,y=1.85]{$3.1$};

\Text[x=5.04,y=1.85]{$3$};

\Text[x=5.54,y=1.85]{$2$};

\Text[x=5.84,y=1.85]{$1$};

\begin{scope}[shift={(0,-1.2)}]

\draw[fill = lightblue] (0,1.5) rectangle (1.2,2.2);
\draw[preaction={fill = lightorange}, PatternOrangeNE] (1.2,1.5) rectangle (2.22,2.2);  
\draw[preaction={fill = lightblue}, PatternBlueNE] (2.22,1.5) rectangle (3.12,2.2);
\draw[preaction={fill = lightviolet}, PatternVioletNE] (3.12,1.5) rectangle (3.74,2.2);
\draw[preaction={fill = lightviolet}, PatternVioletNE] (3.74,1.5) rectangle (4.34,2.2);
\draw[preaction={fill = lightgreen}, PatternGreenNE] (4.34,1.5) rectangle (4.74,2.2);
\draw[preaction={fill = lightorange}, PatternOrangeNE] (4.74,1.5) rectangle (4.94,2.2);

\draw [very thick, -](2,1.3) -- (2,2.4);

\Text[x=-0.3,y=1.85]{$v$};

\Text[x=0.6,y=1.85]{$6$};

\Text[x=1.71,y=1.85]{$5.1$};

\Text[x=2.67,y=1.85]{$4.5$};

\Text[x=3.43,y=1.85]{$3.1$};

\Text[x=4.04,y=1.85]{$3$};

\Text[x=4.54,y=1.85]{$2$};

\Text[x=4.84,y=1.85]{$1$};

\end{scope}


\draw[fill = lightblue] (7,1.5) rectangle (8.2,2.2);
\draw[fill = lightorange] (8.2,1.5) rectangle (9.3,2.2);
\draw[preaction={fill = lightblue}, PatternBlueNE] (9.3,1.5) rectangle (10.26,2.2);
\draw[preaction={fill = lightviolet}, PatternVioletNE] (10.26,1.5) rectangle (11.06,2.2);
\draw[preaction={fill = lightgreen}, PatternGreenNE] (11.06,1.5) rectangle (11.76,2.2);
\draw[preaction={fill = lightgreen}, PatternGreenNE] (11.76,1.5) rectangle (12.16,2.2);

\draw [very thick, -](9,1.3) -- (9,2.4);
\Text[x=9,y=2.6]{$C=10$};

\Text[x=6.7,y=1.85]{$v$};

\Text[x=7.6,y=1.85]{$6$};

\Text[x=8.75,y=1.85]{$5.5$};

\Text[x=9.78,y=1.85]{$4.8$};

\Text[x=10.66,y=1.85]{$4$};

\Text[x=11.41,y=1.85]{$3.5$};

\Text[x=11.96,y=1.85]{$2$};

\begin{scope}[shift={(0,-1.2)}]

\draw[preaction={fill = lightblue}, PatternBlueNE] (7,1.5) rectangle (8.2,2.2);
\draw[fill = lightblue] (8.2,1.5) rectangle (9.16,2.2);
\draw[preaction={fill = lightviolet}, PatternVioletNE] (9.16,1.5) rectangle (9.96,2.2);
\draw[preaction={fill = lightgreen}, PatternGreenNE] (9.96,1.5) rectangle (10.66,2.2);
\draw[preaction={fill = lightgreen}, PatternGreenNE] (10.66,1.5) rectangle (11.06,2.2);

\draw [very thick, -](9,1.3) -- (9,2.4);

\Text[x=6.7,y=1.85]{$v$};

\Text[x=7.6,y=1.85]{$6$};

\Text[x=8.68,y=1.85]{$4.8$};

\Text[x=9.56,y=1.85]{$4$};

\Text[x=10.31,y=1.85]{$3.5$};

\Text[x=10.86,y=1.85]{$2$};

\end{scope}

\Text[x=2.6, y=-0.4]{(a) $i \neq i^*(E)$ and $a = a^*(E)$};
\Text[x=9.4, y=-0.4]{(b) $i \neq i^*(E)$ and $a \neq a^*(E)$};

\begin{scope}[shift={(0,-4)}]


\draw[fill = lightblue] (0,1.5) rectangle (1.2,2.2);
\draw[preaction={fill = lightorange}, PatternOrangeNE] (1.2,1.5) rectangle (2.22,2.2);
\draw[preaction={fill = lightblue}, PatternBlueNE] (2.22,1.5) rectangle (3.06,2.2);
\draw[preaction={fill = lightgreen}, PatternGreenNE] (3.06,1.5) rectangle (3.76,2.2);
\draw[preaction={fill = lightgreen}, PatternGreenNE] (3.76,1.5) rectangle (4.26,2.2);
\draw[preaction={fill = lightorange}, PatternOrangeNE] (4.26,1.5) rectangle (4.46,2.2);

\draw [very thick, -](2,1.3) -- (2,2.4);
\Text[x=2,y=2.6]{$C=10$};

\Text[x=-0.3,y=1.85]{$v$};

\Text[x=0.6,y=1.85]{$6$};

\Text[x=1.71,y=1.85]{$5.1$};

\Text[x=2.64,y=1.85]{$4.2$};

\Text[x=3.41,y=1.85]{$3.5$};

\Text[x=4.01,y=1.85]{$2.5$};

\Text[x=4.36,y=1.85]{$1$};

\begin{scope}[shift={(0,-1.2)}]

\draw[preaction={fill = lightblue}, PatternBlueNE] (0,1.5) rectangle (1.2,2.2);
\draw[fill = lightblue] (1.2,1.5) rectangle (2.04,2.2);
\draw[preaction={fill = lightgreen}, PatternGreenNE] (2.04,1.5) rectangle (2.74,2.2);
\draw[preaction={fill = lightgreen}, PatternGreenNE] (2.74,1.5) rectangle (3.24,2.2);
\draw[preaction={fill = lightorange}, PatternOrangeNE] (3.24,1.5) rectangle (3.44,2.2);

\draw [very thick, -](2,1.3) -- (2,2.4);

\Text[x=-0.3,y=1.85]{$v$};

\Text[x=0.6,y=1.85]{$6$};

\Text[x=1.62,y=1.85]{$4.2$};

\Text[x=2.39,y=1.85]{$3.5$};

\Text[x=2.99,y=1.85]{$2.5$};

\Text[x=3.34,y=1.85]{$1$};

\end{scope}

\begin{scope}[shift={(7,0)}]

\draw[fill = lightblue] (0,1.5) rectangle (1.2,2.2);
\draw[preaction={fill = lightorange}, PatternOrangeNE] (1.2,1.5) rectangle (2.3,2.2);
\draw[fill = lightorange] (2.3,1.5) rectangle (3.26,2.2);  
\draw[preaction={fill = lightblue}, PatternBlueNE] (3.26,1.5) rectangle (4.1,2.2);
\draw[preaction={fill = lightgreen}, PatternGreenNE] (4.1,1.5) rectangle (4.9,2.2);
\draw[preaction={fill = lightgreen}, PatternGreenNE] (4.9,1.5) rectangle (5.3,2.2);

\draw [very thick, -](2,1.3) -- (2,2.4);
\Text[x=2,y=2.6]{$C=10$};

\Text[x=-0.3,y=1.85]{$v$};

\Text[x=0.6,y=1.85]{$6$};

\Text[x=1.75,y=1.85]{$5.5$};

\Text[x=2.78,y=1.85]{$4.8$};

\Text[x=3.68,y=1.85]{$4.2$};

\Text[x=4.5,y=1.85]{$4$};

\Text[x=5.1,y=1.85]{$2$};

\end{scope}

\begin{scope}[shift={(7,-1.2)}]

\draw[fill = lightblue] (0,1.5) rectangle (1.2,2.2);
\draw[preaction={fill = lightorange}, PatternOrangeNE] (1.2,1.5) rectangle (2.16,2.2);  
\draw[preaction={fill = lightblue}, PatternBlueNE] (2.16,1.5) rectangle (3,2.2);
\draw[preaction={fill = lightgreen}, PatternGreenNE] (3,1.5) rectangle (3.8,2.2);
\draw[preaction={fill = lightgreen}, PatternGreenNE] (3.8,1.5) rectangle (4.2,2.2);

\draw [very thick, -](2,1.3) -- (2,2.4);

\Text[x=-0.3,y=1.85]{$v$};

\Text[x=0.6,y=1.85]{$6$};

\Text[x=1.68,y=1.85]{$4.8$};

\Text[x=2.58,y=1.85]{$4.2$};

\Text[x=3.4,y=1.85]{$4$};

\Text[x=4,y=1.85]{$2$};

\end{scope}

\Text[x=2.6, y=-0.4]{(c) $i=i^*(E)$ and $v_{i^*(E\setminus \{i\})} > v_i$};
\Text[x=9.4, y=-0.4]{(d) $i=i^*(E)$ and $v_{i^*(E\setminus \{i\})} < v_i$};

\end{scope}

\end{tikzpicture}
\caption{Examples of $a$-dominance of the restricted set of items $R(E)$ over $R(E\setminus \{i\})$ for some $i \in E_a$ when running \nalgud~with parameter $\beta=1/\phi \approx 0.618$. Items of the same color belong to the same agent; those of agent $a$ are orange. Items with a background of diagonal lines belong to the restricted set~$R$.  
Four cases are considered as in the proof of \Cref{lem:dominance-deletion}.
In all of them, the values of the orange items in $R$ before the deletion of $i$ are lexicographically larger than those after deletion; the opposite is true for the remaining items.}
\label{fig:a-dominance}
\end{figure}
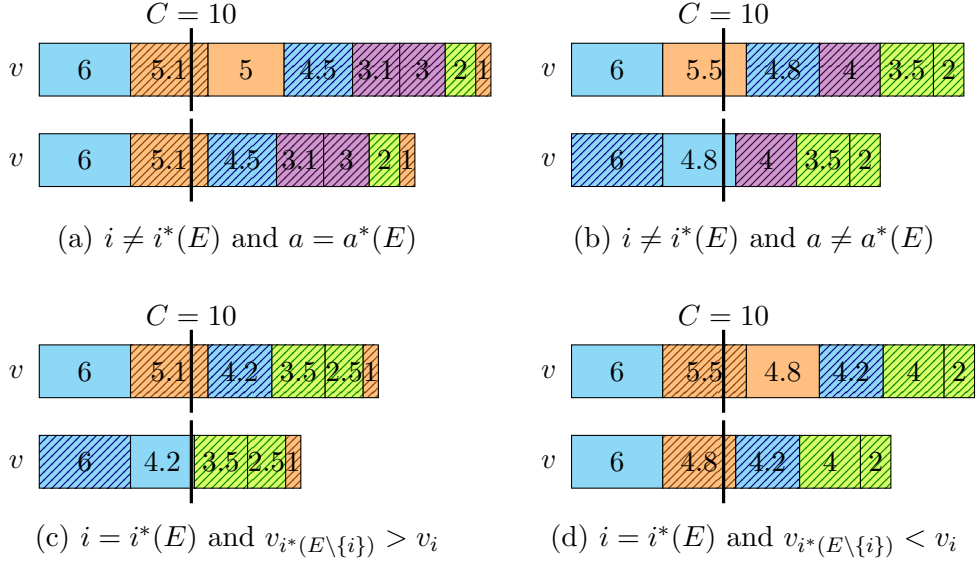

\begin{proof}[Proof of \Cref{lem:dominance-deletion}]

    We consider $\beta$ and $E$ as in the statement, an agent $a \in [n]$, and an item $i \in E_a$. Throughout the proof we will use $R = R(E)$ and $R' = R(E \setminus \{i\})$ for simplicity. We will distinguish four cases according to $i$ and its relation to other items.

    We first suppose that $i \neq i^*(E)$ and $a = a^*(E)$. We claim that $i^*(E \setminus \{i\}) = i^*(E)$. Assume the claim to be wrong. Note that for every item $j \in E_{a} \setminus \{i\}$ we have $j \in P(E) $ if and only if $ j \in P(E \setminus \{i\}) $, because whether or not an item belongs to $P$ depends on other agents' items and they remain the same.  Then there has to be an item $j \in E \setminus E_a$ with $j \in P(E \setminus \{i\}) \setminus P(E)$ and $v_j > v_{i^*(E)}$. But then, by part~\ref{item:i-star-not-fitting} of \Cref{lem:items-not-fitting}, we have $v_j + v_{i^*(E)} > C$ and therefore, $j \notin P(E \setminus \{i\})$, a contradiction. Since the claim is true, $R'= R \setminus \{i\}$, which implies that 
    \begin{equation*}
        R'_a = R_a \setminus \{i\} \text{ and } R' \setminus R'_a = R \setminus R_a.
    \end{equation*}
    We conclude that the sets $R$ and $R'$ satisfy conditions \ref{item:a-dominance-i}-\ref{item:a-dominance-iii} in the definition of $a$-dominance, i.e. $R(E)$ $a$-dominates $R(E\setminus \{i\})$. An example of this case is depicted in part (a) of \Cref{fig:a-dominance}.

    We now suppose that $i \neq i^*(E)$ and $a \neq a^*(E)$. We first claim that $i^*(E \setminus \{i\})$ belongs to agent $a^*(E)$. Note that $i^*(E) \in P(E \setminus \{i\})$ since $v_{i^*(E)}+v_j \leq C$ for every $j \in (E \setminus \{i\}) \setminus E_{a^*(E)}$ with $v_{i^*(E)} > v_j$ due to the definition of $i^*(E)$. Thus, for the claim to be wrong, there has to be an item $j \in (E \setminus \{i\}) \setminus E_{a^*(E)}$ with $j \in P(E \setminus \{i\}) \setminus P(E)$ and $v_j > v_{i^*(E)}$. But then, by part~\ref{item:i-star-not-fitting} of \Cref{lem:items-not-fitting}, we have $v_j + v_{i^*(E)} > C$ and therefore, $j \notin P(E \setminus \{i\})$, a contradiction. Therefore, we obtain that $i^*(E\setminus \{i\}) \in E_{a^*(E)}$ with $v_{i^*(E\setminus \{i\})} \geq v_{i^*(E)}$. If $ i^*(E) = i^*(E \setminus \{i\}) $ we can conclude as in the first case. Otherwise, we have $v_{i^*(E\setminus \{i\})} > v_{i^*(E)}$, so part~\ref{item:i-star-not-fitting} of \Cref{lem:items-not-fitting} yields that $v_{i^*(E\setminus \{i\})} + v_{i^*(E)} > C$ and we conclude that $i^*(E) \notin R'$. Further, we have for every $j \in E \setminus \{i, i^*(E), i^*(E \setminus \{i\})\}$ that $j \in R' $ if and only if $j \in R$. The forward implication is straightforward since $v_{i^*(E\setminus \{i\})} > v_{i^*(E)}$. For the converse, note that if $j \notin R'$, part~\ref{item:deleted-items-not-fitting} of \Cref{lem:items-not-fitting} applied to $E \setminus \{i\} $ yields $v_j + v_{i^*(E)} > C$, thus $j \notin R$. Altogether, $R' = (R \setminus \{i, i^*(E)\}) \cup \{i^*(E\setminus \{i\})\}$, which implies that
    \begin{equation*}
        R'_a = R_a \setminus \{i\} \text{ and } R' \setminus R'_a = ((R \setminus R_a) \setminus \{i^*(E)\}) \cup \{i^*(E\setminus \{i\})\}
    \end{equation*}
    with $v_{i^*(E\setminus \{i\})} \geq v_{i^*(E)}$. We conclude that $R$ and $R'$ satisfy conditions \ref{item:a-dominance-i}-\ref{item:a-dominance-iii} in the definition of $a$-dominance, i.e. $R(E)$ $a$-dominates $R(E\setminus \{i\})$.
    Part (b) of \Cref{fig:a-dominance} shows an example of this case.

    In what follows, we suppose that $i = i^*(E)$ (and thus $a=a^*(E)$). We start by further assuming that $v_{i^*(E\setminus \{i\})} > v_i$. This immediately yields $i^*(E\setminus \{i\}) \in P(E\setminus \{i\}) \setminus P(E)$, hence $a^*(E\setminus \{i\}) \neq a$ since we would have $i^*(E \setminus \{i\}) \in P(E) $ otherwise. For every $j \in E \setminus \{i, i^*(E \setminus \{i\})\}$ with $v_j>v_i$, we have that $v_j + v_{i^*(E\setminus \{i\})} > v_j + v_i > C$, where the last inequality follows from part~\ref{item:i-star-not-fitting} of \Cref{lem:items-not-fitting}. Thus, $j \notin R$ and $j \notin R'$. Similarly, if $j \in E \setminus \{i, i^*(E \setminus \{i\})\}$ is such that $v_j<v_i$ and $j\in R'$, the fact that $j\in R$ follows directly since $v_{i^*(E\setminus \{i\})} > v_i$. Moreover, if $j \in E \setminus \{i, i^*(E \setminus \{i\})\}$ is such that $v_j<v_i$, $j\in R$ and $j\notin E_{a^*(E \setminus \{i\})}$, we obtain that $j\in R'$ since otherwise we would have $i^*(E \setminus \{i\}) \not \in P(E\setminus \{i\})$. Finally, we claim that the set $\{j\in E \setminus \{i, i^*(E \setminus \{i\})\} \mid v_j<v_i \text{ and }j\in R\setminus R'\}$ is either empty or contains a single item from $E_{a^*(E \setminus \{i\})}$. Indeed, if we have two distinct items $j_1$ and $j_2$ in this set, then the fact that they do not belong to $R'$ yields $v_{j_1}\geq (1-\beta)C$ and $v_{j_2}\geq (1-\beta)C$, so part~\ref{item:large-items-not-fitting} of \Cref{lem:items-not-fitting} implies $v_{j_1} + v_{j_2} > C$. However, the fact that they both belong to $R$ and have less value than $i$ implies $v_{j_1}\leq \frac{1}{2}C$ and $v_{j_2}\leq \frac{1}{2}C$, thus $v_{j_1} + v_{j_2} \leq C$, a contradiction. Altogether, if $\{j\in E \setminus \{i, i^*(E \setminus \{i\})\} \mid v_j<v_i \text{ and }j\in R\setminus R'\} = \emptyset$, we have that
    \begin{equation*}
        R'_a = R_a \setminus \{i\} \text{ and } R' \setminus R'_a = (R \setminus R_a) \cup \{i^*(E\setminus \{i\})\}
    \end{equation*}
    and conclude that the sets $R$ and $R'$ satisfy conditions \ref{item:a-dominance-i}-\ref{item:a-dominance-iii} in the definition of $a$-dominance.
    If $\{j\in E \setminus \{i, i^*(E \setminus \{i\})\} \mid v_j<v_i \text{ and }j\in R\setminus R'\} = \{k\}$ with $k\in E_{a^*(E \setminus \{i\})}$, we have that
    \begin{equation*}
        R'_a = R_a \setminus \{i\} \text{ and } R' \setminus R'_a = ((R \setminus R_a) \setminus \{k\}) \cup \{i^*(E\setminus \{i\})\}
    \end{equation*}
    with $v_k < v_{i^*(E\setminus \{i\})}$. Again, the sets $R$ and $R'$ satisfy conditions \ref{item:a-dominance-i}-\ref{item:a-dominance-iii} in the definition of $a$-dominance.
    In either case, $R(E)$ $a$-dominates $R(E\setminus \{i\})$. 
    An example of this case (with an item $k$ as described) is depicted in part (c) of \Cref{fig:a-dominance}. 

    We finally suppose that $i = i^*(E)$ and $v_{i^*(E\setminus \{i\})} < v_{i^*(E)}$. We claim that, in this case, $i^*(E\setminus \{i\}) = \arg\max \{v_i \mid i \in E \text{ s.t.\ } v_i < v_{i^*(E)} \}$. We now conclude the result assuming this claim and prove it afterward. For every $j \in E \setminus \{i, i^*(E\setminus \{i\})\}$, we have that $j \in R $ if and only if $ j \in R'$. 
    To see this, first note that if $j \in E \setminus \{i, i^*(E\setminus \{i\})\}$ is such that $v_j > v_i$, \Cref{lem:i-star-largest} directly implies that $j \notin R$ and $j \notin R'$. On the other hand, if $j \in E \setminus \{i, i^*(E\setminus \{i\})\}$ is such that $v_j < v_{i^*(E\setminus \{i\})}$ and $j \in R$, then $j \in R'$ since $v_{i^*(E\setminus \{i\})} < v_{i^*(E)}$. Finally, let $j \in E \setminus \{i, i^*(E\setminus \{i\})\}$ be such that $v_j < v_{i^*(E\setminus \{i\})}$ and $j \in R'$, i.e., $v_j + v_{i^*(E\setminus \{i\})} \leq C$. Suppose towards a contradiction that $j\not\in R$, i.e., $v_i+v_j>C$. If $j \notin E_a$, this is an immediate contradiction to the fact that $i\in P(E)$. If $i^*(E\setminus \{i\}) \notin E_a$, we obtain $v_i + v_{i^*(E\setminus \{i\})} > v_i + v_j > C$, again a contradiction to the fact that $i\in P(E)$. If $j \in E_a$ and $i^*(E\setminus \{i\}) \in E_a$, then $j \notin R$ implies that $v_j \geq (1-\beta) C$, so $v_{i^*(E\setminus \{i\})} > v_j $ and part~\ref{item:large-items-not-fitting} of \Cref{lem:items-not-fitting} implies that $v_j + v_{i^*(E\setminus \{i\})} > C$, contradicting that $j \in R'$.
    We conclude that $R' = (R \setminus \{i\}) \cup \{i^*(E\setminus \{i\})\}$. If $i^*(E\setminus \{i\}) \in E_a$, this yields
    \begin{equation*}
        R'_a = (R_a \setminus \{i\}) \cup \{i^*(E\setminus \{i\})\} \text{ and } R' \setminus R'_a = R \setminus R_a
    \end{equation*}
    with $v_{i^*(E\setminus \{i\})} \leq v_{i}$. If $i^*(E\setminus \{i\}) \not\in E_a$, we have that
    \begin{equation*}
        R'_a = R_a \setminus \{i\} \text{ and } R' \setminus R'_a = (R \setminus R_a) \cup \{i^*(E\setminus \{i\})\}.
    \end{equation*}
    In both cases, the sets $R$ and $R'$ satisfy conditions \ref{item:a-dominance-i}-\ref{item:a-dominance-iii} in the definition of $a$-dominance, i.e. $R(E)$ $a$-dominates $R(E\setminus \{i\})$. An example of this case is depicted in part (d) of \Cref{fig:a-dominance}.
     
     We now prove the claim. We denote $j^* = \arg\max \{v_i \mid i \in E \text{ s.t.\ } v_i < v_{i^*(E)} \}$ and $b=a(j^*)$. If $b=a$, we have that for any $j\in E\setminus E_a$ with $v_{j^*} > v_j$, it holds that $v_{j^*} + v_j < v_{i^*(E)} + v_j \leq C$, where the second inequality follows from the fact that $i^*(E) \in P(E)$. If, on the other hand, $b\not= a$, we have that for any $j\in E\setminus E_b$ with $v_{j^*} > v_j$, it holds that $v_{j^*} + v_j < v_{j^*} + v_{i^*(E)} \leq C$, where, once again, the second inequality follows from the fact that $i^*(E) \in P(E)$. Thus, in any case it holds that $j^* \in P(E\setminus \{i\})$ and this item is, by the initial assumption that $v_{i^*(E\setminus \{i\})} < v_{i^*(E)}$, the largest item in this set. This concludes the proof of the claim.
\end{proof}

The strategyproofness of \nalgud\ follows by combining \Cref{lem:dominance-implication,lem:dominance-deletion}. Regarding its approximation guarantee, $\beta$-approximation is trivially attained when a single agent provides a $\beta$-approximation and the mechanism returns their optimum. Otherwise, we crucially rely on the fact that no agent can provide a $\beta$-approximation on their own and that, as a consequence, no item is larger than $\beta \cdot C$. The proof then mainly hinges on two properties: (i) no pair of items in $E\setminus R(E)$ fit together, and (ii) $i^*$ provides a $\smash{\frac{1-\beta}{\beta}}$-approximation of any deleted item: In the worst case, $v_{i^*}=(1-\beta+\varepsilon)C$ and $v_k=(\beta-\varepsilon')C$ for some deleted item $k$ and some small values $\varepsilon>\varepsilon'>0$.
Before proceeding with the proof, we address two specific cases where \nalgrspgreedy\ is easily seen to be at least $\frac{2}{3}$-approximate. When the restricted set $Q\subseteq E$ over which it runs is such that $v(Q) \leq C$, we have $v(\algrspgreedy(E,Q)) \geq v(Q)$ as each agent can pack items with a value of their items contained in $Q$.
When the two most valuable items in $Q$ fit together and $v(Q) > C$, a $\smash{\frac{2}{3}}$-approximation also holds since each of the packed items, which are at least two, has more value than the fractional item and all together give a bound on $\OPT(E)$.

\begin{lemma}
\label{lem:r(s)-small}
    If $E$ is a set of items of an instance in $\calI_{\mathrm{UD}}$ and $Q\subseteq E$ is such that $v(Q)\leq C$, then $v(\algrspgreedy(E,Q)) \geq v(Q)$.
\end{lemma}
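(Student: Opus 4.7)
The plan is straightforward: the hypothesis $v(Q)\le C$ forces every item in $Q$ to be fully packed by the fractional greedy solution on $Q$, so each agent's quota in $\algrspgreedy$ is at least the total value of their items in $Q$, and hence each agent can pack at least that much.

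More concretely, I would proceed as follows. First, I would invoke property~\ref{item-fractional-greedy-size} of \Cref{pro:fractional-greedy}, applied to the instance obtained by restricting to the items in $Q$: since $v_i = s_i$ in $\calI_{\mathrm{UD}}$, we have
\begin{equation*}
    \sum_{i\in Q} v_i\, x_i(Q) \;=\; \min\Bigl\{ C,\; \sum_{i\in Q} v_i \Bigr\} \;=\; v(Q),
\end{equation*}
where the last equality uses the assumption $v(Q)\le C$. Because $x_i(Q)\in[0,1]$ for every $i\in Q$ and the weighted sum equals the total weight, this forces $x_i(Q)=1$ for every $i\in Q$.

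Next, I would use this to bound each agent's allotted capacity in the call to $\algrspgreedy(E,Q)$. For each agent $a\in[n]$,
\begin{equation*}
    \sum_{i\in Q_a} v_i\, x_i(Q) \;=\; \sum_{i\in Q_a} v_i \;=\; v(Q_a).
\end{equation*}
Since $Q_a \subseteq E_a$ and $v(Q_a) \le v(Q_a)$, the set $Q_a$ is a feasible candidate in the maximization defining $E^*_a$ in \Cref{alg:rspgreedy}. Therefore $v(E^*_a) \ge v(Q_a)$, and summing over $a$ yields
\begin{equation*}
    v(\algrspgreedy(E,Q)) \;=\; \sum_{a\in[n]} v(E^*_a) \;\ge\; \sum_{a\in[n]} v(Q_a) \;=\; v(Q),
\end{equation*}
which is the claim. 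There is no real obstacle here; the only point worth being careful about is to recall that in $\calI_{\mathrm{UD}}$ the mechanism's constraint $v(E')\le \sum_{i\in Q_a} v_i x_i(Q)$ coincides with a size constraint, so that feasibility of $Q_a$ in the agent's subproblem is immediate.
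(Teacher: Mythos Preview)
Your proof is correct and follows essentially the same approach as the paper's own proof: both observe that $v(Q)\le C$ forces $x_i(Q)=1$ for all $i\in Q$, so each agent's quota equals $v(Q_a)$ and the set $Q_a$ itself is feasible in the agent's subproblem, giving the bound after summing. The only difference is that you spell out the justification for $x_i(Q)=1$ via \Cref{pro:fractional-greedy}\ref{item-fractional-greedy-size}, whereas the paper states it directly.
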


\begin{proof}
    Let $E$ and $Q\subseteq E$ be such that $v(Q)\leq C$. This implies that $x_i(Q) = 1$ for every $i \in Q$. Therefore,
    \begin{align*}
        v(\algrspgreedy(E,Q)) & = \sum_{a \in [n]} \max \bigg\{v(E') ~\bigg|~ E' \subseteq E_a \text{ s.t. } v(E') \leq \sum_{i \in Q_a} v_ix_i(Q)\bigg\} \geq \sum_{a \in [n]} v(Q_a) = v(Q),
    \end{align*}
    where the inequality follows from the fact that $v(Q_a) \leq \sum_{i \in Q_a} v_i x_i(Q)$.
\end{proof}

\begin{lemma}
    \label{lem:r(s)-large}
    Let $E$ be a set of items of an instance in $\calI_{\mathrm{UD}}$ and $Q\subseteq E$ be such that $\ell(Q)\geq 2$ and $v(Q) > C$. Let also $v^* = \max\{ v_i \mid i\in Q\}$ denote the maximum value of an item in $Q$. Then,
    \begin{equation*}
        v(\algrspgreedy(E,Q)) \geq \max \bigg\{ 1-\frac{v^*}{C}, \frac{1}{2}+\frac{v^*}{2C} \bigg\} \cdot v(\OPT(E)).
    \end{equation*}
\end{lemma}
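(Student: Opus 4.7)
The plan is to derive two separate lower bounds on $v(\algrspgreedy(E,Q))$ and combine them with the fact that, in the unit-density setting, $v(\OPT(E)) \leq C$. It therefore suffices to show $v(\algrspgreedy(E,Q)) \geq C - v^*$ (which handles the first term in the max) and $v(\algrspgreedy(E,Q)) \geq (C + v^*)/2$ (which handles the second).

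For the first inequality, I would argue as in the proof of \Cref{lem:spgreedy}: every agent $a$ can always pack the set $\{i \in Q_a : x_i(Q) = 1\}$ within their quota $\sum_{i \in Q_a} v_i x_i(Q)$, so summing over agents gives $v(\algrspgreedy(E,Q)) \geq \sum_{i \in Q : x_i(Q) = 1} v_i = C - v_f$, where $v_f = v_{i_f} x_{i_f}(Q)$ denotes the contribution of the fractional item. The hypothesis $v(Q) > C$ is precisely what forces the fractional greedy to saturate the capacity, producing the equality above. By the maximality of $\ell(Q)$, $x_{i_f}(Q) < 1$, and hence $v_f < v_{i_f} \leq v^*$, so $v(\algrspgreedy(E,Q)) \geq C - v^*$ follows.

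The key new step is the second bound, $v(\algrspgreedy(E,Q)) \geq v^* + v_{i_2(Q)}$. Since $\ell(Q) \geq 2$, both of the two most valuable items in $Q$ are integral in the fractional greedy on $Q$ and fit together in the knapsack. If they belong to different agents, each such agent's quota is at least the value of their own top item, which they can pack on its own; summing gives $v^* + v_{i_2(Q)}$. If instead both items belong to a single agent $a$, then $a$'s quota is at least $v_{i_1(Q)} + v_{i_2(Q)}$, and the pair $\{i_1(Q), i_2(Q)\} \subseteq E_a$ is feasible (because $\ell(Q) \geq 2$), so $a$'s optimum subject to their quota attains at least this value.

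Given both bounds, the first half of the max is immediate. For the second, I would case split on the size of $v_f$: if $v_f \leq (C - v^*)/2$, then $v(\algrspgreedy(E,Q)) \geq C - v_f \geq (C + v^*)/2$; otherwise $v_{i_2(Q)} \geq v_{i_f} > v_f > (C - v^*)/2$, which combined with the second bound gives $v^* + v_{i_2(Q)} > (C + v^*)/2$. The main obstacle is establishing the $v^* + v_{i_2(Q)}$ bound, specifically the case where the two top items of $Q$ share an owner: one must simultaneously exploit $\ell(Q) \geq 2$ to guarantee that the shared owner's quota is large enough \emph{and} that the pair fits as a single feasible subset of $E_a$.
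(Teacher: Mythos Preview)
Your argument is correct, but it takes a more roundabout path than the paper for the second bound. The paper never introduces the separate lower bound $v(\algrspgreedy(E,Q)) \geq v^* + v_{i_2(Q)}$; instead it bounds the fractional contribution $v_f = v_{i_{\ell+1}} x_{i_{\ell+1}}$ directly. From $C = \sum_{k=1}^{\ell} v_{i_k} + v_f$ and $\ell \geq 2$ one gets $v_f = C - v^* - \sum_{k=2}^{\ell} v_{i_k} < C - v^* - v_{i_{\ell+1}} < C - v^* - v_f$, hence $v_f < (C-v^*)/2$ unconditionally. So your Case~2 ($v_f > (C-v^*)/2$) is in fact vacuous, and the paper obtains $(C+v^*)/2$ straight from the integral greedy value $C - v_f$.

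Two further remarks on efficiency. First, your ``key new step'' $v(\algrspgreedy(E,Q)) \geq v^* + v_{i_2(Q)}$ does not actually require the ownership case analysis you flag as the main obstacle: it already follows from the integral-greedy lower bound you use in the first paragraph, since $\ell(Q)\geq 2$ gives $\sum_{k=1}^{\ell} v_{i_k} \geq v_{i_1} + v_{i_2}$. Second, your case split on $v_f$ together with the bound $v_{i_2(Q)} > v_{i_f} > v_f$ is a valid way to reach $(C+v^*)/2$, but it is strictly weaker than what the paper extracts from the same data. In short: your proof works, the paper's is a single cleaner chain of inequalities on $v_f$ that renders both the ownership discussion and the case split unnecessary.
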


\begin{proof}
    Let $E$ and $Q\subseteq E$ be such that $v(Q)>C$. We further let $x = x(Q)$, $\ell = \ell(Q)$ and $i_k = i_k(Q)$ for every $k \in [|Q|]$. We know that
    \begin{equation}
        \sum_{k=1}^{\ell} v_{i_k} = \sum_{k=1}^{|Q|} v_{i_k} x_{i_k} - v_{i_{\ell+1}} x_{i_{\ell+1}} = C - v_{i_{\ell+1}}x_{i_{\ell+1}}.\label{eq:lem-phi-approximation}
    \end{equation}
    We now bound $v_{i_{\ell+1}}x_{i_{\ell+1}}$ from above. On the one hand, it is trivial that this expression is strictly smaller than $v_{i_{\ell+1}}$, which is, in turn, smaller than $v^*$. On the other hand, since $\ell \geq 2$ and $v_{i_1} > v_{i_2} > \cdots > v_{i_{\ell+1}}$, we obtain that 
    \begin{equation*}
        v_{i_{\ell+1}}x_{i_{\ell+1}} \leq C-v^* - \sum_{k=2}^{\ell} v_{i_k} < C-v^* - v_{i_{\ell+1}} < C-v^* - v_{i_{\ell+1}}x_{i_{\ell+1}},
    \end{equation*}
    hence
    \begin{equation*}
        v_{i_{\ell+1}}x_{i_{\ell+1}} < \frac{C-v^*}{2}.
    \end{equation*}
    Plugging these two bounds on $v_{i_{\ell+1}}x_{i_{\ell+1}}$ into \eqref{eq:lem-phi-approximation}, we obtain
    \begin{equation*}
        \sum_{k=1}^{\ell} v_{i_k} > \max\bigg\{C-v^*, \frac{C+v^*}{2} \bigg\}.
    \end{equation*}
    Since $x_{i_k}=1$ for every $k \in [\ell]$, we conclude that
    \begin{align*}
        v(\algrspgreedy(E,Q)) & = \sum_{a \in [n]} \max \bigg\{v(E') ~\bigg|~ E' \subseteq E_a \text{ s.t. } v(E') \leq \sum_{i \in Q_a} v_ix_i\bigg\} \nonumber \\
        & \geq \sum_{a \in [n]} \sum_{k \in [\ell]: i_k \in E_a} v_{i_k} = \sum_{k=1}^{\ell} v_{i_k} \geq \max\bigg\{C-v^*, \frac{C+v^*}{2} \bigg\}.
    \end{align*}
    The inequality in the statement follows directly from this one and from $v(\OPT(E))\leq C$.
\end{proof}

We now proceed with the proof of \Cref{lem:alg-ud-param}. 

\begin{proof}[Proof of \Cref{lem:alg-ud-param}]
    Let $\beta \in [1/2, 2/3]$ be arbitrary. To prove strategyproofness, we fix a set of items $E$, an agent $a \in [n]$, and an item $i \in E_a$ arbitrarily. 
        
    If $a = \arg\max \{v(\OPT(E_b)) \mid b \in [n] \}$ and $v(\OPT(E_a)) \geq \beta\cdot C$, then from the sole fact that $\algud_{\beta}$ returns a feasible set of items we have that
    \begin{equation*}
        v_a(\algud_{\beta}(E \setminus \{i\})) \leq \max\{v(E') \mid E'\subseteq E_a\setminus \{i\} \text{ s.t. } v(E') \leq C\} \leq v(\OPT(E_a)) = v_a(\algud_{\beta}(E)).
    \end{equation*}
    Observe that, if $a = \arg\max \{v(\OPT(E_b \setminus \{i\})) \mid b \in [n] \}$ and $v(\OPT(E_a \setminus \{i\})) \geq \beta\cdot C$, it also holds that $a = \arg\max \{v(\OPT(E_b)) \mid b \in [n] \}$ and $v(\OPT(E_a)) \geq \beta\cdot C$, thus the above chain of inequalities remains valid.
    
    If $a \not= \arg\max \{v(\OPT(E_b \setminus \{i\})) \mid b \in [n] \}$ and $v(\OPT(E_b)) \geq \beta\cdot C$ for some $b \in [n]\setminus \{a\}$, then $v_a(\algud_\beta(E \setminus \{i\})) = 0 \leq v_a(\algud_\beta(E))$.
    Finally, if $a \not= \arg\max \{v(\OPT(E_b \setminus \{i\})) \mid b \in [n] \}$ and $v(\OPT(E_b)) < \beta\cdot C$ for every $b \in [n]\setminus \{a\}$, then
    \begin{align*}
        v_a(\algud_\beta(E \setminus \{i\})) & = \max\bigg\{v(E')~\bigg|~E'\subseteq E_a \text{ s.t. } v(E') \leq \sum_{j \in R_a(E \setminus \{i\})} v_jx_j(R(E \setminus \{i\}))\bigg\} \\
        & \leq \max\bigg\{v(E')~\bigg|~E'\subseteq E_a \text{ s.t. } v(E') \leq \sum_{j \in R_a(E)} v_jx_j(R(E)) \bigg\}\\
        & = v_a(\algud_\beta(E)),
    \end{align*}
    where the inequality follows from \Cref{lem:dominance-implication} and \Cref{lem:dominance-deletion}.
    
    We conclude that, in all the cases above, $v_a(\algud_\beta(E \setminus \{i\})) \leq v_a(\algud_\beta(E))$. \Cref{lem:one-item-suffices} thus implies that \nalgud~with parameter $\beta$ is strategyproof.
    
    We now proceed to show that \nalgud~with parameter $\beta$ is $\smash{\min\big\{ \beta, \frac{1-\beta}{\beta}\big\}}$-approximate, and give conditions whenever it is not $\beta$-approximate.
    If $v(\OPT(E_a)) \geq \beta\cdot C$ for some $a \in [n]$, $\beta$-approximation is trivial; we assume this is not the case in what follows. 
    
    We first tackle some cases in which an approximation factor of $\beta$ follows easily from previous results. If $v(E)\leq C$, it is straightforward that $i^* = \arg\max\{v_i \mid i \in E\}$ and $R = E$, thus
    \begin{equation*}
        v(\algud_\beta(E)) = v(\algrspgreedy(E,R)) \geq v(R) = v(E) = v(\OPT(E)),
    \end{equation*}
    where the inequality follows from \Cref{lem:r(s)-small}.
    Similarly, if $\beta\cdot C \leq v(R) \leq C$, we have that
    \begin{equation*}
        v(\algud_\beta(E)) = v(\algrspgreedy(E,R)) \geq v(R) \geq \beta\cdot C \geq \beta\cdot v(\OPT(E)),
    \end{equation*}
    where the first inequality follows from \Cref{lem:r(s)-small} and the last one from the fact that $v(\OPT(E)) \leq C$. 
    Furthermore, if $v(R) > C$, we have that
    \begin{equation*}
        \frac{v(\algud_\beta(E))}{v(\OPT(E))} = \frac{v(\algrspgreedy(E,R))}{v(\OPT(E))} \geq \max\bigg\{ 1-\frac{v_{i^*}}{C}, \frac{1}{2}+\frac{v_{i^*}}{2C} \bigg\} \geq \frac{2}{3} \geq \beta,
    \end{equation*}
    where the first inequality comes from \Cref{lem:r(s)-large} and the second one from taking the value of $v_{i^*}/C$ that minimizes the maximum, which is $1/3$.
    
    In what follows, we let $E$ be such that $v(E)> C$ and $v(R) < \beta\cdot C$. Note that, in particular, this implies that $E \setminus R \not= \emptyset$. We distinguish two cases, depending on whether $i^*$ belongs to $\OPT(E)$ or not. If $i^* \in \OPT(E)$, we have that $\OPT(E) \subseteq \{i^*\} \cup \{i\in E \mid v_{i^*}+v_i \leq C\} = R$,
    thus \Cref{lem:r(s)-small} implies that
    \begin{equation*}
        v(\algud_\beta(E)) = v(\algrspgreedy(E,R)) \geq v(R) \geq v(\OPT(E)).
    \end{equation*}
    If $i^* \not\in \OPT(E)$, on the other hand, the following chain of inequalities holds:
    \begin{align*}
        \frac{v(\algud_\beta(E))}{v(\OPT(E))} & = \frac{v(\algrspgreedy(E,R))}{v(\OPT(E))} & v(\OPT(E_a)) < \beta\cdot C \text{ for every } a\in [n] \\
        & \geq \frac{v(R)}{v(\OPT(E))} & \text{\Cref{lem:r(s)-small}}\\
        & \geq \frac{v(R)-v(\OPT(E) \cap R)}{v(\OPT(E))-v(\OPT(E) \cap R)} & \frac{v(R)}{v(\OPT(E))} \leq 1\\
        & = \frac{v(R \setminus \OPT(E))}{v(\OPT(E) \setminus R)}\\
        & \geq \frac{v_{i^*}}{v(\OPT(E) \setminus R)} & i^* \in R \setminus \OPT(E) \\
        & \geq \frac{v_{i^*}}{\max\{v_i \mid i\in E\}} & |\OPT(E) \setminus R| \leq 1 \\
        & \geq \frac{1-\beta}{\beta} & \frac{v_{i^*}}{v_i} \geq \frac{1-\beta}{\beta} \text{ for every } i \in E.
    \end{align*}
    Indeed, all steps follow directly from hypotheses, previous results, or direct computations, except for the last two. The second-to-last one comes from the fact that $|\OPT(E) \setminus R| \leq 1$: This is because $v_i+v_j \leq C$ for every $i,j \in \OPT(E)$ and we know from part \ref{item:deleted-items-not-fitting} of \Cref{lem:items-not-fitting} that $v_i+v_j>C$ for every $i,j \in E\setminus R$. The last one comes from two facts. On the one hand, we know that $v_i < \beta\cdot C$ for every $i\in E$. On the other hand, since $E \setminus R \neq \emptyset$, there is an item $k \in E \setminus R$ with $v_k < \beta \cdot C$ and $v_{i^*} + v_k > C $ from part \ref{item:i-star-not-fitting} of \Cref{lem:items-not-fitting}, hence $v_{i^*} \geq (1-\beta) C$. We conclude that, for every $i\in E$, $v_{i^*}/v_i \geq (1-\beta)/\beta$.
    
    Observe that, from the previous analysis, in all cases we have that $v(\algud_\beta(E)) \geq \beta\cdot v(\OPT(E))$ except for the last case when $v(R)<\beta\cdot C$ and $v_{i^*} < \beta\cdot \max\{v_i \mid i\in E\}$. This concludes the proof of the lemma.
\end{proof}

We now turn to obtain specific bounds for deterministic and randomized mechanisms based on the family of mechanisms we have studied. In \Cref{subsec:unit-density-det}, we directly apply \Cref{lem:alg-ud-param} to show the existence of a $\frac{1}{\phi}$-approximate deterministic mechanism and show the tightness of this upper bound for deterministic mechanisms. In \Cref{subsec:unit-density-rand}, we randomly combine \nalgud~with another strategyproof mechanism and beat this approximation factor.

\subsection{Deterministic mechanisms}\label{subsec:unit-density-det}

By choosing $\beta=1/\phi$ in the definition of \nalgud, the following result follows directly from \Cref{lem:alg-ud-param}.

\begin{theorem}
    \label{theo:unit-density-det-UB}
    There is a strategyproof and $\frac{1}{\phi}$-approximate deterministic mechanism on $\calI_{\mathrm{UD}}$.
\end{theorem}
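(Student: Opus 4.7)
The plan is to instantiate the family of mechanisms \nalgud\ at a specific choice of the parameter $\beta$ and then invoke \Cref{lem:alg-ud-param} directly. Since that lemma already establishes strategyproofness of $\algud_\beta$ for every $\beta\in[1/2, 2/3]$, no further work on incentives is required; only the choice of $\beta$ needs to be optimized for the approximation ratio.

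The approximation guarantee given by \Cref{lem:alg-ud-param} is $\min\{\beta,\,(1-\beta)/\beta\}$. The first expression is increasing in $\beta$ and the second is decreasing, so the minimum is maximized at the point where they coincide, i.e., where $\beta=(1-\beta)/\beta$, equivalently $\beta^2+\beta-1=0$. The positive root is
\begin{equation*}
    \beta^* \;=\; \frac{-1+\sqrt{5}}{2} \;=\; \frac{1}{\phi},
\end{equation*}
and a quick check shows $1/\phi\approx 0.618\in[1/2,2/3]$, so this choice is admissible in \Cref{lem:alg-ud-param}. At $\beta=1/\phi$ both terms in the minimum equal $1/\phi$.

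Thus the proposed proof is: apply \Cref{lem:alg-ud-param} with $\beta=1/\phi$ to the mechanism $\algud_{1/\phi}$. Strategyproofness on $\calI_{\mathrm{UD}}$ follows immediately, and the approximation guarantee
\begin{equation*}
    \min\!\Bigl\{\tfrac{1}{\phi},\;\tfrac{1-1/\phi}{1/\phi}\Bigr\} \;=\; \tfrac{1}{\phi}
\end{equation*}
yields the claim. There is no real obstacle here — all the substantive work (the $a$-dominance machinery, \Cref{lem:dominance-implication}, \Cref{lem:dominance-deletion}, and the case analysis bounding $v(\algud_\beta(E))/v(\OPT(E))$) has already been absorbed into \Cref{lem:alg-ud-param}. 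The only thing that could require care is verifying that $1/\phi$ lies in the admissible interval $[1/2,2/3]$, which is immediate from $1/\phi=(-1+\sqrt 5)/2$.
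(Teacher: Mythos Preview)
Your proposal is correct and matches the paper's own proof essentially verbatim: the paper also instantiates \Cref{lem:alg-ud-param} with $\beta=1/\phi$ and observes that $\min\{1/\phi,\,(1-1/\phi)/(1/\phi)\}=1/\phi$. Your extra remarks about optimizing the minimum and checking $1/\phi\in[1/2,2/3]$ are sound but not strictly needed.
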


\begin{proof}
The result follows directly from \Cref{lem:alg-ud-param}: \nalgud~with parameter $\beta=1/\phi$ is strategyproof and $\alpha$-approximate for
\begin{equation*}
    \alpha=\min\Bigg\{\frac{1}{\phi}, \frac{1-\frac{1}{\phi}}{\frac{1}{\phi}}\Bigg\} = \frac{1}{\phi}.\qedhere
\end{equation*}
\end{proof}

The following upper bound shows that \nalgud~with parameter $\beta=1/\phi$ provides the best-possible approximation factor among deterministic mechanisms on $\calI_{\mathrm{UD}}$. We obtain the bound by adapting the one by \citet{FeigenbaumJ17} to the case of items with unit density.

\begin{theorem}
    \label{theo:unit-density-det-LB}
    Let $\scM$ be a strategyproof and $\alpha$-approximate deterministic mechanism on $\calI_{\mathrm{UD}}$. Then, $\alpha \leq 1/\phi$.
\end{theorem}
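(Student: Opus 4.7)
The plan is to adapt the three-item, two-agent instance used by \citet{FeigenbaumJ17} to the unit-density setting, exploiting the identity $1/\phi + 1/\phi^{2} = 1$. Suppose for contradiction that $\scM$ is a deterministic strategyproof $\alpha$-approximate mechanism on $\calI_{\mathrm{UD}}$ with $\alpha > 1/\phi$. Fix a sufficiently small parameter $\eta > 0$ (to be constrained below), set the capacity $C = 1$, and consider a unit-density instance in which agent $1$ owns two items $A, B$ of values $a = 1/\phi - \eta$ and $b = 1/\phi + \eta$, and agent $2$ owns a single item $D$ of value $d = 1/\phi^{2} + \eta/2$. The values are distinct, as required by the standing assumptions.

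First, I would record the combinatorial structure of the instance. A direct computation gives $a+b = 2/\phi > 1$, $b+d = 1 + 3\eta/2 > 1$, and $a+d = 1 - \eta/2 < 1$, so the only feasible two-item set is $\{A,D\}$, and no three-item set is feasible. Consequently, writing $E = \{A,B,D\}$, the unique optimal solutions are $\OPT(E) = \{A,D\}$ with value $1 - \eta/2$, and $\OPT(E \setminus \{A\}) = \{B\}$ with value $b$ (since $b > d$).

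Next, I would argue that the approximation guarantee forces the mechanism's output on both instances. On $E \setminus \{A\}$ the only feasible subsets are $\emptyset$, $\{B\}$, and $\{D\}$, and since $d/b \to 1/\phi$ as $\eta \to 0$, choosing $\eta$ small enough makes $d < \alpha b$, so $\{B\}$ is the unique $\alpha$-approximate subset and $\scM(E \setminus \{A\}) = \{B\}$. On $E$, the only feasible subsets are $\emptyset$, the singletons, and $\{A,D\}$; the best non-optimal option is $\{B\}$ with ratio $b/(a+d) = (1/\phi+\eta)/(1-\eta/2)$, which also tends to $1/\phi$ as $\eta \to 0$ and is therefore strictly below $\alpha$ for $\eta$ small enough. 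Hence $\scM(E) = \{A,D\}$.

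Finally, I would derive the contradiction from strategyproofness: agent $1$ obtains value $a$ from $\scM(E)$ but value $b > a$ from $\scM(E \setminus \{A\})$, so agent $1$ strictly benefits from hiding $A$. This contradicts the strategyproofness of $\scM$, proving that no such $\alpha$ can exist. The only delicate step is the simultaneous bookkeeping of strict inequalities for the three conditions $d < \alpha b$, $b < \alpha(a+d)$, and $a < b$, but each condition holds at the limit $\eta = 0$ (where it reduces to $1/\phi < \alpha$ or is trivial) and thus persists for all sufficiently small $\eta > 0$; choosing $\eta$ to make all of them simultaneously valid is routine.
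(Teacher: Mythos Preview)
Your proposal is correct and follows essentially the same approach as the paper: a two-agent, three-item unit-density instance in which agent~$1$'s larger item is excluded from the optimum, the $\alpha$-approximation forces the mechanism to pick the optimum on the full instance, and hiding agent~$1$'s smaller item then forces the mechanism to pick the larger one, violating strategyproofness. The only differences are cosmetic: the paper normalizes by setting agent~$2$'s item value to~$1$ (with capacity $\phi+1-\varepsilon$) rather than setting $C=1$, and it deduces the output on the reduced instance via strategyproofness rather than via the approximation guarantee, but the underlying construction and logic are the same.
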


To prove this bound, we consider $n=2$ agents, sets of items $E_1=\{i,j\},~E'_1=\{i\}$, and $E_2=\{k\}$, values $v_i=\phi,~ v_j=\phi-\varepsilon$, and $v_k=1$, and a capacity $C=\phi+1-\varepsilon$ for a sufficiently small $\varepsilon > 0$. The bound follows by studying the value of the packed items of agent $1$ prior and upon the deletion of item $j$.

\begin{proof}[Proof of \Cref{theo:unit-density-det-LB}]
    Suppose that $\scM$ is a strategyproof and \mbox{$\alpha$-approximate} mechanism for $\alpha> 1/\phi$, say $\alpha=1/(\phi-\delta)$ for some $\delta >0$.

    Let $\varepsilon \in (0, \delta \phi)$ be an arbitrary value in this range. We consider $n=2$ agents and sets of items $E_1=\{i,j\},~E'_1=\{i\}$, and $E_2=\{k\}$ with values $v_i=\phi,~ v_j=\phi-\varepsilon$, and $v_k=1$. We fix a capacity  $C=\phi+1-\varepsilon$ and consider two instances: $E=E_1\cup E_2$ and $E'=E'_1\cup E_2$. 

    For the instance $E$, we have that $\OPT(E)=\{j,k\}$ and thus $v(\OPT(E))=\phi+1-\varepsilon$. By $\varepsilon < \delta \phi$ we get
    \begin{equation*}
        \frac{\phi}{\phi+1-\varepsilon} < \frac{1}{\phi-\delta}
    \end{equation*}
    and therefore $\scM(E)=\{j,k\}$ as well, since $\scM$ is $1/(\phi-\delta)$-approximate.
    
    For the instance $E'$, we have that $\OPT(E')=\{i\}$ and thus $v(\OPT(E'))=\phi$. Since $\scM$ is strategyproof and $E'_1 \subseteq E_1$, it must hold that $v_1(\scM(E')) \leq v_1(\scM(E)) = \phi-\varepsilon$ and thus $i\not \in \scM(E')$. But this yields $\scM(E') \in \{ \{k\}, \emptyset\}$, thus
    \begin{equation*}
        \frac{v(\scM(E'))}{v(\OPT(E'))} \leq \frac{1}{\phi},
    \end{equation*}
    a contradiction to the fact that $\scM$ is $\frac{1}{\phi-\delta}$-approximate.
\end{proof}

\subsection{Randomized mechanisms}\label{subsec:unit-density-rand}

In this section, we show that randomization allows us to improve the best-possible approximation factor that deterministic mechanisms can achieve in the unit-density setting. The following is our main result in this regard.

\begin{theorem}
    \label{theo:unit-density-rand-UB}
    There exists a strategyproof and $\smash{\frac{2}{3}}$-approximate randomized mechanism on $\calI_{\mathrm{UD}}$.
\end{theorem}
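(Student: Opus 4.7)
The plan is to define the randomized mechanism \nalgmixud\ as a lottery over two deterministic strategyproof mechanisms on $\calI_{\mathrm{UD}}$. With probability $2/3$, it runs $\algud_{2/3}$; with the remaining probability $1/3$, it runs a companion deterministic mechanism \nalglarge\ that is a variant of $\algud$ specifically constructed so that the most valuable item of the (appropriately chosen) restricted set is always present in the output. Concretely, \nalglarge\ retains the individual-optimum trigger of \nalgud, and in the restricted-set branch, it reserves the distinguished item in the output before running \nalgrspgreedy\ on the remaining items with reduced capacity; crucially, the restricted set used by \nalglarge\ is defined so that it always contains the most valuable item $v_{\max}$ of the instance.

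\textbf{Strategyproofness.} Universal strategyproofness of \nalgmixud\ follows from the fact that it is a lottery over two deterministic strategyproof mechanisms. The strategyproofness of $\algud_{2/3}$ is exactly Lemma~\ref{lem:alg-ud-param}. For \nalglarge, strategyproofness is established by an analogous argument that extends the $a$-dominance framework of Lemmas~\ref{lem:dominance-implication}--\ref{lem:dominance-deletion}: one shows that, whenever an agent $a$ hides an item, the resulting restricted set together with the reserved distinguished item still $a$-dominates its counterpart after deletion, so that the quota assigned to each agent is non-increasing. A new case distinction is needed for hiding strategies that change the identity of the distinguished item.

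\textbf{Approximation.} Write $v_1 = v(\algud_{2/3}(E))$ and $v_2 = v(\alglarge(E))$; the goal is to prove $2 v_1 + v_2 \geq 2 v(\OPT(E))$ for every instance in $\calI_{\mathrm{UD}}$, which yields $\E{v(\algmixud(E))} \geq \frac{2}{3} v(\OPT(E))$. We split according to Lemma~\ref{lem:alg-ud-param}. If the bad-case conditions for $\algud_{2/3}$ fail, then Lemma~\ref{lem:alg-ud-param} gives $v_1 \geq \frac{2}{3} v(\OPT)$, and an analogous analysis for \nalglarge\ yields $v_2 \geq \frac{2}{3} v(\OPT)$, whence $2 v_1 + v_2 \geq 2 v(\OPT)$. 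If the bad-case conditions hold, we have $v(R) < \frac{2}{3} C$ and $v_{i^*} < \frac{2}{3} v_{\max}$; from the proof of Lemma~\ref{lem:alg-ud-param} we additionally derive $v_{i^*} > C - v_{\max} > \frac{1}{3} C$ and, consequently, $v_{\max} < 2 v_{i^*}$, together with $v_1 \geq v(R) \geq v(\OPT) - v_{\max} + v_{i^*}$ (using $i^* \in R \setminus \OPT$ and $\lvert \OPT \setminus R \rvert \leq 1$). Combined with $v_2 \geq v_{\max}$, which holds because \nalglarge\ reserves the most valuable item of the instance, we obtain
\begin{equation*}
2 v_1 + v_2 \;\geq\; 2\bigl(v(\OPT) - v_{\max} + v_{i^*}\bigr) + v_{\max} \;=\; 2 v(\OPT) + 2 v_{i^*} - v_{\max} \;>\; 2 v(\OPT),
\end{equation*}
where the strict inequality uses $v_{\max} < 2 v_{i^*}$.

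\textbf{Main obstacle.} The principal technical challenge is establishing the strategyproofness of \nalglarge: while the $a$-dominance framework from the deterministic analysis adapts naturally, the identity of the distinguished item $v_{\max}$ can shift under item hiding (in contrast to the more structurally stable $i^*$ used by \nalgud, whose behaviour under deletions is controlled by Lemma~\ref{lem:items-not-fitting}), which forces several new case distinctions. A secondary challenge is verifying that, whenever the bad-case conditions for $\algud_{2/3}$ fail, \nalglarge\ is also at least $\frac{2}{3}$-approximate on the instance; this will likely require handling separately the subcase in which the individual-optimum trigger fires and the subcase in which $\algud_{2/3}$ succeeds purely through $v(R) \geq \frac{2}{3} C$ or $v_{i^*} \geq \frac{2}{3} v_{\max}$.
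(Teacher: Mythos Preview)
Your overall architecture---randomizing with weights $2/3$ and $1/3$ over $\algud_{2/3}$ and a companion mechanism that always secures $v_{\max}$---matches the paper's, and your Case~B analysis (when $v(R)<\tfrac{2}{3}C$ and $v_{i^*}<\tfrac{2}{3}v_{\max}$) is essentially the paper's argument for the subcase $v(\algud_{2/3}(E))<\tfrac{2}{3}v(\OPT(E))$.

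The genuine gap is in Case~A. Your claim that ``whenever the bad-case conditions for $\algud_{2/3}$ fail, \nalglarge\ is also at least $\tfrac{2}{3}$-approximate'' is false. Take $C=10$ and three agents owning single items of values $6$, $5.5$, and $4.4$. No individual optimum reaches $\tfrac{2}{3}C$; one computes $i^*=5.5$, $R=\{5.5,4.4\}$, so $v(R)=9.9>\tfrac{2}{3}C$ and the bad-case conditions fail. Yet any version of \nalglarge\ that anchors on $v_{\max}=6$ outputs only $\{6\}$ (neither $5.5$ nor $4.4$ fits alongside), giving $v_2=6<\tfrac{2}{3}\cdot 9.9$. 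Since in Case~A you only have $v_1\geq\tfrac{2}{3}v(\OPT)$, the bound $2v_1+v_2\geq 2v(\OPT)$ does \emph{not} follow from $v_2\geq\tfrac{2}{3}v(\OPT)$; you need a sharper lower bound on $v_1$ that depends on how badly \nalglarge\ underperforms.

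This is exactly what the paper supplies: it does \emph{not} split on the bad-case conditions of $\algud_{2/3}$, but symmetrically on which mechanism misses $\tfrac{2}{3}$. When $v(\alglarge(E))<\tfrac{2}{3}v(\OPT(E))$, the paper runs a four-subcase analysis ($v(R)>C$ with $i_{\max}\in R$ or $i_{\max}\notin R$; $v(R)\leq C$ with $i^*\in\OPT$ or $i^*\notin\OPT$), in each deriving a strengthened bound on $v(\algud_{2/3}(E))$---e.g., $\geq(1-\tfrac{v_{\max}}{2C})v(\OPT)$ via Lemma~\ref{lem:r(s)-large}, or $\geq v_{i^*}+\tfrac{1}{2}v_{\max}$---that exactly compensates for the deficit of \nalglarge. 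Your proposal does not contain this half of the argument, and the ``secondary challenge'' you flag cannot be resolved as stated because its premise is wrong. A smaller point: the paper's \nalglarge\ puts \emph{all} of $E_{b^*}$ (not just $j^*$) into the restricted set $S$, which is what makes the $a$-dominance argument for strategyproofness go through cleanly via simple subset inclusions (Lemma~\ref{lem:subsets-large}); your ``reserve $v_{\max}$ and reduce capacity'' variant would require a different strategyproofness proof.
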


The mechanism providing this approximation factor, which we refer to as \nalgmixud~and describe in \Cref{alg:mix-ud}, randomizes over two deterministic mechanisms: It returns \nalgud~with parameter $\beta=2/3$ with a probability of $2/3$ and a mechanism we call \nalglarge~with the remaining probability of $1/3$. We denote its output by $\algmixud(E)$.
\begin{figure}[t]
\begin{minipage}[t]{0.48\textwidth}
    \begin{algorithm}[H]
    \caption{\nalgmixud~$(\algmixud)$}\label{alg:mix-ud}
    \begin{algorithmic}
    \Require set of items $E$ (partition $E_1,\ldots,E_n$, values $v$, and sizes $s$), capacity $C$
    \Ensure subset $E^* \subseteq E$ with $s(E^*) \leq C$
    \State $X \gets \text{Bernoulli}(2/3)$
    \If{$X = 1$}
        \State \textbf{return } $\algud_{2/3}(E)$
    \Else
        \State \textbf{return } $\alglarge(E)$
    \EndIf
    \end{algorithmic}
    \end{algorithm}
\end{minipage}
\hfill
\begin{minipage}[t]{0.48\textwidth}
    \begin{algorithm}[H]
    \caption{\nalglarge~$(\alglarge)$}\label{alg:large}
    \begin{algorithmic}
    \Require set of items $E$ (partition $E_1,\ldots,E_n$, values $v$, and sizes $s$), capacity $C$
    \Ensure subset $E^* \subseteq E$ with $s(E^*) \leq C$
    \If{$\max_{a \in [n]} v(\OPT(E_a)) \geq \frac{2}{3} C$}
        \State $a' \gets \arg\max_{a \in [n]} v(\OPT(E_a))$
        \State \textbf{return} $\OPT(E_{a'})$
    \EndIf
    \State $j^* \gets \arg\max \{v_i \mid i \in E \}$, $b^*\gets a(j^*)$
    \State $S \gets E_{b^*} \cup \{i \in E \setminus E_{b^*} \mid v_{j^*} + v_i \leq C\}$
    \State \textbf{return } $\algrspgreedy(E, S)$
    \end{algorithmic}
    \end{algorithm}
\end{minipage}
\end{figure}

\nalglarge~is formally described in \Cref{alg:large}, and we denote its output for a set of items $E$ as $\alglarge(E)$. This mechanism is similar to \nalgud~with parameter $2/3$, but it is tailored to perform better in situations in which the optimum contains a highly valuable item, in which \nalgud~with parameter $2/3$ may be only $\smash{\frac{1}{2}}$-approximate. In the first step, it also checks whether an individual optimum provides a $\smash{\frac{2}{3}}$-approximation directly, returning the most valuable individual optimum if this is the case. Otherwise, it defines $j^*$ as the most valuable item, $b^*$ as its owner, and $S$ as the set containing the items in $E\setminus E_{b^*}$ that fit together with $j^*$, and all items in $E_{b^*}$. It finally returns $\algrspgreedy(E,S)$. \Cref{fig:fit-comparison} depicts the sets $R$ and $S$ defined in \nalgud~and \nalglarge~for two example instances.
We let $S(E)$ denote the set $S$ defined in \nalglarge~when run with input $E$, and let $j^*(E)$ and $b^*(E)$ denote the values of $j^*$ and $b^*$ when running this mechanism with input $E$, respectively. We omit $E$ when clear from the context.

\begin{figure}[t]
\centering
\begin{tikzpicture}[PatternBlueNE/.style={pattern=north east lines, pattern color=blue!50!black}, PatternOrangeNE/.style={pattern=north east lines, pattern color=orange!50!black}, PatternGreenNE/.style={pattern=north east lines, pattern color=green!50!black}, PatternBlueSE/.style={pattern=south east lines, pattern color=blue!50!black}, PatternOrangeSE/.style={pattern=south east lines, pattern color=orange!50!black}, PatternGreenSE/.style={pattern=south east lines, pattern color=green!50!black}]

\draw[preaction={fill = lightblue}, PatternBlueSE] (0,0) rectangle (1.2,0.7);
\draw[preaction={fill = lightblue},PatternBlueSE] (1.2,0) rectangle (2.3,0.7);
\draw[preaction={fill = lightorange}, PatternOrangeNE] (2.3,0) rectangle (3.3,0.7); 
\draw[preaction={fill = lightgreen}, PatternGreenNE] (3.3,0) rectangle (4.1,0.7);
\fill[PatternGreenSE] (3.307,0.007) rectangle (4.093,0.693);
\draw[preaction={fill = lightgreen}, PatternGreenNE] (4.1,0) rectangle (4.6,0.7);
\fill[PatternGreenSE] (4.107,0.007) rectangle (4.593,0.693);
\draw[preaction={fill = lightorange}, PatternOrangeNE] (4.6,0) rectangle (4.8,0.7);
\fill[PatternOrangeSE] (4.607,0.007) rectangle (4.793,0.693);

\draw [very thick, -](2,-0.2) -- (2,0.9);
\Text[x=2,y=1.1]{$C=10$};

\Text[x=-0.3,y=0.35]{$v$};

\Text[x=0.6,y=0.35]{$6$};

\Text[x=1.75,y=0.35]{$5.5$};

\Text[x=2.8,y=0.35]{$5$};

\Text[x=3.7,y=0.35]{$4$};

\Text[x=4.35,y=0.35]{$2.5$};

\Text[x=4.7,y=0.35]{$1$};

\begin{scope}[shift={(7,0)}]

\draw[preaction={fill = lightorange},PatternOrangeSE] (0,0) rectangle (1.3,0.7);
\draw[preaction={fill = lightgreen},PatternGreenNE] (1.3,0) rectangle (2.1,0.7);
\draw[preaction={fill = lightblue},PatternBlueNE] (2.1,0) rectangle (2.3,0.7);   
\fill[PatternBlueSE] (2.107,0.007) rectangle (2.293,0.693);   

\draw [very thick, -](2,-0.2) -- (2,0.9);
\Text[x=2,y=1.1]{$C=10$};

\Text[x=-0.3,y=0.35]{$v$};

\Text[x=0.65,y=0.35]{$6.5$};

\Text[x=1.7,y=0.35]{$4$};

\Text[x=2.2,y=0.35]{$1$};

\end{scope}

\end{tikzpicture}
\caption{Examples of restricted sets $R(E)$ and $S(E)$ when running \nalgud\ with parameter $\beta=2/3$ and \nalglarge, respectively. Items among $E$ with a background of upward diagonal lines belong to $R(E)$; items among $E$ with a background of downward diagonal lines belong to $S(E)$.}
\label{fig:fit-comparison}
\end{figure}
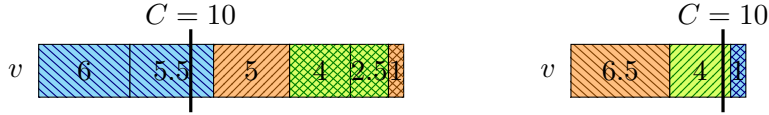

We now proceed with the proof. Strategyproofness of \nalgmixud~follows from the---already proven---fact that \nalgud~is strategyproof and the fact that \nalglarge~is strategyproof as well. To see the latter, we first note that if an agent can provide a $\frac{2}{3}$-approximation on their own, hiding items would never be profitable. Otherwise, we show that hiding items cannot lead to an increase in the quota assigned to each agent by \nalgrspgreedy~using the fact that, when agent $a$ hides item $i$, the set of items of this agent in the restricted set $S$ becomes a subset of the original one, whereas the set of items of all other agents in the restricted set $S$ becomes a superset of the original one. We state this important property in the following lemma.

\begin{lemma}
\label{lem:subsets-large}
    Let $E$ be a set of items of an instance in $\calI_{\mathrm{UD}}$ such that $v(\OPT(E_a)) < \frac{2}{3} C$ for every $a\in [n]$. Then, for every $a \in [n]$ and $i\in E_a$, $S(E\setminus \{i\}) \cap E_a \subseteq S(E) \cap E_a$ and $S(E)\setminus E_a \subseteq S(E \setminus \{i\}) \setminus E_a$.
\end{lemma}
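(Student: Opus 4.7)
The plan is a straightforward case analysis on the identity of the hidden item $i$, keeping track of how the pivot $j^*$ and its owner $b^*$ move. First observe that the hypothesis $v(\OPT(E_b))<\tfrac{2}{3}C$ for all $b\in[n]$ is inherited by $E\setminus\{i\}$: for $b=a$ the optimum can only decrease when an item is removed, and for $b\neq a$ the set $E_b$ is unchanged. Hence both $S(E)$ and $S(E\setminus\{i\})$ are defined through the second branch of \nalglarge. Write $j^*=j^*(E)$, $b^*=b^*(E)$, $j'=j^*(E\setminus\{i\})$, $b'=b^*(E\setminus\{i\})$, and split on whether $i=j^*$.

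\textbf{Case A: $i\neq j^*$.} Then $j'=j^*$ and $b'=b^*$. If $a=b^*$, then $S(E)\cap E_a=E_a$ and $S(E\setminus\{i\})\cap E_a=E_a\setminus\{i\}$, so the first inclusion is immediate; the complementary sides $S(E)\setminus E_a$ and $S(E\setminus\{i\})\setminus E_a$ are literally the same set since removing an item of $E_{b^*}$ does not affect $E\setminus E_{b^*}$. If $a\neq b^*$, then $S(E)\cap E_a=\{k\in E_a\mid v_{j^*}+v_k\leq C\}$ and $S(E\setminus\{i\})\cap E_a$ is the same set intersected with $E_a\setminus\{i\}$, while $S(E)\setminus E_a$ and $S(E\setminus\{i\})\setminus E_a$ coincide.

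\textbf{Case B: $i=j^*$ (so $a=b^*$).} Now $j'$ is the second-most valuable item in $E$, and in particular $v_{j'}<v_{j^*}$. For the $E_a$-side, $S(E)\cap E_a=E_a$ since $a=b^*$, whereas $S(E\setminus\{i\})\cap E_a\subseteq E_a\setminus\{i\}\subseteq E_a$, regardless of whether $b'=a$ or $b'\neq a$. For the complementary side, write $S(E)\setminus E_a=\{k\in E\setminus E_a\mid v_{j^*}+v_k\leq C\}$. If $b'=a$ then $S(E\setminus\{i\})\setminus E_a=\{k\in E\setminus E_a\mid v_{j'}+v_k\leq C\}$ and the inclusion follows from $v_{j'}\leq v_{j^*}$. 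If $b'\neq a$ then
\[
S(E\setminus\{i\})\setminus E_a=E_{b'}\cup\{k\in E\setminus(E_a\cup E_{b'})\mid v_{j'}+v_k\leq C\},
\]
and any $k\in S(E)\setminus E_a$ either already lies in $E_{b'}$ or satisfies $v_{j'}+v_k\leq v_{j^*}+v_k\leq C$, so it lies in the right-hand side.

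The only real subtlety is keeping the roles of $j^*$ and $b^*$ straight when the hidden item is itself the maximum; once the case split is set up, every step reduces to a direct comparison of two definitions of $S$. I do not anticipate a significant obstacle beyond bookkeeping.
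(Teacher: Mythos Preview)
Your proof is correct and follows essentially the same case split as the paper (on whether $i=j^*$), with only cosmetic differences: you further split Case~B on whether $b'=a$, while the paper handles both sub-cases at once by observing that $v_k+v_{j'}\le C$ suffices for $k\in S'$ regardless of $b'$. One tiny edge case you skip is $E=\{i\}$, where $j'$ is undefined; the paper dispatches this in one line before introducing the second-largest item.
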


\begin{proof}
Let $E$, $a\in [n]$, and $i\in E_a$ be arbitrary. We use $S=S(E)$ and $S'=S(E\setminus \{i\})$ for simplicity. If $E=\{i\}$, then $S=\{i\}$ and $S' = \emptyset$, so the result follows trivially. We thus assume this is not the case in what follows, so we let $j_2=\arg\max \{v_j \mid j \in E\setminus \{j^*(E)\}\}$ denote the second largest item. We distinguish two cases and conclude the properties in the statement in both of them.

If $i \not= j^*(E)$, it is clear that $j^*(E\setminus \{i\}) = j^*(E)$, thus $b^*(E\setminus \{i\}) = b^*(E)$. We denote these common values simply as $j^*$ and $b^*$. Let $j\in S'_a$ be arbitrary. We have either $a=b^*$ or $j\in E\setminus E_{b^*}$ with $v_j+v_{j^*} \leq C$. In both cases, the fact that $j\in S_a$ directly follows. Let now $j\in S\setminus S_a$ be arbitrary. We have either $j\in E_{b^*}$ or $j\in E\setminus E_{b^*}$ with $v_j+v_{j^*}\leq C$. In either case, the fact that $j\in S'\setminus S'_a$ directly follows. We conclude that $S'_a\subseteq S_a$ and $S\setminus S_a \subseteq S'\setminus S'_a$.

If $i = j^*(E)$, we have that $b^*(E)=a$ and $j^*(E\setminus \{i\})=j_2$; we denote $b_2=b^*(E \setminus \{i\})$ for simplicity. Since $S_a=E_a$, the fact that $S'_a \subseteq S_a$ is straightforward. To see the other inclusion, let $j\in S\setminus S_a$ be an arbitrary item. Since  $j\in S\setminus S_a$, we know that $v_j+v_i\leq C$. As $v_{j_2}<v_i$, this yields $v_j+v_{j_2}\leq C$ and thus $j\in S'\setminus S'_a$. We conclude that $S\setminus S_a \subseteq S'\setminus S'_a$ holds as well.
\end{proof}

The previous lemma immediately implies that $S(E)$ $a$-dominates $S(E\setminus \{i\})$, which is the main ingredient to prove the strategyproofness of \nalglarge.

\begin{lemma}\label{lem:sp-large}
    \nalglarge~is strategyproof on $\calI_{\mathrm{UD}}$.
\end{lemma}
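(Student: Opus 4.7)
The plan is to apply \Cref{lem:one-item-suffices}, so it suffices to fix an arbitrary set of items $E$ of an instance in $\calI_{\mathrm{UD}}$, an agent $a \in [n]$, and an item $i \in E_a$, and show that $v_a(\alglarge(E)) \geq v_a(\alglarge(E \setminus \{i\}))$. I will split into cases according to which branch of \Cref{alg:large} triggers on $E$ and on $E \setminus \{i\}$.

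First, I would handle the case where the first branch triggers on $E$, i.e.\ some agent $a'$ satisfies $v(\OPT(E_{a'})) \geq \tfrac{2}{3}C$. If $a' = a$, then $v_a(\alglarge(E)) = v(\OPT(E_a))$, which is at least $\max\{v(E') : E' \subseteq E_a \setminus \{i\},\; v(E')\leq C\}$ and hence at least $v_a(\alglarge(E \setminus \{i\}))$ regardless of which branch triggers on $E \setminus \{i\}$. If $a' \neq a$, then $v_a(\alglarge(E)) = 0$, and I need to argue that $a$ cannot become the winner of the first branch on $E \setminus \{i\}$: other agents' individual optima are unchanged while $v(\OPT(E_a \setminus \{i\})) \leq v(\OPT(E_a)) \leq v(\OPT(E_{a'}))$, and a fixed tie-breaking rule guarantees that $a'$ (or another non-$a$ agent) remains the winner.

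Second, I would handle the case where the first branch does not trigger on $E$, that is, $v(\OPT(E_b)) < \tfrac{2}{3}C$ for every $b \in [n]$. Since hiding $i$ cannot increase any $v(\OPT(E_b))$, this property carries over to $E \setminus \{i\}$, so on both inputs the mechanism returns $\algrspgreedy$ applied to the corresponding set $S$. Here the key step is to combine \Cref{lem:subsets-large} with \Cref{lem:dominance-implication}: the containments $S(E\setminus\{i\}) \cap E_a \subseteq S(E) \cap E_a$ and $S(E) \setminus E_a \subseteq S(E \setminus \{i\}) \setminus E_a$ imply directly that $S(E)$ $a$-dominates $S(E \setminus \{i\})$ (the lexicographic conditions on the sorted value vectors follow immediately from the subset relations, since adjoining items to a sorted list can only increase entries at every position). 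Therefore \Cref{lem:dominance-implication} yields
\begin{equation*}
\sum_{j \in S(E)_a} v_j\, x_j(S(E)) \;\geq\; \sum_{j \in S(E \setminus \{i\})_a} v_j\, x_j(S(E \setminus \{i\})).
\end{equation*}
Since agent $a$'s contribution to the output of \nalgrspgreedy{} is $\max\{v(E') : E' \subseteq E_a,\; v(E') \leq \text{quota}_a\}$, and on $E\setminus\{i\}$ both the feasible set $E_a \setminus \{i\}$ is smaller and the quota is weakly smaller, the inequality $v_a(\alglarge(E)) \geq v_a(\alglarge(E\setminus\{i\}))$ follows.

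The main obstacle I expect is the bookkeeping in the second case: one must verify that the mere subset inclusions from \Cref{lem:subsets-large} suffice to satisfy the cardinality condition (i) and the lexicographic conditions (ii)--(iii) in the definition of $a$-dominance. This reduces to the elementary observation that if $T \subseteq T'$ then the $k$th-largest value of $T$ is at most the $k$th-largest value of $T'$ for every $k \leq |T|$, which is routine but must be stated explicitly. All other steps, including ruling out that the first branch triggers on $E \setminus \{i\}$ but not on $E$, follow from monotonicity of individual optima under item deletion.
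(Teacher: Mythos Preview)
Your proposal is correct and follows essentially the same approach as the paper: the same case split on whether the first branch of \nalglarge{} triggers, the same reduction via \Cref{lem:subsets-large} to $a$-dominance of $S(E)$ over $S(E\setminus\{i\})$, and the same appeal to \Cref{lem:dominance-implication} to compare quotas. The only cosmetic difference is that the paper organizes the first-branch cases by looking at the winner on $E\setminus\{i\}$ rather than on $E$, but the content (including your observation that subset inclusions immediately give the lexicographic conditions for $a$-dominance) is identical.
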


\begin{proof}
    We fix a set of items $E$ of an instance in $\calI_{\mathrm{UD}}$, an agent $a \in [n]$, and an item $i \in E_a$ arbitrarily. If $a = \arg\max \{v(\OPT(E_b)) \mid b \in [n] \}$ and $v(\OPT(E_a)) \geq \frac{2}{3} C$, the sole fact that \nalglarge~returns a feasible set of items implies that
    \begin{equation*}
        v_a(\alglarge(E \setminus \{i\})) \leq \max\{v(E') \mid E'\subseteq E_a\setminus \{i\} \text{ s.t. } v(E') \leq C\} \leq v(\OPT(E_a)) = v_a(\alglarge(E)).
    \end{equation*}
    If $a = \arg\max \{v(\OPT(E_b \setminus \{i\})) \mid b \in [n] \}$ and $v(\OPT(E_a \setminus \{i\})) \geq \frac{2}{3} C$, we necessarily have that $a = \arg\max \{v(\OPT(E_b)) \mid b \in [n] \}$ and $v(\OPT(E_a)) \geq \frac{2}{3} C$ as well, so the previous chain of inequalities remains valid.

    If $a \not= \arg\max \{v(\OPT(E_b \setminus \{i\})) \mid b \in [n] \}$ and $v(\OPT(E_b)) \geq \frac{2}{3} C$ for some $b \in [n]\setminus \{a\}$, then
    \begin{equation*}
        v_a(\alglarge(E \setminus \{i\})) = 0 \leq v_a(\alglarge(E)).
    \end{equation*}
    
    In what follows, we consider the case $v(\OPT(E_b)) < \frac{2}{3} C$ for every $b \in [n]$.

    From \Cref{lem:subsets-large}, we have $S(E\setminus \{i\}) \cap E_a \subseteq S(E) \cap E_a$ and \mbox{$S(E)\setminus E_a \subseteq S(E \setminus \{i\}) \setminus E_a$}. The first inequality yields $|S(E)\cap E_a| \geq |S(E\setminus \{i\}) \cap E_a|$ and $v_{i_k(S(E)\cap E_a)} \geq v_{i_k(S(E\setminus \{i\}) \cap E_a)}$ for every $k\in [|S(E\setminus \{i\}) \cap E_a|]$. The second inequality yields $|S(E)\setminus E_a| \leq |S(E \setminus \{i\}) \setminus E_a|$ and $v_{i_k(S(E)\setminus E_a)} \leq v_{i_k(S(E \setminus \{i\}) \setminus E_a)}$ for every $k \in [|S(E)\setminus E_a|]$. We conclude that $S(E)$ $a$-dominates $S(E\setminus \{i\})$.
    Therefore,
    \begin{align*}
        v_a(\alglarge(E \setminus \{i\})) & = \max\bigg\{v(E')~\bigg|~E'\subseteq E_a \text{ s.t. } v(E') \leq \sum_{j \in S_a(E \setminus \{i\})} v_jx_j(S(E \setminus \{i\}))\bigg\} \\
        & \leq \max\bigg\{v(E')~\bigg|~E'\subseteq E_a \text{ s.t. } v(E') \leq \sum_{j \in S_a(E)} v_jx_j(S(E)) \bigg\}\\
        & = v_a(\alglarge(E)),
    \end{align*}
    where the inequality follows from \Cref{lem:dominance-implication}.
    
    We conclude that, in all the cases above, $v_a(\alglarge(E \setminus \{i\})) \leq v_a(\alglarge(E))$. \Cref{lem:one-item-suffices} thus implies that \nalglarge~is strategyproof.
\end{proof}

Regarding the approximation factor of \nalgmixud\, it will follow from the fact that, whenever one of the two mechanisms over which \nalgmixud\ randomizes is not $\smash{\frac{2}{3}}$-approximate, the other one compensates by providing a better-than-$\smash{\frac{2}{3}}$ approximation, in a way such that the expected value of the packed items provides the desired approximation. Intuitively, if $\algud_{2/3}(E)<\frac{2}{3}\OPT(E)$, the optimum includes the most valuable item and \nalglarge\ performs well; if $\alglarge(E) < \frac{2}{3}\OPT(E)$, the optimum contains two items that are more valuable than $\frac{1}{3} C$ and fit together and \nalgud\ returns a good solution. 
Before proceeding with the formal proof, we state some simple observations about the performance of \nalglarge~and its relation to \nalgud~with parameter $\beta=\lfrac{2}{3}$.

\begin{lemma}\label{lem:large-easy-cases}
    \nalglarge~is $\alpha$-approximate on $\calI_{\mathrm{UD}}$ for $\alpha=\max\big\{\frac{\max\{v_i \mid i\in E\}}{C}, \frac{1}{2} \big\}$, and for every set of items $E$ of an instance in $\calI_{\mathrm{UD}}$ it satisfies $v(\alglarge(E)) \geq \max\{v_i \mid i\in E\}$. Furthermore, for every set of items $E$ of an instance in $\calI_{\mathrm{UD}}$ such that $\max\{ v_i \mid i\in E\} \leq \frac{1}{2}C$ or \mbox{$\max\{ v_i \mid i\in E\} \geq \frac{2}{3}C$}, it holds that 
    \begin{equation*}
        \frac{v(\alglarge(E))}{v(\OPT(E))} = \frac{v(\algud_{2/3}(E))}{v(\OPT(E))} \geq \frac{2}{3}.
    \end{equation*}
\end{lemma}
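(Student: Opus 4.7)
The plan is to prove the three claims in the order \emph{second, third, first}, since the first will follow immediately from the other two. Let $v_{j^*}=\max\{v_i\mid i\in E\}$.

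For the bound $v(\alglarge(E))\geq v_{j^*}$, I would split by which branch of \nalglarge\ is taken. If the first branch is triggered, then $v(\alglarge(E))=v(\OPT(E_{a'}))\geq v(\OPT(E_{b^*}))\geq v_{j^*}$ since $\{j^*\}\subseteq E_{b^*}$ is feasible. Otherwise, \nalglarge\ runs $\algrspgreedy(E,S)$; because all densities equal~$1$, the greedy order on $S$ is by decreasing value and $j^*$ comes first. Since $v_{j^*}\leq C$, we have $x_{j^*}(S)=1$, so the quota $\sum_{i\in S_{b^*}}v_ix_i(S)$ assigned to $b^*$ is at least $v_{j^*}$, and packing $\{j^*\}\subseteq E_{b^*}$ within this quota already yields value $v_{j^*}$.

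For the third claim I would handle the two thresholds separately. If $v_{j^*}\geq 2C/3$, then $v(\OPT(E_{b^*}))\geq v_{j^*}\geq 2C/3$, so both $\alglarge$ and $\algud_{2/3}$ execute their first branch and return $\OPT(E_{a'})$ for the \emph{same} maximizer $a'$; the common value is at least $2C/3\geq (2/3)\,v(\OPT(E))$. If $v_{j^*}\leq C/2$, any two items satisfy $v_i+v_j\leq 2v_{j^*}\leq C$, whence every item lies in $P(E)$, giving $i^*(E)=j^*$ and $R(E)=E$; the definition of $S(E)$ collapses to $E$ as well. Hence when the first branches are not taken, both mechanisms reduce to $\algrspgreedy(E,E)$ and the outputs coincide. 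For the approximation, if $v(E)\leq C$ Lemma~\ref{lem:r(s)-small} yields $v(\algrspgreedy(E,E))\geq v(E)=v(\OPT(E))$; otherwise $\ell(E)\geq 2$ because the two most valuable items together have value at most $2v_{j^*}\leq C$, and Lemma~\ref{lem:r(s)-large} with $v^*=v_{j^*}$ gives ratio $\max\{1-v_{j^*}/C,\;1/2+v_{j^*}/(2C)\}$, which is minimized at $v_{j^*}/C=1/3$ with value exactly~$2/3$.

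The first claim then follows at once. Since $v(\OPT(E))\leq C$ and $v(\alglarge(E))\geq v_{j^*}$, we get $v(\alglarge(E))/v(\OPT(E))\geq v_{j^*}/C$. For the $1/2$-bound, if $v_{j^*}\geq v(\OPT(E))/2$ it again follows from the second claim; otherwise $v_{j^*}<v(\OPT(E))/2\leq C/2$, and the third claim (in the regime $v_{j^*}\leq C/2$) gives the stronger $2/3$-approximation. The main obstacle I expect is the careful verification in the $v_{j^*}\leq C/2$ subcase that the top branches of the two mechanisms align and that the restricted sets $R(E)$ and $S(E)$ both collapse to $E$, which requires unpacking the definitions of $P(E)$, $i^*(E)$, $R(E)$, and $S(E)$; everything else is then a clean application of the two preceding lemmas together with the elementary optimization $\min_{x\in[0,1]}\max\{1-x,\,1/2+x/2\}=2/3$.
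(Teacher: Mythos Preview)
Your proposal is correct and follows essentially the same approach as the paper: both rely on the observation that $j^*$ is always packed integrally (so $v(\alglarge(E))\geq v_{j^*}$), that $R(E)=S(E)=E$ when $v_{j^*}\leq C/2$, and then invoke Lemmas~\ref{lem:r(s)-small} and~\ref{lem:r(s)-large}. The only difference is organizational---you group by claim while the paper groups by the value range of $v_{j^*}$---and your derivation of the $\tfrac{1}{2}$-bound via the dichotomy $v_{j^*}\gtrless v(\OPT(E))/2$ is a slight shortcut over the paper's case split on $v_{j^*}\gtrless C/2$. One small point worth making explicit: in the $v_{j^*}\leq C/2$ subcase you should note that when the first branch \emph{is} triggered (which can happen even though $v_{j^*}\leq C/2$, if some agent has many small items summing past $2C/3$), both mechanisms execute identical code and return the same $\OPT(E_{a'})$, so equality and the $2/3$ bound hold there too.
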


\begin{proof}
    Let $E$ be an arbitrary set of items and denote $i_{\max} = \arg\max\{v_i \mid i\in E\}$ and \mbox{$v_{\max}=\max\{v_i \mid i\in E\}$}. If $v_{\max} \geq \frac{2}{3}C$, then $\max_{a \in [n]} v(\OPT(E_a)) \geq v_{\max} \geq \frac{2}{3} C$ and, denoting $a' = \arg\max_{a \in [n]} v(\OPT(E_a))$, we have that
    \begin{equation*}
        v(\alglarge(E)) = v(\algud_{2/3}(E)) = v(\OPT(E_{a'}) \geq \frac{2}{3}v(\OPT(E)),
    \end{equation*}
    due to $v(\OPT(E))\leq C$. The inequalities $v(\alglarge(E)) \geq v_{\max}$ and \mbox{$v(\alglarge(E)) \geq \frac{v_{\max}}{C}v(\OPT(E))$} follow directly in this case.
    Similarly, if $v_{\max} \leq \frac{1}{2}C$, then $R=S=E$, and therefore \mbox{$\alglarge(E)= \algud_{2/3}(E)=\algrspgreedy(E,E)$}. If $v(E)\leq C$, then \Cref{lem:r(s)-small} yields
    \begin{equation*}
        v(\algrspgreedy(E,E)) \geq v(E) = v(\OPT(E)).
    \end{equation*}
    If $v(E)>C$, then \Cref{lem:r(s)-large} yields
    \begin{equation*}
        v(\algrspgreedy(E,E)) \geq \max\Bigg\{ 1-\frac{v_{\max}}{C}, \frac{1}{2}+\frac{v_{\max}}{2C} \Bigg\} \geq  \frac{2}{3}v(\OPT(E)),
    \end{equation*}
    where the last inequality follows from computing the worst-case value of $v_{\max}/C$. This concludes the proof of the second claim in the statement. Note that, in these cases with $v_{\max} \geq \frac{2}{3}C$ or $v_{\max} \leq \frac{1}{2}C$, the fact that \nalglarge~is $\frac{1}{2}$-approximate follows directly. 

    It remains to prove that \nalglarge~is $\frac{1}{2}$-approximate when $\frac{1}{2}C<v_{\max}<\frac{2}{3}C$, that it is $\frac{v_{\max}}{C}$-approximate when $v_{\max}\leq \frac{2}{3}C$, and that $v(\alglarge(E)) \geq v_{\max}$ when $v_{\max}\leq \frac{2}{3}C$.
    
    We start with the last two claims.  
    If $v_{\max}\leq \frac{2}{3}C$, as $i_{\max}\in S$ is the item with maximum value we have that $x_{i_{\max}}(S)=1$ and thus
    \begin{align*}
        v(\alglarge(E)) & = v(\algrspgreedy(E,S)) \geq v_{a(i_{\max})}(\algrspgreedy(E,S)) \\
            & = \max\Bigg\{ v(E') \;\Bigg\vert\; E'\subseteq E_{a(i_{\max})} \text{ s.t.\ } v(E') \leq \sum_{i\in S_{a(i_{\max})}}v_ix_i(S) \Bigg\} \geq v_{\max}.
    \end{align*}
    This implies, in addition, that $v(\alglarge(E)) \geq \frac{v_{\max}}{C}v(\OPT(E))$, hence the $\frac{v_{\max}}{C}$-approximation. When $v_{\max}>\frac{1}{2}C$, this directly implies a $\frac{1}{2}$-approximation, which proves that claim for the missing case and concludes the proof.
\end{proof}

We have now all necessary ingredients to prove \Cref{theo:unit-density-rand-UB}.

\begin{proof}[Proof of \Cref{theo:unit-density-rand-UB}]
    We claim the result for \nalgmixud. Strategyproofness of this mechanism follows directly from the strategyproofness of both mechanisms over which it randomizes, which was already established in \Cref{lem:alg-ud-param} and \Cref{lem:sp-large}.
    
    In order to prove the approximation factor, we let $E$ be an arbitrary set of items and denote $i_{\max} = \arg\max\{v_i \mid i\in E\}$ and $v_{\max}=\max\{v_i \mid i\in E\}$. We first observe that whenever $E$ is such that both $v(\algud_{2/3}(E)) \geq \frac{2}{3} v(\OPT(E))$ and $v(\alglarge(E)) \geq \frac{2}{3} v(\OPT(E))$ hold, the result follows directly. We know from \Cref{lem:large-easy-cases} that this occurs, in particular, if $v_{\max} \leq \frac{1}{2}C$ or $v_{\max} \geq \frac{2}{3}C$, so we shall assume this is not the case. In the following, we address the cases with $v(\algud_{2/3}(E)) < \frac{2}{3} v(\OPT(E))$ and $v(\alglarge(E)) < \frac{2}{3} v(\OPT(E))$ separately, showing that the approximation factor still holds in both of them. 
    
    We first suppose that $v(\algud_{2/3}(E)) < \frac{2}{3} v(\OPT(E))$. \Cref{lem:alg-ud-param} implies that $v_{i^*} < \frac{2}{3}v_{\max}$, so in particular we know that $i^*\neq i_{\max}$.
    Since $v_{\max} < \frac{2}{3}C$, part \ref{item:i-star-not-fitting} of \Cref{lem:items-not-fitting} yields $v_{i^*} + v_{\max} > C$, thus
    \begin{equation}
        v_{i^*} > \frac{1}{3}C > \frac{1}{2}v_{\max}.\label{eq:lb-i-star}
    \end{equation}
    In addition, \Cref{lem:alg-ud-param} implies that $v(R)<\frac{2}{3}C$, hence $v(\algud_{2/3}(E)) = v(\algrspgreedy(E,R)) \geq v(R)$, where the inequality comes from \Cref{lem:r(s)-small}. As $v(\algud_{2/3}(E)) < \frac{2}{3} v(\OPT(E))$, we obtain that $v(R)<\frac{2}{3}C$. Plugging this into \eqref{eq:lb-i-star} yields
    \begin{equation}
        v(R\setminus \{i^*\}) < \frac{2}{3}C-\frac{1}{3}C = \frac{1}{3}C.\label{eq:ub-R-minus-i-star}
    \end{equation}
    On the other hand, we recall that from \Cref{lem:alg-ud-param} we know that $v_{i^*} < \frac{2}{3}v_{\max} < \frac{1}{2}C$, so every $i\in E$ with $v_i<v_{i^*}$ satisfies $v_{i^*}+v_i \leq C$. \Cref{lem:i-star-largest} then yields 
    \begin{equation}
        R=\{i \in E \mid v_i \leq v_{i^*}\}. \label{eq:def-R}
    \end{equation}
    Putting \eqref{eq:ub-R-minus-i-star} and \eqref{eq:def-R} together with the fact that $v_{\max}<\frac{2}{3}C$ and parts \ref{item:i-star-not-fitting} and \ref{item:deleted-items-not-fitting} of \Cref{lem:items-not-fitting}, we conclude the following:
    (i) for every $j,k\in E$ with $j\not=k$ and $\min\{v_j,v_k\} \geq v_{i^*}$, it holds that $v_j+v_k>C$; and (ii) for every $j\in E$, it holds that $v_j+v(R\setminus \{i^*\}) \leq C$. Therefore, $\OPT(E)=i_{\max}\cup (R\setminus \{i^*\})$.
    \Cref{lem:r(s)-small} and \Cref{lem:large-easy-cases} imply that $v(\algud_{2/3}(E)) \geq v(R)$ and $v(\alglarge(E)) \geq v_{\max}$, thus
    \begin{equation*}
        \frac{\E{v(\algmixud(E))}}{v(\OPT(E))} \geq \frac{\frac{2}{3}v(R)+\frac{1}{3}v_{\max}}{v_{\max}+ v(R) -v_{i^*}} \geq \frac{\frac{2}{3}v(R)+\frac{1}{3}v_{\max}}{v(R) + \frac{1}{2}v_{\max}} = \frac{2}{3},
    \end{equation*}
    where we used \eqref{eq:lb-i-star} in the last inequality once again.
    
    In the remainder of this proof, we suppose that $v(\alglarge(E)) < \frac{2}{3}v(\OPT(E))$. We further distinguish four cases.
    
    If $v(R)>C$ and $i_{\max}\in R$, using the fact that $v_{\max}>\frac{1}{2}C$, \Cref{lem:r(s)-large} implies that $v(\algud_{2/3}(E)) = v(\algrspgreedy(E,R)) \geq \frac{3}{4}v(\OPT(E))$, hence
    \begin{align*}
        \E{v(\algmixud(E))} = \frac{2}{3}v(\algud_{2/3}(E))+\frac{1}{3}v(\alglarge(E)) &\geq \frac{2}{3}\cdot\frac{3}{4}v(\OPT(E))+\frac{1}{3}\cdot\frac{1}{2}v(\OPT(E)) \\
        &= \frac{2}{3} v(\OPT(E)),
    \end{align*}
    where the inequality follows from the previous observation and \Cref{lem:large-easy-cases}.
    
    Similarly, if $v(R)>C$ and $i_{\max}\not\in R$, part \ref{item:i-star-not-fitting} of \Cref{lem:items-not-fitting} implies that $v_{i^*}>C-v_{\max}$. Therefore, \Cref{lem:r(s)-large} implies that $v(\algud_{2/3}(E)) = v(\algrspgreedy(E,R)) \geq \big(1-\frac{v_{\max}}{2C}\big)v(\OPT(E))$, hence
    \begin{align*}
        \E{v(\algmixud(E))} &= \frac{2}{3}v(\algud_{2/3}(E))+\frac{1}{3}v(\alglarge(E)) \\
        & \geq \frac{2}{3}\Big(1-\frac{v_{\max}}{2C}\Big)v(\OPT(E))+ \frac{v_{\max}}{3C}v(\OPT(E)) = \frac{2}{3}v(\OPT(E)),
    \end{align*}
    where the inequality follows from the previous observation and \Cref{lem:large-easy-cases}.
    
    We now consider the case $v(R) \leq C$, where we know that \mbox{$v(\algud_{2/3}(E)) = v(\algrspgreedy(E,R)) \geq v(R)$} due to \Cref{lem:r(s)-small}.
    
    If $i^* \in \OPT(E)$ we get that $\OPT(E) \subseteq R$, since we know by part~\ref{item:i-star-not-fitting} of \Cref{lem:items-not-fitting} that $v_{i^*} + v_j > C$ for all items $j \in E \setminus R$. Therefore, $v(\algud_{2/3}(E)) = v(\OPT(E))$ and we conclude that
    \begin{align*}
        \E{v(\algmixud(E))} &= \frac{2}{3}v(\algud_{2/3}(E))+\frac{1}{3}v(\alglarge(E)) \\
        &\geq \frac{2}{3}v(\OPT(E))+\frac{1}{3}\cdot \frac{1}{2}v(\OPT(E)) = \frac{5}{6} v(\OPT(E)),
    \end{align*}    
    where the inequality follows from the previous observation and \Cref{lem:large-easy-cases}.
    
    Finally, if $i^* \notin \OPT(E)$, we get $v(R) = v(R \setminus \OPT(E)) + v(R \cap \OPT(E)) \geq v_{i^*} + v(R \cap \OPT(E)) $. Part~\ref{item:deleted-items-not-fitting} of \Cref{lem:items-not-fitting} yields $|\OPT(E)\setminus R| \leq 1$, thus $v(\OPT(E)\setminus R) \leq v_{\max}$. Note that \mbox{$v_{i^*} \neq v_{\max}$} since otherwise, $\OPT(E) = R$ by part~\ref{item:i-star-not-fitting} of \Cref{lem:items-not-fitting} and $v(R)\leq C$, contradicting the hypothesis that $i^*\not\in \OPT(E)$.  Combining the last inequality with the fact that \mbox{$v(\alglarge(E)) < \frac{2}{3}v(\OPT(E))$}, we obtain
    \begin{equation*}
        v(R \cap \OPT(E)) = v(\OPT(E)) - v(\OPT(E) \setminus R) > \frac{3}{2} v(\alglarge(E)) - v_{\max}.
    \end{equation*}
    Since $v(\alglarge(E))\geq v_{\max}$ we get $v(R) > v_{i^*} + \frac{1}{2} v_{\max} $. We conclude that
    \begin{align*}
        \E{v(\algmixud(E))} & = \frac{2}{3}v(\algud_{2/3}(E))+\frac{1}{3}v(\alglarge(E)) \\
        &\geq \frac{2}{3} \Big(v_{i^*} + \frac{1}{2} v_{\max}\Big)+\frac{1}{3} v_{\max} = \frac{2}{3} \Big(v_{i^*} + v_{\max}\Big) > \frac{2}{3} C \geq \frac{2}{3} v(\OPT(E)),
    \end{align*}
    where we used part~\ref{item:i-star-not-fitting} of \Cref{lem:items-not-fitting} for the second to last inequality.
\end{proof}
    
We finish by providing a lower bound on the approximation factor that randomized mechanisms can achieve. It extends the lower bound by \citet{FeigenbaumJ17} for general mechanisms to the unit-density setting. 

\begin{theorem}
    \label{theo:unit-density-rand-LB}
    Let $f$ be a strategyproof and $\alpha$-approximate randomized mechanism on $\calI_{\mathrm{UD}}$. Then, $\alpha \leq 1/(5\phi-7) \approx 0.917$.
\end{theorem}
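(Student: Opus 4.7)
The plan is to adapt Feigenbaum and Jang's randomized lower-bound construction so that every item in the instance satisfies $v_\iota = s_\iota$. Suppose toward contradiction that $f$ is strategyproof and $\alpha$-approximate for some $\alpha > 1/(5\phi - 7)$. Since $f$ is a lottery over deterministic strategyproof mechanisms, it satisfies $\E{v_a(f(E))} \geq \E{v_a(f(E'))}$ for every $a \in [n]$ and every $E' \subseteq E$ with $E \setminus E' \subseteq E_a$; and the approximation guarantee $\E{v(f(E))} \geq \alpha \cdot v(\OPT(E))$ likewise reduces, by linearity of expectation, to a linear inequality over the probabilities with which $f$ returns each feasible subset of $E$.

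Next, I would consider two agents and a small family of instances $E, E^{(1)}, \ldots, E^{(t)}$ related by agent~$1$ hiding one of their items, with item sizes (equal to their values) chosen so that (i)~no two items of agent~$1$ fit together, (ii)~every single item of agent~$1$ fits together with every item of agent~$2$, and (iii)~the optimal values across these instances stand in a ratio governed by the golden ratio, as in the deterministic lower bound of \Cref{theo:unit-density-det-LB}. I would parametrize $f$ on each instance by the probabilities $p_S$ it assigns to each feasible subset $S$. The $\alpha$-approximation constraints (one per instance) together with the strategyproofness-in-expectation constraints (one per pair of instances related by hiding) then yield a linear program over these probability vectors, and I would exhibit an explicit nonnegative combination of the constraints (equivalently, a dual certificate) proving that this LP is infeasible whenever $\alpha > 1/(5\phi-7)$, mirroring the computation in \citet{FeigenbaumJ17}.

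The hard part will be that the unit-density constraint $v_\iota = s_\iota$ removes one degree of freedom compared with the general Feigenbaum--Jang construction, which exploits different value-to-size ratios for different items. I expect that the tight instance will use sizes near $1/\phi$ and $1/\phi^2$ times the capacity, which is the regime where the deterministic bound of \Cref{theo:unit-density-det-LB} becomes tight; the randomized refinement then arises by averaging across the two possible ways for agent~$1$ to hide an item, exactly as in Feigenbaum and Jang's argument, and the main computational step is to verify that this averaging still produces the ratio $1/(5\phi - 7)$ under the additional equality constraints $v_\iota = s_\iota$.
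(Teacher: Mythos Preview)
Your overall framework---two agents, instances related by agent~1 hiding items, and combining the $\alpha$-approximation inequalities with strategyproofness in expectation---is correct and matches the paper. But you have overestimated the machinery needed and guessed the wrong parameters.

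The paper uses only \emph{two} instances and a \emph{single} deviation; no family $E^{(1)},\dots,E^{(t)}$, no LP duality, and no averaging over multiple ways to hide. Concretely: $E_1=\{i,j\}$, $E_2=\{k\}$ with $v_i=\phi$, $v_j=v_k=1$, capacity $C=2$; the only deviation considered is agent~1 hiding $j$. On $E$ the optimum is $\{j,k\}$ with value $2$, and bounding $\P{f(E)=\{j,k\}}$ from below via $\alpha$-approximation yields an \emph{upper} bound on $\E{v_1(f(E))}$. On $E'=\{i,k\}$ the optimum is $\{i\}$ with value $\phi$, and the same reasoning yields a \emph{lower} bound on $\E{v_1(f(E'))}$. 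The single inequality $\E{v_1(f(E))}\ge\E{v_1(f(E'))}$ then forces $\alpha\le 1/(5\phi-7)$ by direct algebra---no dual certificate is needed.

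Two places where your guesses are off. First, the relative sizes in the randomized construction are $\phi/2$ and $1/2$, not $1/\phi$ and $1/\phi^2$; the latter ratios are the ones arising in the deterministic bound of \Cref{theo:unit-density-det-LB}, and they do not carry over here. Second, agent~1's two items are deliberately \emph{asymmetric} ($v_i=\phi$ versus $v_j=1$), so only hiding $j$ is informative; there is no ``averaging across the two possible ways to hide'' in the argument.
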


To prove this bound, we consider $n=2$ agents, sets of items $E_1=\{i,j\},~E'_1=\{i\}$, and $E_2=\{k\}$, values $v_i=\phi,~ v_j=1$, and $v_k=1$, and a capacity $C=2$. The bound then follows from a similar argument to \Cref{theo:unit-density-det-LB}.

\begin{proof}[Proof of \Cref{theo:unit-density-rand-LB}]
Suppose that $\scM$ is a strategyproof and \mbox{$\alpha$-approximate} mechanism for $\alpha> 1/(5\phi-7)$, say $\alpha=1/(5\phi-7-\varepsilon)$ for some $\varepsilon >0$.

We consider $n=2$ agents and sets of items $E_1=\{i,j\},~E'_1=\{i\}$, and $E_2=\{k\}$ with values $v_i=\phi,~ v_j=1$, and $v_k=1$. We fix a capacity  $C=2$ and consider two instances: $E=E_1\cup E_2$ and $E'=E'_1\cup E_2$. 

For the instance $E$, we have that $\OPT(E)=\{j,k\}$ and thus $v(\OPT(E))=2$. Letting $p=\P{\scM(E)=\{j,k\}}$ denote the probability of $\scM$ selecting this optimal set of items and noting that $v(D)\leq \phi$ for any $D\subseteq E$ with $D\not= \{j,k\}$ and $v(D)\leq C$, the fact that $\scM$ is $\frac{1}{5\phi-7-\varepsilon}$-approximate implies that
\begin{equation*}
    \frac{2p+\phi(1-p)}{2} \geq \frac{v(\scM(E))}{v(\OPT(E))} \geq \frac{1}{5\phi-7-\varepsilon} \Rightarrow p \geq \phi^2 \Big( \frac{2}{5\phi-7-\varepsilon} - \phi \Big).
\end{equation*}
Therefore,
\begin{equation}
    \E{v_1(\scM(E))} \leq p+\phi(1-p) \leq \phi \Big( \phi+1-\frac{2}{5\phi-7-\varepsilon}\Big).\label{eq:value-agent-1-E}
\end{equation}

On the other hand, for the instance $E'$, we have $\OPT(E')=\{i\}$ and thus $v(\OPT(E'))=\phi$. Letting $p'=\P{\scM(E')=\{i\}}$ denote the probability of $\scM$ selecting this optimal set of items and noting that $v(D)\leq 1$ for any $D\subseteq E'$ with $D\not= \{i\}$ and $v(D)\leq C$, the fact that $\scM$ is $1/(5\phi-7-\varepsilon)$-approximate implies that
\begin{equation*}
    \frac{\phi p'+1-p'}{\phi} \geq \frac{v(\scM(E'))}{v(\OPT(E')} \geq \frac{1}{5\phi-7-\varepsilon} \Rightarrow p' \geq \phi \Big( \frac{\phi}{5\phi-7-\varepsilon} - 1 \Big).
\end{equation*}
Therefore,
\begin{equation}
    \E{v_1(\scM(E'))} \geq \phi p' \geq \phi^2 \Big( \frac{\phi}{5\phi-7-\varepsilon} - 1 \Big).\label{eq:value-agent-1-E'}
\end{equation}

Since $\scM$ is strategyproof and $E'_1 \subseteq E_1$, it must hold that $\E{v_1(\scM(E))} \geq \E{v_1(\scM(E'))}$. Putting expressions \eqref{eq:value-agent-1-E} and \eqref{eq:value-agent-1-E'} together, this implies that
\begin{equation*}
    \phi \Big( \phi+1-\frac{2}{5\phi-7-\varepsilon}\Big) \geq \phi^2 \Big( \frac{\phi}{5\phi-7-\varepsilon} - 1 \Big) \Rightarrow \varepsilon \leq \frac{5(1+\phi-\phi^2)}{\phi(2\phi+1)} = 0,
\end{equation*}
a contradiction.
\end{proof}

\section{Discussion}

Our work contributes to the growing field of mechanism design without money by narrowing several gaps in the approximation factors achievable by strategyproof mechanisms in the context of knapsack constraints. We improve on previous results for the case of general constraints and introduce the natural setting of unit-density knapsack, where we obtain a deterministic mechanism providing the best-possible guarantee for this class of mechanisms and establish a strict separation between the factors attainable by deterministic and randomized mechanisms.

The fact that our mechanisms need to solve knapsack problems to optimality as subroutines poses an interesting direction for future research: The design of efficiently implementable mechanisms providing our (or better) approximation ratios up to arbitrary precision. In the case of randomized mechanisms, an alternative to \nalgrandspgreedy\ remains strategyproof (in expectation) and $\frac{1}{2}$-approximate and runs in polynomial time: Given items~$E$, return the integral greedy solution with probability~$\frac{1}{2}$ and the fractional item $i$ with probability \smash{$\frac{x_{i}(E)}{2}$}. In order to achieve a~\smash{$\big(\frac{1}{3}-\varepsilon\big)$}-approximation in the deterministic setting, instead of finding the optimum of the individual knapsack problems, we would need a PTAS for the classic knapsack problem with two natural monotonicity properties, requiring that deleting an item from the input or reducing the capacity cannot lead to an increase in the value of the set output by the PTAS.

From a broader perspective, one may wonder if greedy-based approaches like the ones exploited in this paper can provide better approximation factors in other contexts from previous literature, such as generalizations of knapsack and graph-theoretical problems. More generally, one can visualize this family of problems in terms of solving a linear problem with specific types of constraints, where agents control a subset of variables and may not benefit from hiding them from the mechanism designer. From this point of view, it is worth exploring under which structure of the linear program certain approaches provide non-trivial approximation factors.

\printbibliography

\end{document}